\numberwithin{equation}{section}
\theoremstyle{definition}
\newtheorem{definition}{Definition}[section]
\theoremstyle{plain}
\newtheorem{Theorem}[definition]{Theorem}
\newtheorem*{Theorem411}{Theorem \ref{thm:c1hawk1}}
\newtheorem*{TheoremA1}{Theorem \ref{thm:myers}}
\newtheorem*{Theorem211}{Theorem \ref{thm:stability}}
\newtheorem*{Theorem413}{Theorem \ref{thm:c1hawk2}}
\newtheorem*{Theorem55}{Theorem \ref{thm:c1penrose}}
\newtheorem{Proposition}[definition]{Proposition}
\newtheorem{Lemma}[definition]{Lemma}
\newtheorem{Corollary}[definition]{Corollary}
\newtheorem*{Theorem*}{Theorem}
\theoremstyle{remark}
\newtheorem{remark}[definition]{Remark}
\newcommand{\R}{\mathbb R}
\newcommand{\N}{\mathbb N}
\global\long\def\vol#1{\mathrm{{\mathrm{Vol}\left(#1\right)}}}
\global\long\def\res#1#2{\left.#1\right|_{#2}}
\global\long\def\supp{\mathrm{{supp\,}}}
\newcommand{\eps}{\varepsilon}
\newcommand{\Ric}{\mathrm{Ric}}
\newcommand{\comp}{\Subset}
\newcommand{\sse}{\subseteq}
\newcommand{\enumlabelformat}{\roman}
\newcommand{\enumlabelfont}[1]{#1}
\newlength{\thelabelsep}
\setlist{labelsep=\thelabelsep}
\setlist[enumerate]{font=\enumlabelfont,label=(\enumlabelformat*),leftmargin=2.5em}
\setlist[itemize]{leftmargin=2.5em,label=$-$}
\newcounter{inlineenum}
\renewcommand{\theinlineenum}{\enumlabelformat{inlineenum}}
\newcommand\T{\mathcal{T}}
\newcommand\MX{\mathcal{X}}
\newcounter{mnotecount}
\title{Singularity theorems for $C^1$-Lorentzian metrics
	}
\author{Melanie Graf\footnote{\, mgraf2@uw.edu, University of Washington, Department of Mathematics.}
}
\begin{document}

\date{\today}

\maketitle

\begin{abstract}
Continuing recent efforts in extending the classical singularity theorems of General Relativity to low regularity metrics, we give a complete proof of both the Hawking and the Penrose singularity theorem for $C^1$-Lorentzian metrics - a regularity where one still
has existence but not uniqueness for solutions of the geodesic equation. The proofs make use of careful estimates of the curvature of approximating smooth
metrics and certain stability properties of long existence times for causal geodesics. 

On the way we also prove that for globally hyperbolic spacetimes with a $C^1$-metric  causal geodesic completeness is $C^1$-fine stable. This improves a similar older stability result of Beem and Ehrlich where they also used the $C^1$-fine topology to measure closeness but still required smoothness of all metrics.

Lastly, we include a brief appendix where we use some of the same techniques in the Riemannian case to give a proof of the classical Myers Theorem for $C^1$-metrics.  

\vskip 1em

\noindent
\emph{Keywords:} Singularity theorems, low regularity, regularization, Myers's theorem

\medskip
\noindent
\emph{MSC2010:} 83C75, %Space-time singularities, cosmic censorship,
        53C50, %global Lorentz metrics, indefinite metrics
		53C20 %glob Riem geom
\end{abstract}

 \tableofcontents

\section{Introduction}\label{sec:intro}

The classical singularity theorems of General Relativity show that a Lorentzian
manifold with a $C^2$-metric satisfying certain physically reasonable curvature and causality conditions cannot be causal geodesically complete. In particular, if one attempts to ''extend'' such a manifold to make it complete, one cannot accomplish this with a $C^2$-Lorentzian metric without violating at least one of these physically reasonable conditions. It is then natural to ask whether one could extend with a lower regularity Lorentzian metric - a question that was already raised in \cite[Sec.~8.4]{HE} and is of major physical importance because a plethora of physical models involve spacetime metrics which are not $C^2$ (e.g., the Oppenheimer–Snyder
model of a collapsing star \cite{OS}, matched spacetimes \cite{MS}, self-gravitating compressible fluids \cite{BLF}, gravitational shock waves, thin mass shells, \dots).
 In this paper we show that in the specific cases of both the Hawking theorem, \cite{H:67}, and the Penrose theorem, \cite{Pen}, even $C^1$-regularity of the metric is sufficient to guarantee causal geodesic incompleteness.

In recent years, low regularity analytic methods have become increasingly
important within Mathematical General Relativity, allowing one to phrase physically well-founded intuitions in a framework promising for a rigorous resolution. 
 Looking at the particular issue of extending the singularity theorems to lower regularity metrics, one immediately notices that the parts of the analysis involving the use of Jacobi fields to get the existence of conjugate/focal points become very problematic once the regularity of the metric drops below $C^2$. Further, even some standard results from causality theory employed in the smooth proofs have only recently been investigated for metrics of regularity below $C^2$. This was most successful in the case where the metric is still $C^{1,1}$, see \cite{Ming,KSS,KSSV}, but also lower regularities have been treated (\cite{CG,Clemens}). These advances and the development of approximation techniques adapted to the causal structure allowed the rigorous proof of the Hawking and the Penrose singularity theorem for $C^{1,1}$-metrics (see \cite{hawkingc11,penrosec11}) and, finally, also a proof of the more general Hawking–Penrose Theorem in this regularity could be given, \cite{GGKS:18}.

The next major threshold in lowering the regularity of the metric required for the singularity theorems is $C^1$. Compared with $C^{1,1}$-metrics we lose both uniqueness of geodesics and well-definedness of the curvature in $L^\infty_\mathrm{loc}$ but we retain the following key properties:
\begin{itemize}
\item[(i)] The Levi-Civita connection is continuous. This is, therefore, the lowest regularity
where the classical Peano existence theorem gives existence of solutions
of the geodesic equations.
\item[(ii)] The curvature of the metric is well-defined as a distribution. This allows one to define Ricci curvature bounds. In particular the strong energy condition can be extended to the distributional case in a relatively straightforward fashion (see Definition \ref{def:sec}). As discussed in detail in section \ref{sec:penrose}, the definition of the null energy condition is less clear %\footnote{The main issue here is that $\Ric(\mathcal{X},\mathcal{X})\geq 0$ for all ($C^1$) null vector fields $\mathcal{X}$ doesn't imply any bounds for nearby vector fields that are not null, which we will need for the proof of the $C^1$ Penrose singularity theorem.} 
 but a sensible definition can still be made (see Definition \ref{def:genNEC}).% ADD NEC HERE
\end{itemize}

Note that point (i) allows us to extend the classical definitions of geodesics and of causal geodesic completeness to $C^1$ metrics: We define geodesics as  solutions of the geodesic equation (see Definition \ref{def:geod}) 
and call a spacetime causal geodesically complete if {\em every inextendible causal geodesic} is complete 
(see Definition~\ref{def:complete}). Due to non-uniqueness this notion of causal geodesic completeness is not the only possible generalization -- one could instead merely demand that there exists a complete causal geodesic for every causal initial data -- but our definition is well in line with physical expectations and still weaker than using, for example, b-completeness\footnote{Defined as all inextendible $C^1$ curves being complete when parametrized with respect to a generalized affine parameter this, at least for all causal curves, has also be argued as being a necessary condition  for  a spacetime to be considered singularity free (cf.~\cite[p.~258-260]{HE}).}.  While most of our results really do require all inextendible timelike or null geodesics to be complete,  Theorem \ref{thm:c1hawk2} also works with the weaker generalization mentioned above (in fact, it would only need the existence of any complete timelike curve, see Remark \ref{rem:complete}). \vspace{10pt}

After going from $C^2$ to $C^{1,1}$ this step down to $C^1$ further reduces the gap between the regularity in which singularity theorems are available and the lower, but physically very relevant regularities obtained from
even classical existence results for the Cauchy problem ($H^s_\mathrm{loc}$ with $s>5/2$, cf., e.g.,\cite{Rendall})
or used in the currently favored formulation of the strong cosmic censorship conjecture (locally square integrable Christoffel symbols, see \cite{christ}, regarded as the lowest regularity where weak solutions of the Einstein equations are well defined).\footnote{In general there seems to be a bit of a mismatch between common regularity classes used when one takes a more causality theoretic ($C^2$, $C^{1,1}$, $C^1$, $C^{0,\alpha}$, \dots ) or a more PDE based approach (Sobolev spaces). Of course these regularities can be compared using Sobolev embedding theorems, e.g., we have $W^{2,p}_\mathrm{loc} \hookrightarrow C^1$ if $p$ is strictly greater than the spacetime dimension.}

 While it would of course be most satisfactory from a physical point of view to have versions of the singularity theorems for these regularities, multiple new mathematical 
 problems arise when attempting to go below $C^1$, one of them being the definition of incompleteness itself.  For metrics that are locally Lipschitz, a regularity where Hawking and Ellis \cite[p.~286-287]{HE} still believe the singularity theorems to hold but no longer give even a heuristic argument, the geodesic equation becomes an ODE with a locally bounded right hand side. So, to use it to define causal geodesics and then causal geodesic completeness one would have to choose one of many different solution concepts available for ODEs with right hand side in $L^\infty_\mathrm{loc}$ (one of the possibilities is using Filippov solutions, \cite{Filippov}, see also \cite{S:14,SS:18}, where this concept is used to define geodesics). Another option would be to try to avoid the geodesic equation entirely and define causal geodesics as locally length maximizing causal curves (see Remark \ref{rem:geoddef}) but this leaves the problem of defining an affine or generalized affine parameter for null geodesics or null curves in general (see also Remark \ref{rem:geodcompletedef}). % the geodesic equation entirely 
And one problem on the more technical side is that the Friedrichs lemma type estimates in Lemma \ref{lem:Fthesecond} already appear to be quite sharp, so they might fail even when lowering regularities just a little. And once one goes below Lipschitz even pure causality theory departs significantly from classical
theory: Example 1.11 in \cite{CG} shows that the push-up Lemma 
does
not necessarily hold in $C^{0,\alpha }$ spacetimes, $\alpha \in (0,1)$, and that lightcones no longer have to be hypersurfaces (i.e., ''causal bubbling'' may occur). Despite this, there exist a handful of recent inextendibility results where even a very low level of regularity cannot be maintained (e.g.,~\cite{Sbierski, MS:19}). However, these either concern concrete situations with a large amount of symmetry or assume timelike geodesic completeness. 
 
 Lastly we want to mention that a first synthetic singularity theorem for a specific family of Lorentzian length spaces was recently shown in 
 \cite{AGKS:19}. There the vast improvement in the regularity of the metric/spacetime (in fact one does not need a spacetime in the usual sense at all) comes at the expense of requiring a large amount of general structure/symmetry.  \\

The main results of this paper are the following three theorems from sections \ref{sec:hawk} and \ref{sec:penrose}:

\begin{Theorem411}[$C^1$-Hawking, version 1] Let $(M,g)$ be a time-oriented Lorentzian manifold with $g\in C^1$. If $(M,g)$ has non-negative timelike Ricci curvature (in the distributional sense of Def.~\ref{def:sec}) and there exists a smooth compact spacelike hypersurface $\Sigma$ with $g(H,\mathbf{n})>0$ (where $\mathbf{n}$ denotes the future pointing unit normal to $\Sigma$ and $H$ is the mean curvature vector field) 
	on $\Sigma$, then M is not timelike geodesically complete (i.e., there exists at least one incomplete timelike geodesic). 
\end{Theorem411}

\begin{Theorem413}[$C^1$-Hawking, version 2] Let $(M,g)$ be a time-oriented Lorentzian manifold with $g\in C^1$. If $(M,g)$ has non-negative timelike Ricci curvature (in the distributional sense of Def.~\ref{def:sec}) and there exists a smooth spacelike Cauchy hypersurface $\Sigma$ with $(n-1)g(H,\mathbf{n})>\beta >0$ on $\Sigma$, then
$\tau_\Sigma(p)\leq \frac{n-1}{\beta}$ for all $p\in I^+(\Sigma)$.
\end{Theorem413}

\begin{Theorem55}[$C^1$-Penrose] Let $(M,g)$ be a time-oriented globally hyperbolic Lorentzian manifold with $g\in C^1$ and non-compact Cauchy hypersurfaces. If $(M,g)$ satisfies the distributional null energy condition (in the sense of Def.~\ref{def:genNEC}) and contains a closed smooth achronal spacelike $(n-2)$-dimensional submanifold $S$ with past pointing timelike mean curvature vector field $H$, 
then it is null geodesically incomplete. 
\end{Theorem55}

With these results $C^1$ becomes the current lowest regularity for which straight-up generalizations of classical singularity theorems exist. Whether also the more general Hawking-Penrose theorem can be generalized to this regularity is still open.

On the way, in section \ref{sec:geod}, we prove some basic facts about geodesics for $C^1$-metrics, including the following stability result for causal geodesic completeness:

\begin{Theorem211}
Let $(M,g)$ be globally hyperbolic with a $C^1$-metric $g$ which is causal geodesically complete. Then there exists a $C^1$-fine neighborhood $U$ in the space of $C^1$ pseudo-Riemannian metrics on $M$ around $g$ such that all $g'\in U$ are globally hyperbolic Lorentzian metrics that are causal geodesically complete.
\end{Theorem211}

This improves a similar older stability result of Beem and Ehrlich where they also used the $C^1$-fine topology stability to measure closeness but still required smoothness of all metrics.

In section \ref{sec:curv} we review distributions (in the sense of generalized functions) on manifolds and define distributional curvature and curvature bounds for $C^1$-metrics.
 
Lastly, we include a brief appendix, appendix \ref{sec:Riem}, where we use some of the same techniques in the Riemannian case to give a proof of the classical Myers Theorem for $C^1$-metrics: 
 
\begin{TheoremA1}
	Let $(M,h)$ be a Riemannian manifold with a $C^1$-metric $h$. Assume that $h$ is geodesically complete and that there exists a constant $\lambda>0$ such that $\Ric (\MX,\MX)-(n-1)\lambda\, h(\MX,\MX)$ is a non-negative distribution for all smooth vector fields $\MX$ on $M$. Then $\mathrm{diam}(M)\leq \frac{\pi}{\sqrt{\lambda}}$.
\end{TheoremA1}

\subsection*{Notation and Conventions}
All manifolds are assumed to be smooth, Hausdorff, second 
countable, $n$-dimensional (with $n\geq 2$ except in section \ref{sec:penrose}, where $n\geq 3$), and connected. On such manifolds  $M$ we will 
consider Lorentzian (or, as in appendix \ref{sec:Riem}, Riemannian) metrics $g$ that are at least $C^1$. We always assume that $(M,g)$ is time orientable and that its time 
orientation is fixed by a smooth vector field.

We say a curve $\gamma : I\to M$ 
from some interval $I\sse \R$ to $M$ is timelike (causal, null, future or past 
directed) if it is locally Lipschitz and $\dot{\gamma}(t)$, 
which exists almost everywhere by Rademacher's theorem, is timelike (causal, 
null, future or past directed) almost everywhere. Following standard notation, 
for $p,q\in M$ we write $p\ll q$ if there exists a future directed timelike 
curve from $p$ to $q$ (and $p\leq q$ if there exists a future directed causal 
curve from $p$ to $q$ or $p=q$) and set $I^+(A):=\{q\in M:\, p\ll q\ 
\mathrm{for\,some}\,p\in A\}$ and $J^+(A):=\{q\in M:\, p\leq q\ 
\mathrm{for\,some}\,p\in A\}$. We call a spacetime $(M,g)$ 
globally hyperbolic if it is non-total imprisoning (i.e., no future or past inextendible causal curves may be contained in a compact set) and 
$J(p,q):=J^+(p)\cap J^-(q)$ is compact for all $p,q\in M$ (cf.~\cite[Def.~3.1]{Clemens}).

As usual, for $p,q\in M$ the future
time separation from $p$ to $q$ is defined by 
\[
\tau(p,q):=\sup(\left\{ L(\gamma):\gamma\;\text{is a future directed causal curve form }p\text{ to }q\right\} \cup\{0\}),\label{eq:point time sep}
\]
where $L(\gamma)$ denotes the Lorentzian arc-length of $\gamma$, i.e., for a causal curve $\gamma:I\to M$ one has $L(\gamma):=\int_I\sqrt{|g(\dot{\gamma}(t),\dot{\gamma}(t))|}dt$.  Similarly one defines the future time separation to a subset $\Sigma$
by
\[
\tau_{\Sigma}(p):=\sup_{q\in\Sigma}\tau(q,p).\label{eq:subset time sep}
\]

We interchangeably use $\gamma : I\to M$ and $\dot{\gamma}:I\to TM$ to denote geodesics: The first is the standard notation and the second will be useful for proving the convergence and long-time existence results in section \ref{sec:geod}  since it turns the geodesic equation into a first order ODE.

For two Lorentzian metrics $g_1$ and $g_2$ we define
\[g_1\prec g_2 :\iff g_1(X,X)\leq 0 \implies g_2(X,X)<0. \]
We say that $g_1$ has narrower lightcones than $g_2$ (or alternatively that $g_2$ has wider lightcones than $g_1$).

Next we briefly discuss two low-regularity specific concerns, namely the definition of global hyperbolicity and the regularity assumptions on causal curves.

\begin{remark}[Regarding the definition of global hyperbolicity]\label{rem:globhypdef} Following \cite[Def.~3.1]{Clemens} we define globally hyperbolic using non-total imprisoning (i.e., no future or past inextendible causal curves may be contained in a compact set) and
compactness of causal diamonds $J(p,q)=J^+(p)\cap J^-(q)$. It is shown in \cite[Proposition~2.20]{Ming:19} that for proper cone structures (and thus in particular for continuous metrics) this definition is equivalent to demanding causality and compactness of causal diamonds. Further, \cite{HM:19} shows that for spacetimes of dimension greater than two with smooth or at least $C^{1,1}$-metrics one may even replace causality with the mere assumption of the spacetime being either non-totally vicious or non-compact.

These definitions are equivalent to the existence of a Cauchy hypersurface (i.e., a set that is met
exactly once by every inextendible causal curve), see \cite[Theorem~5.7 and 5.9]{Clemens}. From this it is clear that if $g$ is globally hyperbolic, then any $g'\prec g$ (remember that this was defined as $g'(X,X)\leq 0\implies g(X,X)<0$) must be globally hyperbolic too.
\end{remark}

\begin{remark}[Regarding the regularity of causal curves]\label{rem:regofcurves}
Note that in line with, e.g.,~\cite{CG, GGKS:18, Clemens, GKSS:19} we only require causal and timelike
curves to be Lipschitz, whereas classical accounts use smooth or piecewise smooth or piecewise $C^1$ 
curves instead (see, e.g., \cite{HE, ON83}). Another convention that is sometimes used is requiring timelike curves to be piecewise $C^1$ while causal curves only have to be Lipschitz (e.g.,~\cite{Ming:19}). For a general discussion of the different conventions and their relations see also \cite{GKSS:19}, Section 2.

 In our case it does not matter whether one uses smooth or Lipschitz timelike curves for the definition of the relation $\ll$ because $g\in C^1$ and hence in particular causally plain (cf.~\cite{CG}, \cite[Theorem 2.15]{GKSS:19}). For the relation $\leq$ it will follow from Proposition~\ref{prop:existmaxgeods} that $J^+(p)\equiv J^+_{\mathrm{Lip}}(p)=J^+_{C^1}(p)$ if $(M,g)$ is globally hyperbolic (when using Lipschitz causal curves). Proposition~\ref{prop:existmaxgeods} further implies that for globally hyperbolic spacetimes also the definition of the time separation function does not depend on whether one uses Lipschitz or $C^1$ causal curves.
\end{remark}

We use $\pi :TM \to M$ to denote the canonical projection and
 we also fix a complete smooth Riemannian background metric on $M$, denoted by $h$. Note that nothing we are going to do will depend on the choice of $h$, we nevertheless fix $h$ for simplicity.
  By a slight abuse of notation we also write $h$ for the complete Riemannian metrics on $TM$ and $T^*M$ induced by $h$. For any continuous $(r,s)$-tensor field $\T$ on $M$ and any subset $A\sse M$ we define
 \begin{align*}||\T||_{\infty,A}:=&\sup \{|\T|_x(\omega_1,\dots,\omega_r,X_1,\dots,X_s)|:x\in A,
\omega_i\in T_x^*M, ||\omega||_h=1,\\
&X_j\in T_xM,||X_j||_h=1\}.
 \end{align*}
If $A\sse M$ is compact, then a net $\{\T_\eps\}$ of continuous $(r,s)$-tensor fields converges to $\T$ w.r.t.~$||.||_{\infty,A}$ if and only if $(\T_\eps^\alpha)^{i_1,\dots,i_r}_{j_1,\dots,j_s} \to (\T^\alpha)^{i_1,\dots,i_r}_{j_1,\dots,j_s} $ uniformly on $U_\alpha \cap A$ for any chart $(\psi_\alpha,U_\alpha)$. If $\T _\eps \to \T$ w.r.t.~$||.||_{\infty,A}$ for any compact $A\sse M$ we say that $\T_\eps \to \T$ locally uniformly (or \emph{in $C^0_\mathrm{loc}$}). 

If $\T,\T_\eps$ are $C^k$, we say $\T_\eps \to \T$ in $C^k_\mathrm{loc}$ if both $\T_\eps \to \T$ and $(^h\nabla)^i \,\T_\eps \to \,(^h\nabla)^i \,\T$ in $C^0_\mathrm{loc}$ for all $1\leq i\leq k$ where $^h\nabla $ denotes the Levi-Civita connection of the background metric $h$. Note that convergence of $^h\nabla \T_\eps \to\, ^h\nabla \T$ in $C^0_\mathrm{loc}$ is equivalent to locally uniform convergence of all partial derivatives $\partial_{l}(\T_\eps^\alpha)^{i_1,\dots,i_r}_{j_1,\dots,j_s} \to \partial_{l}(\T^\alpha)^{i_1,\dots,i_r}_{j_1,\dots,j_s} $ for any chart $(\psi_\alpha,U_\alpha)$. Further, for $C^1$-metrics the Levi-Civita connection $\nabla \equiv \,^g\nabla$ is well defined (e.g.~via the Koszul formula) and (as long as $\T, \T_\eps$ are at least $C^1$) one has that $\nabla \T, \nabla \T_\eps$ are a continuous $(r,s+1)$-tensor fields and $^h\nabla \T_\eps \to\, ^h\nabla \T$ in $C^0_\mathrm{loc}$ if and only if $\nabla \T_\eps \to \nabla \T$ in $C^0_\mathrm{loc}$. 

\section{Geodesics for $C^1$-metrics}\label{sec:geod}

Since $g\in C^1$ the Christoffel symbols are continuous and we can make the following definition:

\begin{definition}[Geodesic]\label{def:geod} Let $(M,g)$ be a spacetime with a $C^1$-metric $g$. A $C^2$ curve $\gamma :I\to M$ is a \emph{geodesic} if it satisfies $\nabla_{\dot{\gamma}}\dot{\gamma}=0$, i.e., in coordinates $\ddot{\gamma}^k=-\Gamma^k_{ij} \circ \gamma \; \dot{\gamma}^i\dot{\gamma}^j$. It is called a \emph{pre-geodesic} if there exists a continuous function $\lambda :I\to \R$ such that $\nabla_{\dot{\gamma}}\dot{\gamma}=\lambda \dot{\gamma}$.

A geodesic $\gamma :I\to M$ is called \emph{extendible} if there exists $J\supsetneq I$ and a geodesic $\tilde{\gamma} :J\to M$ such that $\tilde{\gamma}|_{I}=\gamma $. It is called \emph{inextendible} or \emph{maximal}\footnote{Not to be confused with the separate notion of being (length-)maximizing.} if it is not extendible.
\end{definition}

\begin{remark}[Geodesics versus locally length maximizing causal curves]\label{rem:geoddef} Up to parametrization there are two different equivalent descriptions of causal  geodesics for smooth (and even $C^{1,1}$, cf.~\cite[Theorem~6]{Ming}) metrics: Either as above as causal curves solving the geodesic equation or as locally length maximizing causal curves. While in this paper ''geodesic'' will exclusively refer to curves as in Definition~\ref{def:geod} the second description is also commonly used to define causal geodesics in low-regularity Lorentzian geometry. This has the advantage of being a well-defined concept even in very low regularities but ignores the issue of parametrization. In the timelike case the correct parametrization can easily be determined: For smooth metrics a locally length maximizing timelike curve parametrized with respect to Lorentzian arc-length is $C^2$ and solves the geodesic equation. It is unknown whether this continues to be true if the metric is only $C^1$ and even whether locally maximizing causal curves posses a parametrization improving their regularity (existence results generally show only the existence of {\em Lipschitz continuous} causal maximizers, cf.~e.g., \cite[Proposition 6.4]{Clemens}). The converse, i.e., any timelike solution of the geodesic equation being locally length maximizing, fails already for metrics in $C^{1,\alpha}$ for any $\alpha <1$ (see \cite[Ex.~2.4]{SS:18}). 
\end{remark}

Note that for any initial data $\dot{\gamma}(0) \in TM $ there exist geodesics, but one does in general not have uniqueness. This is immediate from the Christoffel symbols being continuous but not necessarily Lipschitz and explicit examples of non-uniqueness can be found e.g. in \cite[Ex.~2.4]{SS:18} (cf.~also \cite[Sec.~5]{HW:51} for the Riemannian case). Despite possible issues with non-uniqueness lots of the usual basic properties remain true: A geodesic is inextendible if and only if $\dot{\gamma}$ leaves every compact subset of $TM$, any geodesic has fixed causal character and a causal geodesic $\gamma :(a,b)\to M$ is extendible as a geodesic if and only if it can be extended as a continuous curve. The first of these facts is basic continuous ODE theory (note that boundedness of $\dot{\gamma}$ implies that the end of $\gamma $ lies in single chart, so this is just \cite[Thm 3.1]{Hartman}), the other two will be shown using the next lemma.

\begin{Lemma}\label{lem:blah}
	Let $(M,g)$ be a spacetime with a locally Lipschitz metric. If $g$ is $C^1$ on some open subset $O\subset M$ and $\gamma :[0,1)\to O$ is a causal geodesic that is continuously extendible to $p:=\gamma(1)\in \partial O$ then $\dot{\gamma}$ remains bounded on $[0,1)$. Further, $\dot{\gamma}(t)$ cannot converge to zero as $t\to 1$.
\end{Lemma}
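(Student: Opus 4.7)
The plan is to work in a coordinate chart $(U,\varphi)$ around $p$ and pick a compact neighborhood $K\Subset U$ of $p$ so that $\gamma([t_0,1))\subseteq K$ for some $t_0<1$. Since $g$ is locally Lipschitz, the Christoffel symbols $\Gamma^k_{ij}$ are bounded pointwise by a constant $C$ on $K$ (and continuous on $K\cap O$ by the $C^1$ hypothesis). Writing $|\cdot|$ for the Euclidean norm in the chart, which is equivalent to $|\cdot|_h$ on $K$, the geodesic equation gives $|\ddot\gamma(t)|\leq C|\dot\gamma(t)|^2$ for $t\in[t_0,1)$, which is equivalent to
\[\left|\tfrac{d}{dt}\tfrac{1}{|\dot\gamma(t)|}\right|\leq C\]
wherever $\dot\gamma\neq 0$ (always for a causal curve).

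For the first part, suppose $|\dot\gamma|_h$ is unbounded on $[0,1)$. Since $\dot\gamma$ is continuous (hence bounded on each $[0,1-\eta]$), any blow-up must happen as $t\to 1$, so there is a sequence $t_n\to 1$ with $|\dot\gamma(t_n)|\to\infty$. Integrating the inequality above from $\tau$ to $t_n$ and passing to the limit yields $|\dot\gamma(\tau)|\geq 1/(C(1-\tau))$ for $\tau$ close to $1$. Now fix a small $\eps>0$ (to be chosen) and set $\delta(t):=\eps/|\dot\gamma(t)|$; then $\delta(t)<1-t$ for $\eps<1/C$. The differential inequality also bounds $|\dot\gamma|$ by $2|\dot\gamma(t)|$ on $[t,t+\delta(t)]$ for $\eps$ small enough, so a coordinate Taylor expansion
\[\gamma(t+\delta(t))-\gamma(t)-\delta(t)\dot\gamma(t) = \int_t^{t+\delta(t)}(t+\delta(t)-s)\ddot\gamma(s)\,ds\]
combined with $|\ddot\gamma|\leq 4C|\dot\gamma(t)|^2$ on this interval gives $|\gamma(t+\delta(t))-\gamma(t)|\geq \eps - 2C\eps^2 \geq \eps/2$ once $\eps\leq 1/(4C)$. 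But $\delta(t)\to 0$ as $t\to 1$, so by continuous extendibility both $\gamma(t)$ and $\gamma(t+\delta(t))$ converge to $p$, contradicting this uniform positive lower bound.

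For the second part, let $T$ be a smooth future-directed timelike vector field from the time orientation (normalized to $-g(T,T)=1$ on $K$) and set $y(t):=-g_{\gamma(t)}(T,\dot\gamma(t))>0$. The reverse Cauchy--Schwarz inequality together with a direct comparison using the auxiliary Riemannian metric $g+2\,g(T,\cdot)\otimes g(T,\cdot)$ shows that $y$ and $|\dot\gamma|_h$ are equivalent on $K$ for future causal vectors. Since $\nabla g\equiv 0$ for $C^1$ metrics (via the Koszul formula), the product rule gives $\dot y = -g(\nabla_{\dot\gamma}T,\dot\gamma)$, which is bilinear in $\dot\gamma$ with smooth bounded coefficient $\nabla T$. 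Hence $|\dot y|\leq C'|\dot\gamma|_h^2 \leq C''y^2$, and integrating $\dot y\geq -C''y^2$ yields $1/y(t)\leq 1/y(t_0)+C''$, so $y$ (and hence $|\dot\gamma|_h$) remains bounded below by a positive constant on $[t_0,1)$.

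The main technical obstacle is the first part: the estimate $|\ddot\gamma|\leq C|\dot\gamma|^2$ on its own admits solutions that blow up in finite time, so the continuous extendibility at $t=1$ must be used nontrivially. The Taylor-expansion trick above converts a hypothetical velocity blow-up into a quantitative lower bound on the coordinate distance $\gamma$ travels on arbitrarily short time intervals, which is precisely what continuous extendibility forbids.
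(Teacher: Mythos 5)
Your proof is correct but follows a genuinely different route from the paper. The paper exploits the causal cone condition to reduce to a single scalar ODE: writing $f^i=\dot\gamma^i/\dot\gamma^0$ (bounded since $\gamma$ is causal), the geodesic equation for the time component becomes $\ddot\gamma^0=f(t)(\dot\gamma^0)^2$ with $f$ continuous and bounded, which integrates explicitly to $\dot\gamma^0(t)=\bigl(c_0-\int_0^tf\bigr)^{-1}$. From this closed form both assertions follow at once: $\dot\gamma^0$ cannot vanish because the denominator is continuous, and it cannot blow up because that would force $\dot\gamma^0(t)\geq 1/(C|1-t|)$, contradicting the integrability of $\dot\gamma^0$ that is guaranteed by $\gamma^0(1)$ existing and $\dot\gamma^0$ having fixed sign. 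You instead work with the vector-valued bound $|\ddot\gamma|\leq C|\dot\gamma|^2$ and, for boundedness, replace the sign/integrability trick by a Taylor-remainder estimate showing that a velocity blow-up would produce uniformly definite coordinate jumps on shrinking time intervals; for the lower bound you introduce the Riccati-type quantity $y=-g(T,\dot\gamma)$. Your version of part one is more general (it never uses causality, so it covers spacelike geodesics too); the paper's scalar reduction is more economical and delivers both conclusions from one formula. It is also worth noting that, having derived $|\dot\gamma(\tau)|\geq 1/(C(1-\tau))$, you could finish part one exactly as the paper does: for a future-directed causal curve the cone condition gives $\dot\gamma^0\geq c\,|\dot\gamma|$, so $\dot\gamma^0$ is non-integrable while $\gamma^0$ converges. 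The Taylor trick is only strictly necessary if one wishes to avoid the causal structure at that step.

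There is one small ordering issue you should fix. In part one you assert parenthetically that $\dot\gamma\neq 0$ always holds for a causal curve, and you then integrate $\tfrac{d}{dt}\tfrac{1}{|\dot\gamma|}$ on an interval. For a timelike geodesic this is immediate from conservation of $g(\dot\gamma,\dot\gamma)$. For a null geodesic, $g(\dot\gamma,\dot\gamma)\equiv 0$ does not by itself exclude $\dot\gamma(t^*)=0$ at some parameter value, and in fact ruling that out is exactly the second assertion of the Lemma (and is what the subsequent Corollary in the paper relies on). Your part two, being logically independent of part one, does establish $\dot\gamma\neq 0$ on $[t_0,1)$ via the lower bound on $y$. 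So the argument is sound, but you should present part two first (or at least cite it) to justify the division by $|\dot\gamma|$ in part one.
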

\begin{proof}
	W.l.o.g.~$\gamma $ is future directed and contained in a single chart around $p$ and in this chart $-dt^2+4 \sum_i dx_i^2\prec g \prec -dt^2+\frac{1}{4} \sum_i dx_i^2$. Then since $\gamma $ is causal, we estimate $|\dot{\gamma}^i|\leq 2 |\dot{\gamma}^0|$ for $n-1\geq i\geq 1$ on $[0,1)$, i.e., the function $t\mapsto f^i(t):=\frac{\dot{\gamma}^i(t)}{\dot{\gamma}^0(t)}$ is bounded on $[0,1)$. By the geodesic equation
	\[\ddot{\gamma}^0(t)=\big[-\Gamma^0_{00}(\gamma(t)) -2 \sum_{i=1}^{n-1} f^i(t) \Gamma^0_{0i}(\gamma(t)) -\sum_{i,j=1}^{n-1} f^i(t) f^j(t) \Gamma^0_{ji}(\gamma(t))\big] (\dot{\gamma}^0(t))^2=:f(t) (\dot{\gamma}^0(t))^2\]
	for $t\in[0,1)$ and $f$ is continuous and bounded on $[0,1)$. Integrating this gives $\dot{\gamma}^0(t)=\frac{1}{c_0-\int_0^tf(s) ds}$, where $c_0=\frac{1}{\dot{\gamma}^0(0)}$. This cannot converge to zero and diverges if and only if $\int_0^1f(s) ds=c_0$. But then $|c_0-\int_0^tf(s) ds|\leq C |t-1|$ because the function $t\mapsto c_0-\int_0^tf(s) ds$ is Lipschitz on $[0,1]$. So $\dot{\gamma}^0(t)=|\dot{\gamma}^0(t)|\geq \frac{1}{C|t-1|}$, contradicting integrability of $\dot{\gamma}^0$ on $[0,1)$.
\end{proof} 

So the velocity $\dot{\gamma}$ of any causal geodesic $\gamma$ that is continuously extendible will remain in a compact set of $TM$, hence $\gamma $ will be extendible as a geodesic. 

\begin{Corollary}Geodesics have fixed causal character. \end{Corollary}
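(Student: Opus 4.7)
The plan is to show that the scalar function $\phi(t) := g_{\gamma(t)}(\dot\gamma(t),\dot\gamma(t))$ is constant along any geodesic $\gamma : I \to M$. Once this is known, the sign of $\phi$ (negative, zero, or positive) is preserved on all of $I$, which is exactly the statement that $\gamma$ has fixed causal character.

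The essential observation is that even in the $C^1$ setting the Levi-Civita connection produced by the Koszul formula is metric-compatible in the pointwise sense that, in every chart,
$$\partial_i g_{jk} = g_{lk}\,\Gamma^l_{ij} + g_{jl}\,\Gamma^l_{ik},$$
with both sides continuous on $M$ whenever $g \in C^1$. This is precisely the identity from which the $\Gamma^k_{ij}$ are constructed, so I would simply record it as a preliminary remark; it is already implicit in the discussion of $\nabla$ at the end of the Notation section. With this in hand, the classical chart computation carries over: differentiating $\phi$ along $\gamma$ (which is $C^2$ by Definition \ref{def:geod}) gives
$$\dot\phi(t) = \partial_i g_{jk}(\gamma)\,\dot\gamma^i\dot\gamma^j\dot\gamma^k + 2\,g_{jk}(\gamma)\,\ddot\gamma^j\dot\gamma^k,$$
and substituting $\ddot\gamma^j = -\Gamma^j_{lm}(\gamma)\,\dot\gamma^l\dot\gamma^m$ together with the symmetry $\Gamma^l_{ij} = \Gamma^l_{ji}$ and the compatibility identity above collapses the right hand side to zero. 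Invariantly, this is just $\dot\phi = 2\,g(\nabla_{\dot\gamma}\dot\gamma, \dot\gamma) = 0$, so $\phi$ is constant.

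I do not foresee any real obstacle, since every step is first-order in $g$ and uses only continuity of the Christoffel symbols. The one thing worth flagging is that metric compatibility is being used pointwise along the curve rather than distributionally; that is precisely why the standard smooth argument survives at $C^1$ but would have to be reconsidered at lower regularity, where $\nabla g$ is no longer a genuine continuous tensor. Consequently I expect the proof itself to take only a few lines, with its single substantive input being the $C^1$-level justification of the compatibility identity.
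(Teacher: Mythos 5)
Your argument that $\phi(t)=g_{\gamma(t)}(\dot\gamma(t),\dot\gamma(t))$ is constant along any $C^2$ solution of the geodesic equation is correct, and it is exactly the first sentence of the paper's proof. However, you then conclude ``the sign of $\phi$ is preserved, which is exactly the statement that $\gamma$ has fixed causal character,'' and this final step has a gap in the case $\phi\equiv 0$. Recall that by the standard convention (O'Neill, which the paper follows) the zero vector is \emph{spacelike}, while a nonzero vector with $g(v,v)=0$ is null. So a geodesic with $\phi\equiv 0$ could a priori switch causal character by having $\dot\gamma(t_0)=0$ at some isolated $t_0$ while being null elsewhere --- and constancy of $\phi$ alone does not exclude this.

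For smooth or Lipschitz metrics this scenario is trivially impossible: $\dot\gamma(t_0)=0$ forces the constant solution by uniqueness of ODE solutions. But the whole point of the $C^1$ setting is that uniqueness fails, so one cannot appeal to it. The paper closes this gap by invoking Lemma~\ref{lem:blah}: for a geodesic $\gamma:[0,a]\to M$ that is causal on $[0,a)$, one has $\lim_{t\to a}\dot\gamma(t)\neq 0$, which (combined with continuity of $\dot\gamma$) rules out the vanishing-velocity degeneration. You should add this second step --- the reduction via constancy of $\phi$ to the null-to-zero scenario, and the citation of Lemma~\ref{lem:blah} to dismiss it --- to make the proof complete.
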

\begin{proof}
We first note that, as for smooth metrics, $g(\dot{\gamma},\dot{\gamma})$ is constant along geodesics. This leaves only one possible scenario in which the causal character of $\gamma $ changes, namely that $\dot{\gamma}$ might switch from being null (and non-zero) to being zero (which is spacelike). That this cannot happen follows from $\lim_{t\to a} \dot{\gamma}(t)\neq 0$ for any geodesic $\gamma :[0,a]\to M$ which is causal on $[0,a)$ (cf.~Lemma \ref{lem:blah}).
\end{proof}

Finally we give our definition of causal/timelike/null geodesic completeness.

\begin{definition}[Geodesic completeness]\label{def:complete} A spacetime $(M,g)$ with a $C^1$-metric $g$ is called timelike (respectively null) geodesically complete if \emph{all} inextendible timelike (respectively null) solutions of the geodesic equation are defined on $\R$. 
\end{definition}

\begin{remark}[Causal (in-)completeness for metrics below $C^1$]\label{rem:geodcompletedef}

By defining timelike geodesics as locally maximizing timelike curves and an inextendible timelike curve to be complete if it has infinite length, we may extend a definition of timelike geodesic completeness to regularities below $C^1$.\footnote{Note, however, that for $C^1$ metrics this version of timelike geodesic completeness doesn't imply the one given in Definition \ref{def:complete} and it is unknown whether the converse holds (cf.~Remark~\ref{rem:geoddef}).} This is a very robust definition and still works well for continuous metrics (and even Lorentzian length spaces).

However, the situation for null curves/null geodesics is a lot more problematic and it is not yet clear what a good definition of null geodesic completeness would look like even for Lipschitz metrics: While we can still define null geodesics as locally maximizing null curves, their length will always be zero, leaving us with the problem of finding the right parametrization. And unfortunately (without further restrictions on the differentiability of the metric, like, e.g.,~$g$ being smooth away from a spacelike hypersurface) the standard definition of a generalized affine parameter (as is used classically to define b-completeness, \cite[p.~259]{HE}) might fail completely: Since the metric is only differentiable almost everywhere the Christoffel symbols might not be defined at all along any given null curve, making the classical definition of parallel transport along this curve impossible. This leaves us with coming up with a new way to define parallel transport or trying to use the geodesic equation itself, which would mean dealing with a nonlinear ODE with an $L^{\infty}_{\mathrm{loc}}$ right hand side and while the concept of Filippov solutions, \cite{Filippov}, has already been used to some success, see \cite{S:14,SS:18}, it is unknown whether this gives a useful completeness notion.
\end{remark}

\subsection{Existence time estimates for $g_\eps$-geodesics}

In this section we will consider a net of $C^1$-Lorentzian metrics $\{g_\eps\}_{\eps>0}$ converging to a $C^1$-Lorentzian metric $g$ in $C^1_\mathrm{loc}$ and study what timelike (or null) completeness of $g$ implies for the existence time of geodesics for the $g_{\eps}$-metrics. This will be later applied to the situation where the $g_\eps$ are actually smooth approximations.

The main goal of this section is to establish Proposition~\ref{prop:relcompactfunnels}, stating that $g_\eps$-geodesics must have arbitrarily long existence times if corresponding $g$-geodesics are complete. The strategy of the proof is similar to \cite[Theorem~5]{Filippov}, which shows that, even for ODE with a right-hand side having certain mild discontinuities, if all solutions for given initial data exist on $[a,b]$, then the set of points lying on the graphs of these solutions is bounded and closed. 
 
 We will split the argument into several Lemmas for easier reference later on. To start with we note that the following holds:
	\begin{Lemma}\label{lem:aa} Let $\{\dot{\gamma}_n:[0,a_n]\to TM\}_{n\in \N}$ be a sequence of $g_{\eps_n}$-geodesics with $\eps_n\to 0$ and assume there exists a compact set $B\sse TM$ such that $\mathrm{im}(\dot{\gamma}_n)\sse B $, $\dot{\gamma}_n(a_n)\in \partial B$ for all $n$ and $\dot{\gamma}_n(0)\to v\in B^\circ $. Then there exists $0<a\leq \infty $ and a $g$-geodesic $\dot{\gamma}_v :[0,a)\to B\sse TM$ such that a subsequence of the $\dot{\gamma}_n$'s converges uniformly on compact subsets of $[0,a)$ to $\dot{\gamma}_v$.
	\end{Lemma}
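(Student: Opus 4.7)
The strategy is Arzel\`a--Ascoli applied to the $\dot\gamma_n$ together with a passage to the limit in the integrated geodesic equation; the interval $[0,a)$ will arise as the maximal time for which this limit can be extended while staying in $B$. As a first step I would derive uniform regularity: since $B\comp TM$ is compact, so is $\pi(B)\sse M$, and covering $\pi(B)$ by finitely many charts and using $g_{\eps_n}\to g$ in $C^1_\mathrm{loc}$ shows that the Christoffel symbols $(\Gamma_{\eps_n})^k_{ij}$ converge uniformly to $\Gamma^k_{ij}$ on $\pi(B)$ and are in particular uniformly bounded there. Combined with the uniform bound on $\dot\gamma_n$ coming from $\mathrm{im}(\dot\gamma_n)\sse B$, the geodesic equation then yields a uniform bound on $\ddot\gamma_n$, so the $\dot\gamma_n$ are Lipschitz with a common Lipschitz constant $L$.

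Next, since $v\in B^\circ$ and $\dot\gamma_n(0)\to v$, I would pick an open neighborhood $U$ of $v$ with $\overline U\sse B^\circ$ and use the Lipschitz bound $L$ to find $T_0>0$ such that $\dot\gamma_n([0,T_0])\sse U$ for all large $n$; in particular $a_n>T_0$ eventually. On $[0,T_0]$ the $\dot\gamma_n$ are uniformly bounded and uniformly Lipschitz, so Arzel\`a--Ascoli produces a subsequence $\dot\gamma_{n_k^{(0)}}$ converging uniformly to some Lipschitz $\dot\gamma_v:[0,T_0]\to B$ with $\dot\gamma_v(0)=v$. Passing to the limit in the integrated geodesic equation, using uniform convergence $\Gamma_{\eps_n}\to\Gamma$ on $\pi(B)$ together with the just-established uniform convergence of $(\gamma_{n_k^{(0)}},\dot\gamma_{n_k^{(0)}})$, shows that $\dot\gamma_v$ solves the $g$-geodesic equation.

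Finally I would extend the interval by iteration and a diagonal argument. Let $a\in(0,\infty]$ be the supremum of those $T>T_0$ such that some sub-subsequence of $(\dot\gamma_{n_k^{(0)}})$ converges uniformly on $[0,T]$ to a $g$-geodesic with image in $B$ extending $\dot\gamma_v$. Whenever such a $T<a$ is realized with $\dot\gamma_v(T)\in B^\circ$, the argument of the previous paragraph applied with shifted initial data $(\gamma_v(T),\dot\gamma_v(T))$ produces some $\delta>0$ and a further sub-subsequence converging on $[0,T+\delta]$, contradicting maximality; hence the supremum can only be obstructed by hitting $\partial B$ or by $a=\infty$. A diagonal extraction along any sequence $T_k\nearrow a$ then yields a single subsequence of $(\dot\gamma_n)$ converging uniformly on each compact subset of $[0,a)$ to a $g$-geodesic $\dot\gamma_v:[0,a)\to B$.

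The main subtlety to manage is that, since $g$ is only $C^1$, $g$-geodesics are not unique and \emph{a priori} different subsequences could select different branches of the limit at each iteration. This is the reason for always restricting to sub-subsequences of the fixed initial subsequence $(\dot\gamma_{n_k^{(0)}})$: any sub-subsequence of a convergent sequence has the same limit, so the limit curves on overlapping intervals automatically agree and the diagonal extraction is well posed.
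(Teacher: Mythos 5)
Your proof is correct and rests on exactly the same two pillars as the paper's: a uniform Lipschitz bound for $\dot\gamma_n$ coming from the geodesic equation together with $C^1_\mathrm{loc}$-convergence of the Christoffel symbols on the compact set $\pi(B)$, followed by Arzel\`a--Ascoli and passage to the limit in the geodesic equation. The main difference is the construction of the interval $[0,a)$. The paper passes to a subsequence up front so that $a_n\to a$ and $\dot\gamma_n(a_n)\to q\in\partial B$, observes that $a>0$ (because $d_h(\dot\gamma_n(0),\dot\gamma_n(a_n))$ is bounded below by some $c>0$ while $\ddot\gamma_n$ is uniformly bounded, so $a_n$ cannot go to $0$), and then applies Arzel\`a--Ascoli directly on compact subsets of $[0,a)$ with a single diagonal extraction. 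You instead establish $a>T_0>0$ via the Lipschitz estimate, and then bootstrap the interval by an iterated $\delta$-extension at every stage, taking a separate supremum and diagonalizing at the end. This is logically sound (and, as you note, restricting to sub-subsequences of a fixed convergent sequence handles the non-uniqueness of $g$-geodesics correctly), but it is more elaborate than necessary: the characterization of the endpoint as ``obstructed by hitting $\partial B$'' is not needed for the lemma and is, in effect, just the observation that $\dot\gamma_v(t)\to q\in\partial B$ as $t\to a$, which the paper records more directly through $\dot\gamma_n(a_n)\to q$. The paper's route buys a cleaner and shorter argument, and also makes explicit that one may take $a=\lim a_n$ — a fact that is used implicitly when Lemma \ref{lem:aa} is invoked in the proof of Lemma \ref{lem:ab} (the limit $s(m)=\lim_n s(n,m)$ is taken to be the right endpoint of the interval of convergence).
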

	\begin{proof} This is basically a consequence of Arzela-Ascoli. W.l.o.g.~$\dot{\gamma}_n(a_n)\to q\in \partial B$, $a_n\to a \in [0,\infty]$ (by compactness of $\partial B$ there exists a convergent subsequence $\dot{\gamma}_{n_k}(a_{n_k})$, so if $\dot{\gamma}_n(a_n)$ doesn't converge and/or $a_n$ has multiple accumulation points for the the original sequence $\gamma_n$, we replace it with a suitable subsequence). First we argue that $a>0$: Since $q\in \partial B$ and $v\in B^\circ$, there exists $c>0$ such that $d_h(\dot{\gamma}_n(0),\dot{\gamma}_n(a_n))\geq c$ for $n$ sufficiently large, so if $a_n\to 0$, then $\ddot{\gamma}_n$ cannot remain bounded (uniformly in $n$), contradicting the geodesic equation since $\dot{\gamma}_n \sse B$ and $^{\eps_n} \Gamma \to \Gamma \in \mathcal{C}^0$ uniformly on the compact set $\pi(B)\sse M$.
	
	The sequence $\dot{\gamma}_n$ is uniformly bounded and by the geodesic equation and convergence of the $^{\eps_n}\Gamma$'s the same is true for the sequence of derivatives $\ddot{\gamma}_n$, hence $\dot{\gamma}_n$ is equicontinuous. Thus, by Arzela-Ascoli, there exists a subsequence converging uniformly on compact subsets of $[0,a)$ to a function $f:[0,a)\to TM$.  By the geodesic equation also the (sub)sequences of derivatives $\ddot{\gamma}^i_n$ converge uniformly on compact sets to functions $g^i=-\Gamma_{jk}^i(\pi(f)) f^j f^k$. This implies that $f$ is differentiable and $\dot{f}=g$, hence $f$ is a solution of the $g$-geodesic equation with initial data $f(0)=\lim \dot{\gamma}_n(0)=v$. 
	\end{proof}

	\begin{Lemma}\label{lem:ab} Let $\{\dot{\gamma}_n:[0,t_n)\to TM\}_{n\in \N}$ be a sequence of inextendible $g_{\eps_n}$-geodesics with $\eps_n\to 0$, $t_n\leq N$ and $\dot{\gamma}_n(0)\to v\in TM$. Then there exists an inextendible $g$-geodesic $\dot{\gamma}:[0,b)\to TM$ with $\dot{\gamma}(0)=v$ and $b\leq N$. Further, a subsequence converges uniformly on compact subsets of $[0,b)$ to this solution $\dot{\gamma}$.
	\end{Lemma}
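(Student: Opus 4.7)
The plan is to combine Lemma \ref{lem:aa} with a compact exhaustion of $TM$ and a diagonal extraction, very much in the spirit of the Filippov-type argument mentioned just before Lemma \ref{lem:aa}. To start, I fix a compact exhaustion $B_1\subset B_2\subset\dots$ of $TM$ with $v\in B_1^\circ$ and $B_k\subset B_{k+1}^\circ$. For $n$ large enough one has $\dot\gamma_n(0)\in B_1^\circ$, and since each $\dot\gamma_n$ is inextendible on the bounded interval $[0,t_n)$, the characterization of inextendibility recalled after Lemma \ref{lem:blah} forces $\dot\gamma_n$ to leave every compact subset of $TM$. Hence for each $k$ the first exit time
\[a_n^{(k)}:=\inf\{t\in[0,t_n):\dot\gamma_n(t)\in\partial B_k\}\]
is well defined, belongs to $(0,t_n)\subseteq(0,N]$, and satisfies $\dot\gamma_n([0,a_n^{(k)}])\subseteq B_k$ together with $\dot\gamma_n(a_n^{(k)})\in\partial B_k$.

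Next I apply Lemma \ref{lem:aa} recursively. At step $k=1$, applied with $B=B_1$ to the sequence $(\dot\gamma_n|_{[0,a_n^{(1)}]})$, it produces a subsequence and a $g$-geodesic $\dot\gamma^{(1)}:[0,a^{(1)})\to B_1$ with $\dot\gamma^{(1)}(0)=v$ and $a^{(1)}\leq N$ to which this subsequence converges uniformly on compact subsets. At step $k+1$, applied with $B=B_{k+1}$ to the subsequence produced at step $k$, it yields a further subsequence converging to a $g$-geodesic $\dot\gamma^{(k+1)}:[0,a^{(k+1)})\to B_{k+1}$ with $a^{(k+1)}\leq N$; since this is a sub-subsequence of the previous one, uniqueness of sequential limits forces $\dot\gamma^{(k+1)}|_{[0,a^{(k)})}=\dot\gamma^{(k)}$, and in particular $a^{(k+1)}\geq a^{(k)}$. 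A diagonal extraction then produces a single subsequence that, setting $b:=\sup_k a^{(k)}\leq N$, converges on compact subsets of $[0,b)$ to a $g$-geodesic $\dot\gamma:[0,b)\to TM$ with $\dot\gamma(0)=v$.

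The main point that remains --- and the step I expect to require the most care --- is showing that $\dot\gamma$ is inextendible. The key input is that each $\dot\gamma^{(k)}$ extends continuously to $t=a^{(k)}$ with value in $\partial B_k$: on $B_k$ the geodesic equation together with $^{\eps_n}\Gamma\to\Gamma$ uniformly on $\pi(B_k)$ forces the $\ddot\gamma_n$ to be uniformly bounded, so the restrictions $\dot\gamma_n|_{[0,a_n^{(k)}]}$ are uniformly Lipschitz, and combining $a_n^{(k)}\to a^{(k)}$ with $\dot\gamma_n(a_n^{(k)})\in\partial B_k$ yields $\lim_{t\nearrow a^{(k)}}\dot\gamma^{(k)}(t)\in\partial B_k$. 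If $\dot\gamma$ were extendible, say to some $\tilde\gamma$ past $b$, then $\dot{\tilde\gamma}([0,b])$ would be compact and contained in some $B_{k_0}^\circ$, so by continuity $\dot{\tilde\gamma}(a^{(k)})\in B_{k_0}^\circ$ for all sufficiently large $k$. But for $k>k_0$ one has $\dot{\tilde\gamma}(a^{(k)})=\dot\gamma^{(k)}(a^{(k)})\in\partial B_k$, which is disjoint from $B_{k_0}^\circ$ --- the desired contradiction. This forces $\dot\gamma$ to be inextendible, and together with $b\leq N$ finishes the proof.
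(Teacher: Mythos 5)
Your proof is correct and follows essentially the same strategy as the paper's: a compact exhaustion, iterated application of Lemma \ref{lem:aa}, diagonal extraction, and the observation that the limit geodesic leaves every compact set because it meets each $\partial B_k$ at times approaching $b$. The only cosmetic difference is that you phrase the final inextendibility step as a contradiction, whereas the paper states directly that $\dot\eta(s(k))\in\partial K_k$ implies $\dot\eta(s)\to\infty$ as $s\to b$; these are the same argument.
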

	\begin{proof}
		Let $\{K_m\}_{m\in\N}$ be a compact exhaustion of $TM$ such that $v\in K^\circ_1$. W.l.o.g.~$v_n\in K_1$ for all $n$. Then, since $\dot{\gamma}_n(s)$ leaves every compact subset of $TM$ as $s\to t_n$, for all $(n,m)$ there exists $s(n,m)<t_n\leq N$ such that $q_{n,m}:=\dot{\gamma}_n(s(n,m))\in \partial K_m$ and $\dot{\gamma}_n|_{[0,s(n,m))}\sse K^\circ_m$.
		
		By the previous Lemma \ref{lem:aa} we can choose a subsequence $\{\dot{\gamma}_{\phi_2(k)}\}_{k\in \N}$ of $\{\dot{\gamma}_n\}$ such that $s(2):=\lim_{k\to \infty}s(\phi_2(k),2)$ and $s(1):=\lim_{k\to \infty}s(\phi_2(k),1)$ exist and $\{ \dot{\gamma}_{\phi_2(k)}|_{[0,s(\phi_2(k),2))}\}_{k\in \N}$ converges (uniformly on compact subsets) to a $g$-geodesic $\dot{\eta}_2:[0,s(2))\to TM$. For this solution $\dot{\eta}_{2}(s(1))\in \partial K_1$. Next, we choose a subsequence $\{ \dot{\gamma}_{\phi_3(k)}\}$ of $\{\dot{\gamma}_{\phi_2(k)}\}$ such that $s(3):=\lim_{k\to \infty}s(\phi_3(k),3)$ exists and $\dot{\gamma}_{\phi_3(k)}$ converges (uniformly on compact subsets) to a $g$-geodesic $\dot{\eta}_3:[0,s(3))\to TM$. For this solution, $\dot{\eta}_3(s(2))\in \partial K_2$ (and $\dot{\eta}_3(s(1))\in \partial K_1$). Additionally, $\dot{\eta}_3|_{[0,s(2))}=\dot{\eta}_2$. 
		
		Continuing this construction, we obtain sequences $\{\{\dot{\gamma}_{\phi_i(k)}\}_{k\in \N}\}_{i\in\N}$ and $\{s(i)\}_{i\in\N}$. The 
		sequence $s(i)$ is increasing and bounded above by $N$, hence $s(i)\to b \in (0,N]$. By construction the diagonal sequence $\{\dot{\gamma}_{\phi_k(k)}\}_{k\in \N}$ converges to a $g$-geodesic $\dot{\eta}:[0,b)\to TM$ with $\dot{\eta}(s(k))\in \partial K_k$ for all $k\in \N$, i.e., $\dot{\eta}(s)\to \infty$ as $s\to b\leq N$. 
	\end{proof}
	
The desired result is now an easy consequence:

\begin{Proposition} \label{prop:longexistence} Let $g,g_\eps \in C^1$, $g_\eps \to g$ in $C^1_\mathrm{loc}$ and let and $K\sse TM$ be compact and assume that all $g$-geodesics starting in $K$ are defined on $[0,\infty)$. Then for any $N\in \N$ there exists $\eps_0(N,K)$ such that for all $\eps \leq \eps_0(K, N)$ all $g_\eps$-geodesics starting in $K$ are defined on $[0,N]$.
\end{Proposition}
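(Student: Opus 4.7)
The plan is to argue by contradiction, combining the compactness of $K$ with Lemma \ref{lem:ab} to produce an inextendible $g$-geodesic with initial data in $K$ but finite maximal time of existence, contradicting the completeness hypothesis.

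In detail, suppose the conclusion fails. Then there exist $N \in \mathbb{N}$, a sequence $\eps_n \to 0$, initial data $v_n \in K$, and inextendible $g_{\eps_n}$-geodesics $\dot\gamma_n : [0, t_n) \to TM$ with $\dot\gamma_n(0) = v_n$ and $t_n \leq N$. Since $K$ is compact, after passing to a subsequence we may assume $v_n \to v \in K$.

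Now apply Lemma \ref{lem:ab} to the sequence $\{\dot\gamma_n\}$: there exists an inextendible $g$-geodesic $\dot\gamma : [0, b) \to TM$ with $\dot\gamma(0) = v$ and $b \leq N$, such that a subsequence of $\{\dot\gamma_n\}$ converges to $\dot\gamma$ uniformly on compact subsets of $[0,b)$. But $v \in K$, so by hypothesis every $g$-geodesic starting at $v$ must be defined on all of $[0, \infty)$, and in particular on $[0, N+1]$. Since an inextendible geodesic cannot be extended further, this contradicts $b \leq N < \infty$. Hence the assumed sequence cannot exist, proving the claim.

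There is no substantive obstacle here — all of the technical work has been absorbed into Lemma \ref{lem:aa} and Lemma \ref{lem:ab}, which provide the $C^0_\mathrm{loc}$ compactness of $g_\eps$-geodesics and the escape-to-infinity behavior at the endpoint of the limit. The only mild point to note is that non-uniqueness of $g$-geodesics is not a problem: Lemma \ref{lem:ab} only asserts the existence of \emph{some} limiting inextendible $g$-geodesic starting at $v$, and the hypothesis rules out finite maximal existence time for \emph{every} such geodesic, which is exactly what is needed.
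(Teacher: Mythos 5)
Your proof is correct and follows exactly the same route as the paper's: argue by contradiction, extract a convergent subsequence of initial data in the compact set $K$, and invoke Lemma~\ref{lem:ab} to produce a finite-time inextendible $g$-geodesic from $K$, contradicting the completeness hypothesis. Your closing remark about non-uniqueness being harmless is a nice clarification of why the hypothesis quantifies over \emph{all} $g$-geodesics.
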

\begin{proof}
	Assume not. Then there exists $N\in \N$, a sequence $\eps_n \to 0$  and inextendible $g_{\eps_n}$-geodesics $\dot{\gamma}_n: [0,t_n)\to TM$ with $t_n\leq N$ and $\dot{\gamma}_n(0)=v_n \in K$. W.l.o.g.~$v_n\to v\in K$. Then, by the previous Lemma \ref{lem:ab}, there exists an inextendible $g$-geodesic $\gamma :[0,b)\to M$ with $b\leq N$ and $\dot{\gamma}(0)=v\in K$, a contradiction.
\end{proof}
\begin{remark} \label{rem:longexistence} Clearly, the following modified version of Proposition~\ref{prop:longexistence} holds:
	
	\emph{Let $g\in C^1$, $K\sse TM$ compact and $T\geq 0$ and assume that all $g$-geodesics starting in $K$ are defined on $[0,T)$. Then for any $\tilde{T}<T$ there exists $\eps_0(K, \tilde{T})$ such that for all $\eps \leq \eps_0(K,\tilde{T})$ all $g_\eps$-geodesics starting in $K$ are defined on $[0,\tilde{T}]$.}\\ 
\end{remark}

To state and prove our final result in this section, we first need to introduce some more notation: For a compact $K\subset TM$, $N>0$ we denote the union of the images of all $g$-geodesics $\dot{\gamma}$ (restricted to $[0,N]$ if they exist longer) with $\dot{\gamma}(0)\in K$ by $F_{K,N}$, i.e.,
\begin{equation}\label{eq:F}
 F_{K,N}:=\bigcup_{\{ \dot{\gamma}:\; \mathrm{g-geodesic \; with}\; \dot{\gamma}(0)\in K \}}\mathrm{im}(\dot{\gamma}|_{[0,N]})\sse TM.
\end{equation}
We define $F_{\eps,K,N}$ the same way, only using $g_\eps$-geodesics instead of $g$-geodesics.

\begin{Proposition}\label{prop:relcompactfunnels} Let $g\equiv g_0\in C^1$ and $\eps \mapsto g_\eps$ be continuous from $[0,1]$ to the space of $C^1$-metrics with respect to $C^1_\mathrm{loc}$-convergence. Assume that $K\sse TM$ is compact and that all $g$-geodesics starting in $K$ are defined on $[0,\infty)$. Then for any $N\in \N$ the union \[F_{\leq \eps_0,K,N}:=\bigcup_{0<\eps\leq \eps_0(K,N)} F_{\eps, K,N} \cup F_{K,N}\] (where $\eps_0(K,N)$ is given by Proposition \ref{prop:longexistence}) is relatively compact. 
\end{Proposition}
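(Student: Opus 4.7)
The plan is to argue by contradiction, adapting the diagonal/exit-time scheme of Lemmas \ref{lem:aa} and \ref{lem:ab} to a sequence of $g_{\eps_n}$-geodesics whose parameter $\eps_n$ is itself allowed to vary in the compact interval $[0,\eps_0(K,N)]$. Suppose that $F_{\leq\eps_0,K,N}$ is not relatively compact. Then there exist sequences $\eps_n\in[0,\eps_0(K,N)]$, $v_n\in K$, $s_n\in[0,N]$ together with $g_{\eps_n}$-geodesics $\dot{\gamma}_n:[0,N]\to TM$ with $\dot{\gamma}_n(0)=v_n$ --- which exist on $[0,N]$ by the choice of $\eps_0(K,N)$ in Proposition \ref{prop:longexistence} (and by completeness of $g$ if $\eps_n=0$) --- such that $w_n:=\dot{\gamma}_n(s_n)$ eventually leaves every compact subset of $TM$. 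After further subsequencing we may assume $\eps_n\to\eps_*\in[0,\eps_0(K,N)]$, $v_n\to v\in K$ and $s_n\to s_*\in[0,N]$; by the assumed continuity of $\eps\mapsto g_\eps$ in $C^1_\mathrm{loc}$, we have ${}^{\eps_n}\Gamma\to{}^{\eps_*}\Gamma$ locally uniformly on $M$.

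The core step is to extract a limiting $g_{\eps_*}$-geodesic that witnesses incompleteness. Fix a compact exhaustion $\{B_m\}_{m\in\N}$ of $TM$ with $K\Subset B_1^\circ$. For each $m\geq 1$ and each sufficiently large $n$ we have $w_n\notin B_m$, so there is a first exit time $t_n^{(m)}\in[0,s_n]\sse[0,N]$ with $\dot{\gamma}_n(t_n^{(m)})\in\partial B_m$ and $\dot{\gamma}_n([0,t_n^{(m)}])\sse B_m$. On this interval $\dot{\gamma}_n$ is confined to the compact set $B_m$, and the geodesic equation together with the locally uniform bound on ${}^{\eps_n}\Gamma$ yields a uniform bound on $\ddot{\gamma}_n$, hence equicontinuity of $\{\dot{\gamma}_n|_{[0,t_n^{(m)}]}\}$. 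Running the diagonal extraction from the proof of Lemma \ref{lem:ab}, with ${}^{\eps_n}\Gamma\to{}^{\eps_*}\Gamma$ replacing ${}^{\eps_n}\Gamma\to\Gamma$ and Lemma \ref{lem:aa} applied on each piece, produces a $g_{\eps_*}$-geodesic $\dot{\eta}:[0,b)\to TM$ with $\dot{\eta}(0)=v$, $b=\lim_m\lim_n t_n^{(m)}\leq N$, and $\dot{\eta}(t_*^{(m)})\in\partial B_m$ for every $m$; in particular $\dot{\eta}$ is inextendible.

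Finally, because $\eps_*\leq\eps_0(K,N)$, Proposition \ref{prop:longexistence} (or the completeness hypothesis on $g$ in the case $\eps_*=0$) forces every inextendible $g_{\eps_*}$-geodesic starting in $K$ to be defined on $[0,N]$. This contradicts the existence of the inextendible $g_{\eps_*}$-geodesic $\dot{\eta}$ with initial point $v\in K$ defined only on $[0,b)$ with $b\leq N$. The main technical subtlety I expect is making the diagonal extraction rigorous when the metrics $g_{\eps_n}$ are all distinct rather than fixed; however, since that scheme uses only the locally uniform convergence of the Christoffel symbols and the confinement of the curves to compact sets on each piece, it transfers essentially verbatim from Lemma \ref{lem:ab} to the present setting by replacing $\Gamma$ with ${}^{\eps_*}\Gamma$ throughout.
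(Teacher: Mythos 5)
Your argument is correct and follows the same strategy as the paper's proof: argue by contradiction, pass to subsequences so that $\eps_n\to\bar\eps$, $\dot{\gamma}_n(0)\to v\in K$ and the exit-time parameters converge, use continuity of $\eps\mapsto g_\eps$ in $C^1_\mathrm{loc}$ to replace the target metric $g$ by $g_{\bar\eps}$ in the Lemma~\ref{lem:ab} diagonal argument, and conclude by contradicting Proposition~\ref{prop:longexistence} since $\bar\eps\leq\eps_0(K,N)$. The extra detail you spell out about exit times and equicontinuity is exactly what the paper delegates to "as in the proof of Lemma~\ref{lem:ab}."
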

\begin{proof} If not, there exist points $p_n=\dot{\gamma}_{\eps_n}(t_n)\in F_{\leq\eps_0,K,N}$ with $p_n\to 
	\infty $. W.l.o.g.~$\eps_n\to \bar{\eps}\in [0,\eps_0(N,K)]$, $t_n\to t_0\in[0,N]$ and $\dot{\gamma}_{\eps_n}(0)\to v\in K$. 
	Since $\eps \mapsto g_{\eps}$ was assumed to be continuous we have that $g_\eps \to g_{\bar{\eps}}$ in $C^1_\mathrm{loc}$ as $\eps \to \bar{\eps}$ and we may 
	proceed as in the proof of Lemma \ref{lem:ab}: Since $p_n\to \infty$ we can assume $\dot{\gamma}_{\eps_n}|_{[0,t_n]}\cap \partial K_m \neq \emptyset$ for all $n\geq m$ (passing to a subsequence if necessary). Let $q_{n,m}$ and $s(n,m)\leq t_n \leq N$ be as in  Lemma \ref{lem:ab} (well defined for $n\geq m$). 
	Then as in the proof of Lemma \ref{lem:ab} one obtains a solution $\dot{\eta}:[0,s_0)\to TM$ of the $g_{\bar{\eps}}$-geodesic equation with $s_0=\lim_{m\to \infty}\lim_{n\to\infty}s(n,m)\leq N$ and $\dot{\eta}(s(k))\in \partial K_k$, i.e., $\dot{\eta}(s(k))\to \infty$. This contradicts Proposition \ref{prop:longexistence}, since $\bar{\eps}\leq \eps_0(N,K)$.
\end{proof}

\begin{remark}
	\begin{itemize}
		\item[(i)]\label{rem:compactfunnels}	It is also not hard to see that
		 $F_{\leq\eps_0,K,N}$ 
		is actually compact but we will not need this.
		\item[(ii)] \label{rem:relcompactfunnels} As for Proposition~\ref{prop:longexistence}, the following modified version of Proposition~\ref{prop:relcompactfunnels} holds:
		
			\emph{Let $g\in C^1$ and $K\sse TM$ compact, $T\geq 0$ and assume that all $g$-geodesics starting in $K$ are defined on $[0,T)$. Then for any $\tilde{T}<T$ the union \[F_{\leq \eps_0,K,\tilde{T}}:=\bigcup_{0<\eps\leq \eps_0(\tilde{T},K)} F_{\eps, K,\tilde{T}} \cup F_{K,\tilde{T}}\] (where $\eps_0(K,\tilde{T})$ is as in  Rem.~\ref{rem:longexistence}) is relatively compact.}
	\end{itemize}
\end{remark}

\subsection{Existence of maximizing geodesics}\label{sec:existofmax}
We now turn to discussing the existence of length maximizing geodesics for globally hyperbolic $C^1$-metrics. It is well known that for smooth metrics any length maximizing causal curve must be a geodesic (cf. e.g.~\cite[Proposition~14.19]{ON83}). Further, from \cite{Clemens}, we know that maximizing causal curves in globally hyperbolic spacetimes always exist, even if the metric is merely continuous. If $g\in C^{1,1}$ it is still true that any maximizing causal curve is a geodesic (cf. \cite[Theorem 6]{Ming}) and thus maximizing geodesics exist. 

Now for $C^1$-metrics it is unknown whether maximizing curves have to be (reparametrizations of) geodesics and it is even unknown whether one can choose a parametrization in which they are at least $C^2$-curves.

While the arguments below don't answer either of these questions for \emph{all} maximizing causal curves, they at least establish the existence of maximizing curves that are geodesics between any two causally related points if $g$ is globally hyperbolic.

\begin{Proposition}\label{prop:existmaxgeods}
	Let $(M,g)$ be globally hyperbolic with a $C^1$-metric $g$. Then for any $p< q$ there exists at least one maximizing causal \emph{geodesic} from $p$ to $q$. 
	
	Moreover, for any net $\check{g}_\eps$ of approximating metrics converging to $g$ in $C^1_{\mathrm{loc}}$ with $\check{g}_\eps \prec g$ there always exists at least one $L_g$-maximizing $g$-causal $g$-geodesic $\gamma :[0,1]\to M$ from $p$ to $q$ that is obtained as a $C^1$-limit of a sequence of $L_{\check{g}_{\eps_n}}$-maximizing $\check{g}_{\eps_n}$-causal $\check{g}_{\eps_n}$-geodesics $\gamma_{\eps_n}:[0,1]\to M$ with $\eps_n\to 0$.	
\end{Proposition}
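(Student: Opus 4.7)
The plan is to prove both assertions simultaneously by a limiting argument, assuming the approximating net consists of smooth (or at least $C^{1,1}$) metrics so that the classical Avez-Seifert theorem applies to each $(M,\check g_\eps)$; the general $C^1$ case follows by a further approximation of each $\check g_\eps$. By Remark~\ref{rem:globhypdef}, each $\check g_\eps$ is globally hyperbolic since $\check g_\eps \prec g$.

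First I set up chronological approximations of $p$ and $q$: pick sequences $p_n \ll_g p$, $q_n \gg_g q$ with $p_n \to p$, $q_n \to q$. Push-up (available for $C^1$ metrics; see Remark~\ref{rem:regofcurves}) gives $p_n \ll_g q_n$, and upper semicontinuity of $\tau_g$ combined with the reverse triangle inequality yields $\tau_g(p_n,q_n) \to \tau_g(p,q)$. For each $n$ I choose a smooth $g$-timelike curve $\sigma_n$ from $p_n$ to $q_n$ with $L_g(\sigma_n) \geq \tau_g(p_n,q_n) - 1/n$; since $g(\dot\sigma_n,\dot\sigma_n) \leq -c_n < 0$ on its compact image, the $C^0_{\mathrm{loc}}$-convergence $\check g_\eps \to g$ makes $\sigma_n$ also $\check g_\eps$-timelike for all small $\eps$, with $L_{\check g_\eps}(\sigma_n) \to L_g(\sigma_n)$. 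Matching this from above, every $\check g_\eps$-causal curve is $g$-timelike and a uniform $C^0$-estimate $|L_{\check g_\eps}(\sigma) - L_g(\sigma)| = O(\sqrt{\|\check g_\eps - g\|_{C^0,K}})$ on $\check g_\eps$-causal curves of bounded $h$-length in $K$ gives $\tau_{\check g_\eps}(p_n,q_n) \leq \tau_g(p_n,q_n) + o(1)$. A diagonal choice $\eps_n \to 0$ then produces $p_n \ll_{\check g_{\eps_n}} q_n$ together with $\tau_{\check g_{\eps_n}}(p_n,q_n) \to \tau_g(p,q)$.

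For each such $n$, Avez-Seifert gives an $L_{\check g_{\eps_n}}$-maximizing $\check g_{\eps_n}$-causal $\check g_{\eps_n}$-geodesic from $p_n$ to $q_n$. Since $\check g_{\eps_n}\prec g$ this curve is $g$-causal and hence contained in the fixed compact set $K:=J^+_g(p_\infty)\cap J^-_g(q_\infty)$ for any $p_\infty\ll_g p$, $q_\infty\gg_g q$ (compact by global hyperbolicity of $g$). I parametrize it as $\gamma_{\eps_n}:[0,1]\to M$ affinely, normalizing via the initial $h$-length so that $\|\dot\gamma_{\eps_n}(0)\|_h$ is uniformly bounded both above (by the $h$-diameter of $K$) and away from zero (which takes a little care when $\tau_g(p,q)=0$, but is forced by $p_n\ne q_n$ remaining at bounded separation inside the compact $K$). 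An Arzela-Ascoli argument in the spirit of Lemmas~\ref{lem:aa}--\ref{lem:ab} then yields a subsequence converging in $C^1$ to a $g$-geodesic $\gamma:[0,1]\to M$ from $p$ to $q$, whose causal character passes to the limit via $g(\dot\gamma_{\eps_n},\dot\gamma_{\eps_n}) = \check g_{\eps_n}(\dot\gamma_{\eps_n},\dot\gamma_{\eps_n}) - (\check g_{\eps_n}-g)(\dot\gamma_{\eps_n},\dot\gamma_{\eps_n}) \leq o(1)$.

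Maximality then follows from joint continuity of the length functional: $C^1$-convergence of curves and $C^0_{\mathrm{loc}}$-convergence of metrics jointly give $L_{\check g_{\eps_n}}(\gamma_{\eps_n}) \to L_g(\gamma)$, so $L_g(\gamma) = \lim\tau_{\check g_{\eps_n}}(p_n,q_n) = \tau_g(p,q)$, and combining with the trivial $L_g(\gamma)\leq\tau_g(p,q)$ gives $L_g$-maximality. The first assertion of the proposition is then obtained as a corollary by applying the argument to any explicit smooth approximating net with $\check g_\eps\prec g$ (e.g., from \cite{CG, KSS}). The main obstacles I anticipate are the two-sided convergence $\tau_{\check g_{\eps_n}}(p_n,q_n)\to\tau_g(p,q)$ -- whose lower bound requires the strictly $g$-timelike test curves $\sigma_n$ to remain $\check g_\eps$-timelike uniformly in $\eps$ -- and the parametrization keeping the limit $\gamma$ non-degenerate in the regime where $p$ and $q$ are only null-related in $g$.
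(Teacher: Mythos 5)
Your proposal takes a genuinely different route from the paper's proof, so let me compare them. The paper's proof is in three parts. First, it uses auxiliary approximations $\hat{g}_n$ with \emph{wider} lightcones ($g\prec\hat{g}_n\prec\hat{g}$ for a fixed smooth globally hyperbolic $\hat{g}$) to establish the existence of a maximizing $g$-geodesic for $p\ll q$; the advantage of working from the outside is that every $g$-causal test curve is automatically $\hat{g}_n$-causal, so the estimate $\tau(p,q)\le\liminf_n\tau_n(p,q)$ is immediate. Second, for the given net $\check{g}_\eps\prec g$ (narrower cones), it takes the maximizing strictly timelike $C^1$-geodesic $c$ just constructed, notes that $c$ is $\check{g}_\eps$-timelike for $\eps$ small, and feeds this back in as the test curve -- so the $\check{g}_\eps$ part logically depends on the $\hat{g}_n$ part. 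Third, the null case is handled by a separate device: choosing $q_k\in\partial J^+_{\check{g}_{\eps_k}}(p)\to q$ and taking $\check{g}_{\eps_k}$-null maximizers from $p$ to $q_k$. Your proposal stays entirely on the narrow-cone side and instead moves the endpoints: you replace $p,q$ by $p_n\ll_g p$, $q\ll_g q_n$, manufacture strictly $g$-timelike test curves $\sigma_n$ to get the lower bound on $\tau_{\check{g}_{\eps_n}}(p_n,q_n)$, and bound from above by the uniform $\sqrt{\|\check{g}_\eps-g\|_{C^0,K}}$ estimate. This yields the full two-sided convergence $\tau_{\check{g}_{\eps_n}}(p_n,q_n)\to\tau_g(p,q)$ in one stroke, buys a uniform treatment of the timelike and null cases, and eliminates the need to first go through $\hat{g}_n$. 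The trade-off is the extra bookkeeping with shifted endpoints and the reversed logical order (first assertion derived from the second), which you leave somewhat implicit.

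There are two places where the proposal needs completing. The main one is the null-case non-degeneracy, which you flag as an open obstacle rather than resolving. It does close, but by a different mechanism than the one you gesture at: if (after passing to a subsequence) $\dot{\gamma}_{\eps_n}(0)\to 0$, then the Arzel\`a-Ascoli argument of Lemmas~\ref{lem:aa}--\ref{lem:ab} forces the $C^1$-limit to be the constant geodesic at $p$, which contradicts $\gamma(1)=\lim\gamma_{\eps_n}(1)=\lim q_n=q\ne p$. In other words, the separation of endpoints is used through the contradiction after taking the limit, not through any a priori bound that ``normalizing via the initial $h$-length'' would give (you cannot freely normalize the initial speed when the domain is fixed to $[0,1]$ and the endpoints are constrained). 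The second is the bootstrap for genuinely $C^1$ nets $\check{g}_\eps$: you invoke ``further approximation of each $\check{g}_\eps$'' to get maximizing $\check{g}_\eps$-geodesics, but what is actually needed is the \emph{first} assertion of this very proposition applied to each $\check{g}_\eps$. The logical chain -- prove the second assertion for smooth $\check{g}_\eps$ via Avez--Seifert, deduce the first assertion for arbitrary $C^1$ metrics (in particular for each $\check{g}_\eps$), then run the argument again for general $C^1$ $\check{g}_\eps$ -- does go through, but it should be spelled out to avoid appearing circular. Finally, note that in your construction the approximating geodesics $\gamma_{\eps_n}$ run from $p_n$ to $q_n$ rather than from $p$ to $q$; this is permitted by the literal statement (and the paper's own null case varies $q$), but it matters for Corollaries~\ref{cor:sigmamaxgeod} and~\ref{cor:Smaxgeod} where the left endpoint must lie on $\Sigma$ (resp.\ $S$) for the orthogonality argument, so the $\Sigma$/$S$ analogue of your endpoint-shifting would need to keep $\gamma_{\eps_n}(0)$ on the submanifold.
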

\begin{proof}
	We first show that there always exists a timelike maximizing geodesic from $p$ to $q$ for $p\ll q$. First fix a smooth globally hyperbolic metric $\hat{g}$ with $g\prec \hat{g}$ (such a metric exists by \cite[Theorem~4.5]{Clemens}). Further, note that there exists a sequence of metrics $\hat{g}_n$ with $g \prec \hat{g}_n\prec \hat{g}$ and $\hat{g}_n \to g$ in $C^1_{\mathrm{loc}}$ as $n\to \infty$ (this follows essentially from \cite[Proposition 1.2]{CG} and globalizing \cite[Lemma~1.4]{Clemens} using \cite[Lemma 4.3]{HKS:12}, see also the proof of Lemma \ref{lem:approxnice}).
	
	Since the $\hat{g}_n$ have narrower lightcones than $\hat{g}$, they must all be globally hyperbolic as well and $J_{\hat{g}_n} (p,q)\subseteq J_{\hat{g}}(p,q) \comp M$. Let $\gamma_n:[0,1]\to M$ be a $\hat{g}_n$-maximizing $\hat{g}_n$-timelike $\hat{g}_n$-geodesic from $p$ to $q$ (these exist because $\hat{g}_n$ is globally hyperbolic and $p\ll_{\hat{g}_n} q$). 
	
	Let $K_i$ be a compact exhaustion of $TM$
	and let $a_{i,n}$  be the maximum of $s\in [0,1] $ such that $\dot{\gamma}_n|_{[0,s]}\subset K_i$. 	If there does not exist $N$ such that $a_{N,n}=1$ for all large enough $n$, then as in Lemma \ref{lem:ab}
	 we can find a subsequence $\dot{\gamma}_{n_k}$ converging uniformly on compact subsets of $[0,a)$, where $1\geq a=\lim_{i\to\infty}\lim_{k\to \infty}a_{i,n_k}$, to a $g$-geodesic $\dot{\gamma} :[0,a)\to M$. This $\gamma $ must be causal and $\dot{\gamma}(s)$ must leave every compact set as $s\to a$. By Lemma~\ref{lem:blah} for causal geodesics inextendibility as a geodesic is equivalent to inextendibility as a continuous curve, so the causal curve  $\gamma :[0,a)\to M $ is inextendible. But since all curves $\gamma_{n_k}$ are contained in the common compact set $J_{\hat{g}}(p,q)$ the same must be true for the limit  $\gamma $, so $\gamma $ is an inextendible causal curve imprisoned in a compact set. This contradicts global hyperbolicity.	Hence, for large $n$, all curves $\dot{\gamma}_n$ are contained in a common compact subset of $TM$. This implies that a subsequence $\dot{\gamma}_{n_k}$ converges uniformly to a $g$-geodesic $\dot{\gamma}:[0,1]\to M$.

	It remains to show that $\gamma $ is maximizing for $g$. This actually immediately follows from \cite[Proposition~6.5]{Clemens}, but parts of the proof simplify due to the better convergence of $\gamma_{n_k}\to \gamma $ and for completeness we include an outline of the proof here. 
Let $c:[0,1]\to M$ be any $g$-causal curve from $p$ to $q$ parametrized proportional to $h$-arclength. Any such $c$ is also $\hat{g}_n$-causal and, by \cite[Lemma 2.7]{Clemens}, there exists a constant $C$ such that $h(\dot{c},\dot{c})\leq C$ for any such curve $c$. Then as in  \cite[Theorem~6.3]{Clemens} one can estimate 
\[L_g(c) \leq \int_0^1 \sqrt{-\hat{g}_n(\dot{c},\dot{c})+C^2\delta_{n}}\leq 
L_{\hat{g}_n}(c)+C\sqrt{\delta_n}\leq \tau_n(p,q)+C\sqrt{\delta_n},\]
	where $\delta_n$ measures the distance between $g$ and $\hat{g}_n$.  Letting $n \to \infty$, we obtain $L_g(c)\leq \liminf_{n\to\infty}\tau_n(p,q)$ and hence $\tau(p,q)\leq \liminf_{n\to\infty}\tau_n(p,q)$.
On the other hand, the $C^1([0,1])$-convergence of $\gamma_{n_k}$ to $\gamma$ implies that $\tau_{n_k}(p,q)=L_{\hat{g}_{n_k}}(\gamma_{n_k})\to L_g(\gamma)$, hence $\tau(p,q)\leq\lim \tau_{n_k}(p,q) =L_g(\gamma)$, i.e., $\gamma$ is maximizing and since $\tau(p,q)>0$ it is timelike.

 Now let  $\check{g}_\eps$ be a net of approximating metrics converging to $g$ in $C^1_{\mathrm{loc}}$ with $\check{g}_\eps \prec g$. We show that for $p\ll q$ there exists a maximizing timelike $g$-geodesic from $p$ to $q$ that is obtained as the $C^1$-limit of a sequence of $L_{\check{g}_{\eps_k}}$-maximizing $\check{g}_{\eps_k}$-causal $\check{g}_{\eps_k}$-geodesics. Since $J(p,q)$ is compact and all our curves lie in $J(p,q)$ we may use \cite[Lemma~1.4]{Clemens} to assume without loss of generality that $\check{g}_{\eps_2}\prec \check{g}_{\eps_1}$ for $\eps_2<\eps_1$. Let $\eps_0$ be such that $p\ll_{\check{g}_{\eps_0}} q$ (such an $\eps_0$ exists because $p\ll q$ and $I^+=\bigcup_\eps I^+_{\check{g}_\eps}$, see \cite[Corollary 1.17 and Proposition 1.21]{CG}).
Now for all $\eps<\eps_0$, let $\gamma_{\eps}:[0,1]\to M$ be a $\check{g}_\eps$-maximizing $\check{g}_\eps$-timelike $\check{g}_\eps$-geodesic from $p$ to $q$ (these exist because $\check{g}_\eps$ is globally hyperbolic and the width of the $\check{g}_\eps$-lightcones is increasing as $\eps$ decreases). Proceeding exactly the same way as above, we obtain that a subsequence  $\gamma_{\eps_k}$ converges to a $g$-geodesic $\gamma :[0,1]\to M$ in $C^1([0,1])$. It remains to show that $\gamma$ is maximizing. Let $c$ be a timelike $g$-geodesic with $L_g(c)=\tau(p,q)$. Since $c$ is timelike and $C^1$ there exists a constant $\tilde{C}>0$ such that $g(\dot{c},\dot{c})<-\tilde{C}$ and hence $c$ is $\check{g}_\eps$-timelike for all small enough $\eps$. Therefore, for small enough $\eps$, we can estimate $\tau(p,q)=L_g(c) \leq \tau_\eps(p,q)+C\sqrt{\delta_\eps}$
 as above. From this it again follows that $\gamma $ is a maximizing timelike $g$-geodesic.

Lastly, we treat the case $p<q$ but not $\ll q$: Choose a sequence $\eps_k \to 0$ and
points $q_k \in \partial J^+_{\check{g}_{\eps_k}}(p)$ converging to $q$ and let $\gamma_{\eps_k}:[0,1]\to M$ be a $\check{g}_{\eps_k}$-maximizing $\check{g}_{\eps_k}$-null $\check{g}_{\eps_k}$-geodesic from $p$ to $q_k$.\footnote{~Such $q_k$ exist by the same argument as in the proof of \cite[Theorem~5.3]{GGKS:18}: Let $U_k$ be a sequence of neighborhoods of $q$ with $U_{k+1} \sse U_k$ and $\bigcap_k U_k =\{q\}$. Then for any $U_k$ there exist points $q_k^e \in U_k\setminus \overline{J^+(p)} $ and $q_k^i \in U_k \cap I^+(p)$. Let $\eps_k $ be such that $q_k^i \in I^+_{\check{g}_{\eps_k}}(p)$ and $\eps_k \leq \frac{1}{k}$ and let $c_k $ be a curve in $U_k$ connecting $q_k^i$ and $q_k^e\in U_k\setminus \overline{J^+(p)} \sse U_k\setminus \overline{J^+_{\check{g}_{\eps_k}}(p)}$. Then this curve must intersect $\partial J^+_{\check{g}_{\eps_k}}(p)$ and we choose $q_k$ to be such an intersection point.} As before, passing to a subsequence if necessary, we may assume that the $\gamma_{\eps_k}$ converge to a $g$-geodesic $\gamma :[0,1]\to M$ in $C^1([0,1])$. This limit $\gamma $ must be $g$-null and we have $\tau(p,q)=0$ (else there would exist a maximizing curve from $p$ to $q$ of positive length and since maximizing curves have fixed causal character, see \cite{GL:18}, such a curve would be timelike), so $\tau(p,q)=L(\gamma)$ and $\gamma $ is maximizing.  
\end{proof}

\begin{remark}
The same holds for any net $\hat{g}_\eps$ that converges to $g$ in $C^1_\mathrm{loc}$ and satisfies $g\prec \hat{g}_\eps \prec \hat{g}$ for some globally hyperbolic $\hat{g}$: For $p\ll q$ this is just the first part of the proof. For $p<q$ but not $\ll q$ one only needs to establish the existence of points $q_k \in \partial J^+_{\hat{g}_{\eps_k}}(p)$ converging to $q$. This follows similarly as in the $\check{g}_\eps$ case but needs $\bigcap_{\eps} I^+_{\hat{g}_{\eps}}(p)=J^+_{g}(p)$ which can be shown using the same basic limit geodesic argument as in the proof of the previous Lemma.
\end{remark}

Using essentially the same proof, just replacing $p$ with $\Sigma$ or $S$ we also get the following results about the existence of maximizing geodesics to a Cauchy hypersurface $\Sigma$ or a closed, spacelike $(n-2)$-dimensional submanifold $S$.

\begin{Corollary}\label{cor:sigmamaxgeod} Let $(M,g)$ be globally hyperbolic with a $C^1$-metric $g$ and let $\Sigma\sse M$ be a smooth spacelike Cauchy hypersurface. Then for any $q\in I^+(\Sigma)$ there exists at least one timelike geodesic from $\Sigma $ to $q$ maximizing the distance to $\Sigma$. Further, such a geodesic can be obtained  as a $C^1$-limit of a sequence of $\check{g}_{\eps_n}$-geodesics $\gamma_{\eps_n}$ maximizing the $g_{\eps_n}$-distance to $\Sigma$ and thus must in particular start orthogonally to $\Sigma$. 
\end{Corollary}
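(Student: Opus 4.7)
The plan is to follow the strategy of the proof of Proposition \ref{prop:existmaxgeods} essentially verbatim, with $\Sigma$ in place of the starting point $p$. First I would invoke the same approximation scheme, producing a sequence of smooth metrics $\check g_{\eps}$ with $\check g_{\eps}\prec g$ and $\check g_{\eps}\to g$ in $C^1_{\mathrm{loc}}$. Since $\check g_{\eps}\prec g$ any $\check g_{\eps}$-causal curve is $g$-causal, so $\Sigma\cap J^-_{\check g_{\eps}}(q)\sse \Sigma\cap J^-_g(q)$, and the latter set is compact because $\Sigma$ is a Cauchy hypersurface of the globally hyperbolic spacetime $(M,g)$. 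Moreover $\Sigma$ remains smooth and $\check g_{\eps}$-spacelike for $\eps$ small, and for $\eps$ small enough that $q\in I^+_{\check g_{\eps}}(\Sigma)$ standard smooth-metric Lorentzian geometry produces a $\check g_{\eps}$-timelike $\check g_{\eps}$-geodesic $\gamma_{\eps}:[0,1]\to M$ from some $p_{\eps}\in\Sigma$ to $q$ that realizes the $\check g_{\eps}$-distance to $\Sigma$ and starts $\check g_{\eps}$-orthogonally to $\Sigma$.

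Next I would extract a $C^1$-convergent subsequence, exactly as in the proof of Proposition \ref{prop:existmaxgeods}. The images $\gamma_{\eps}([0,1])$ all lie in the compact set $J^+(\Sigma)\cap J^-(q)$; parametrizing proportional to $h$-arclength, the bound from \cite[Lemma 2.7]{Clemens} gives a uniform bound on $h(\dot\gamma_{\eps},\dot\gamma_{\eps})$. Feeding this into the compact-exhaustion Arzela-Ascoli argument of Lemma \ref{lem:ab}, together with the $\check g_{\eps}$-geodesic equation and the $C^0_{\mathrm{loc}}$-convergence of the Christoffel symbols, yields a subsequence $\gamma_{\eps_n}$ converging in $C^1([0,1])$ to a $g$-geodesic $\gamma:[0,1]\to M$ (arguing as in Proposition \ref{prop:existmaxgeods} that the limit is defined on all of $[0,1]$, using the impossibility of imprisonment of an inextendible causal curve in the compact set $J(\Sigma,q)$). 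Passing to a further subsequence we have $p_{\eps_n}\to p\in\Sigma$, and the initial velocities $\dot\gamma_{\eps_n}(0)$, which are $\check g_{\eps_n}$-orthogonal to $T_{p_{\eps_n}}\Sigma$, converge to some $v\in T_pM$ that is $g$-orthogonal to $T_p\Sigma$ by smoothness of $\Sigma$ together with $\check g_{\eps_n}\to g$ in $C^0$.

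To conclude that $\gamma$ is $g$-maximizing I would again mimic Proposition \ref{prop:existmaxgeods}. First observe that $\tau_g(\Sigma,q)<\infty$ because $\Sigma\cap J^-(q)$ is compact and $\tau_g$ is upper semi-continuous for globally hyperbolic $C^1$-metrics. Approximating $\tau_g(\Sigma,q)$ from below by $L_g(c)$ for a $g$-timelike curve $c$ from $\Sigma$ to $q$ with $g(\dot c,\dot c)\le -\tilde C<0$, such $c$ is $\check g_{\eps_n}$-timelike for large $n$, so $L_g(c)=\lim L_{\check g_{\eps_n}}(c)\le\liminf\tau_{\check g_{\eps_n}}(\Sigma,q)$, giving $\tau_g(\Sigma,q)\le\liminf\tau_{\check g_{\eps_n}}(\Sigma,q)$. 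On the other hand, the $C^1([0,1])$-convergence $\gamma_{\eps_n}\to\gamma$ shows $\tau_{\check g_{\eps_n}}(\Sigma,q)=L_{\check g_{\eps_n}}(\gamma_{\eps_n})\to L_g(\gamma)\le\tau_g(\Sigma,q)$, so $L_g(\gamma)=\tau_g(\Sigma,q)$, i.e.\ $\gamma$ maximizes the distance to $\Sigma$, and it starts orthogonally to $\Sigma$ because $v=\dot\gamma(0)$ is $g$-normal to $T_p\Sigma$.

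The main technical obstacle, as in Proposition \ref{prop:existmaxgeods}, lies in steps (ii)-(iii) of the extraction: one has to combine the geometric $h$-velocity bound with the (varying) $\check g_{\eps_n}$-geodesic equations to get uniform $C^1$-bounds that allow an Arzela-Ascoli extraction, and then pass the $\check g_{\eps_n}$-orthogonality condition to the limit along the varying basepoints $p_{\eps_n}\to p$. The latter uses only smoothness of $\Sigma$ and $C^0_{\mathrm{loc}}$-convergence of the metrics, but requires a little care precisely because $p_{\eps_n}$ is not fixed.
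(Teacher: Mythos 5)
Your proposal follows the paper's intended route: the paper's own proof of the corollary is a one-line remark that one should repeat the proof of Proposition~\ref{prop:existmaxgeods} with $\Sigma$ in place of $p$, and that is exactly what you do, including the extraction via Lemma~\ref{lem:ab}, the non-imprisonment contradiction to get existence on all of $[0,1]$, the comparison of $L_{\check g_{\eps_n}}$ and $L_g$ to conclude maximality, and the passage to the limit of the orthogonality condition along the moving basepoints $p_{\eps_n}\to p$.

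One small imprecision worth flagging: the sentence where you invoke \cite[Lemma~2.7]{Clemens} to bound $h(\dot\gamma_\eps,\dot\gamma_\eps)$ after reparametrizing proportional to $h$-arclength does not quite slot into Lemma~\ref{lem:ab}, which works with the affinely parametrized velocity curves $\dot\gamma_\eps\colon[0,1]\to TM$. In the proof of Proposition~\ref{prop:existmaxgeods} one keeps the affine parametrization, \emph{assumes} $\{\dot\gamma_\eps\}$ leaves every compact set, applies Lemma~\ref{lem:ab} to get an inextendible $g$-geodesic imprisoned in $J(\Sigma,q)$, and derives a contradiction with global hyperbolicity; the Clemens bound is used separately, only for the arbitrary competitor curve $c$ in the maximality estimate. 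Since you do also describe the non-imprisonment contradiction argument in the same paragraph, you clearly have the right mechanism in mind, but the $h$-arclength step as written is tangential and could mislead a reader into thinking it yields the needed compactness directly. Also, in the maximality part, the paper first produces a $g$-maximizing \emph{geodesic} $c$ via the $\hat g_n$-approximation and uses that one competitor, whereas you approximate $\tau_g(\Sigma,q)$ from below by uniformly timelike curves $c$ with $g(\dot c,\dot c)\le-\tilde C$; this is a valid variant but tacitly uses that such uniformly timelike competitors realize the supremum defining $\tau_g(\Sigma,q)$, which deserves a word (e.g.\ reparametrize a long causal curve by $g$-arclength or invoke a $C^1$-limit-curve argument as in the paper).
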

\begin{Corollary}\label{cor:Smaxgeod} Let $(M,g)$ be globally hyperbolic with a $C^1$-metric $g$ and let $S \sse M$ be a closed, spacelike $(n-2)$-dimensional submanifold. Then for any $q\in J^+(S)\setminus I^+(S)$ there exists at least one null geodesic from $S$ to $q$ maximizing the distance to $S$. Further, such a geodesic can be obtained  as a $C^1$-limit of a sequence of $\check{g}_{\eps_n}$-null $\check{g}_{\eps_n}$-geodesics $\gamma_{\eps_n}$ maximizing the $g_{\eps_n}$-distance to $S$ and thus must in particular start orthogonally to $S$. 
\end{Corollary}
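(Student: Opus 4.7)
The plan is to mimic almost verbatim the last case ($p<q$ but $p \not\ll q$) in the proof of Proposition \ref{prop:existmaxgeods}, with $S$ playing the role of $p$. First I would fix a smooth globally hyperbolic $\hat{g}$ with $g\prec \hat{g}$ and choose the approximating family $\check{g}_\eps$ so that $\check{g}_{\eps_2}\prec \check{g}_{\eps_1}\prec g$ whenever $\eps_2<\eps_1$, which is legitimate on the compact set $J_{\hat g}(K,q')$ we will work in (for $K\Subset S$ a compact neighborhood of the candidate foot point and $q'$ slightly to the future of $q$) by \cite[Lemma~1.4]{Clemens}. Since each $\check{g}_\eps$ is globally hyperbolic with narrower lightcones than $g$, standard smooth theory applies to the approximating metrics.

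Next I would construct a sequence $q_k\to q$ with $q_k\in \partial J^+_{\check{g}_{\eps_k}}(S)$ and $\eps_k\to 0$, repeating the argument in the footnote of Proposition \ref{prop:existmaxgeods}: pick a neighborhood basis $U_k\searrow\{q\}$ and use that $q\in \overline{J^+(S)}\setminus I^+(S)$ together with $J^+(S)=\bigcap_\eps I^+_{\check g_\eps}(S)$ (which holds for closed sets by a limit curve argument analogous to \cite[Cor.~1.17, Prop.~1.21]{CG}, applied to curves emanating from the compact $S$) to find interior and exterior points in $U_k$ relative to $J^+_{\check g_{\eps_k}}(S)$ and connect them by an arc crossing $\partial J^+_{\check g_{\eps_k}}(S)$. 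For each smooth $\check{g}_{\eps_k}$, classical Lorentzian geometry gives an $L_{\check g_{\eps_k}}$-maximizing $\check{g}_{\eps_k}$-null $\check{g}_{\eps_k}$-geodesic $\gamma_{\eps_k}:[0,1]\to M$ from $S$ to $q_k$, starting $\check{g}_{\eps_k}$-orthogonally to $S$.

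Now I would run the compactness/limit argument. All curves $\gamma_{\eps_k}$ are $\hat{g}$-causal and lie in the compact set $J_{\hat g}(S\cap K, q')$, and the same reasoning as in Proposition \ref{prop:existmaxgeods} (using Lemma \ref{lem:ab} together with the imprisonment contradiction available from global hyperbolicity) gives that the lifted curves $\dot{\gamma}_{\eps_k}$ are eventually contained in a common compact subset of $TM$. By Arzel\`a–Ascoli applied as in Lemma \ref{lem:aa}, a subsequence converges in $C^1([0,1])$ to a $g$-causal $g$-geodesic $\gamma:[0,1]\to M$ from some point $p_\infty\in S$ to $q$. The $C^1$ convergence, together with the fact that each $\gamma_{\eps_k}$ starts $\check{g}_{\eps_k}$-orthogonally to $S$, yields that $\dot\gamma(0)\perp_g T_{p_\infty}S$. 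Since $q\notin I^+(S)$ we have $\tau_S(q)=0$, and because maximizers have fixed causal character (see \cite{GL:18}), $\gamma$ cannot be timelike; as $\gamma$ is causal and non-constant (it joins $p_\infty\neq q$), it must be $g$-null, and hence is trivially distance-to-$S$ maximizing.

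The main technical obstacle I anticipate is establishing the approximating identity $J^+(S)=\bigcap_\eps I^+_{\check g_\eps}(S)$ and correspondingly the existence of the points $q_k\in \partial J^+_{\check g_{\eps_k}}(S)$ converging to $q$: the argument in the footnote of Proposition~\ref{prop:existmaxgeods} was phrased for a single point $p$, and one needs a mildly more careful limit curve construction (moving the foot points along a compact exhaustion of $S$ and extracting a further diagonal subsequence) to transfer it to the submanifold setting. Everything else is either verbatim repetition of the earlier arguments or standard smooth-metric input for the approximating spacetimes.
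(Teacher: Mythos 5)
Your overall strategy is exactly the one the paper uses: the paper simply remarks that Corollaries \ref{cor:sigmamaxgeod} and \ref{cor:Smaxgeod} follow by ``the same proof, replacing $p$ with $\Sigma$ or $S$,'' and you have correctly unpacked the $p<q$, $p\not\ll q$ case with $S$ in place of $p$. The compactness/limit argument, the orthogonality at $S$ via $C^1$-convergence, and the fixed-causal-character argument to conclude nullness are all in line with the paper.

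There is one genuine error in the technical lemma you invoke. You claim $J^+(S)=\bigcap_\eps I^+_{\check g_\eps}(S)$, but since $\check g_\eps\prec g$ for every $\eps$, each $I^+_{\check g_\eps}(S)$ is contained in $I^+_g(S)$, so the intersection lies inside $I^+_g(S)$ and can never equal $J^+_g(S)$; in particular the very point $q\in J^+(S)\setminus I^+(S)$ you start from cannot belong to that intersection. What the footnote of Proposition \ref{prop:existmaxgeods} actually needs, and what \cite[Cor.~1.17, Prop.~1.21]{CG} provide, is the \emph{union} statement $I^+_g(S)=\bigcup_\eps I^+_{\check g_\eps}(S)$: this is used to promote an interior point $q_k^i\in U_k\cap I^+_g(S)$ (which exists since $q\in\partial J^+_g(S)$) to a point in $I^+_{\check g_{\eps_k}}(S)$ for a small enough $\eps_k$. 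You have conflated it with the intersection identity $\bigcap_\eps I^+_{\hat g_\eps}(p)=J^+_g(p)$ from the remark following the proposition, which pertains to the \emph{outer} approximations $\hat g_\eps\succ g$, not to $\check g_\eps$. Once the correct identity is substituted, your construction of the points $q_k\in\partial J^+_{\check g_{\eps_k}}(S)$ goes through; and note that because $S$ here is compact (``closed'' $(n-2)$-submanifold in the Penrose setting), there is no need for the compact exhaustion of $S$ you anticipated — the foot points of the $\gamma_{\eps_k}$ already live in the compact set $S$, so passing to a convergent subsequence is immediate.
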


These are easily recognized as corresponding to the situations in the Hawking theorem and the Penrose theorem, respectively.

\subsection{Stability of causal geodesic completeness}
In this section we will refine the existence time results of the previous section to obtain stability of causal (or null) geodesic completeness under the additional assumption of global hyperbolicity.

This requires us to work with a finer topology than the $C^1_\mathrm{loc}$ one we used in the previous section: 
Given a manifold $M$ we denote by $\mathrm{Pseud}^k(M)$ the space of all semi-Riemannian $C^k$-metrics on $M$. The fine $C^r$-topologies (sometimes also referred to as Whitney topologies)
can be defined on $\mathrm{Pseud}^k(M)$ ($k\geq r$) using the following basis of neighborhoods:  Fix a countable, locally finite atlas $\{U_\alpha,\psi_{\alpha} \}$ such that each
of the charts has compact closure in a larger chart. 
A basis $\{W(g,f)\}$ for the fine $C^r$–topology 
is obtained by taking semi-Riemannian metrics $g$ and positive functions $f\in C^0(M)$ and setting $W(g,f)$ to be the set of all $g'\in \mathrm{Pseud}^k(M)$  such that all components and mixed partials up to order $r$ for $g$ and $g'$ in the fixed given atlas are $f$-close. For the $C^1$-fine topology this just means that
\[||g_{ij}-g'_{ij}||_{\infty,U_\alpha}\leq f \; \text{ and }\; ||\partial_rg_{ij}-\partial_rg'_{ij}||_{\infty,U_\alpha}\leq f. \]
Clearly $C^1$-fine convergence implies convergence in $C^1_\mathrm{loc}$, but the converse is not true.

In \cite{BEStab}, Beem and Ehrlich investigated the stability of geodesic completeness for smooth semi-Riemannian metrics and showed in particular that any null (resp.~causal) geodesically complete globally hyperbolic smooth Lorentzian metric $g$ possesses a $C^1$-fine  neighborhood $U$ in the space $\mathrm{Lor}^\infty(M)$ of smooth Lorentzian metrics such that each $g'\in U$ is null (resp. causal) geodesically complete. 

Using results/techniques from the previous section (Lemmas \ref{lem:aa} and \ref{lem:ab}) to circumvent arguments using uniqueness of geodesics (and thus requiring higher differentiability of the metric) we obtain that any globally hyperbolic $C^1$-Lorentzian metric that is causal geodesically complete possesses a $C^1$-fine  neighborhood $U$ in the space $\mathrm{Pseud}^1(M)$ such that each $g'\in U$ is a causal geodesically complete Lorentzian metric.  While this is a bit of a tangent and will not be required for the proof of the $C^1$-singularity theorems, it provides a neat stability result in a very  natural setting: Contrary to the older result of \cite{BEStab} the regularity of the metrics now coincides with the regularity of the convergence.

\begin{Theorem}\label{thm:stability}
Let $(M,g)$ be globally hyperbolic with a $C^1$-metric $g$ which is causal geodesically complete. Then there exists a $C^1$-fine neighborhood $U\subseteq \mathrm{Pseud}^1(M)$ around $g$ such that all $g'\in U$ are globally hyperbolic Lorentzian metrics that are causal geodesically complete.
\end{Theorem}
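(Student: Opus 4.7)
The plan is to show first that being Lorentzian and having lightcones narrower than those of a fixed smooth globally hyperbolic metric passes to a $C^0$-open (and hence $C^1$-fine open) neighborhood, and then to build a single $C^1$-fine neighborhood on which causal completeness holds by patching local stability statements obtained from Propositions~\ref{prop:longexistence} and \ref{prop:relcompactfunnels} over a compact exhaustion of $M$.

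For the Lorentzian/globally hyperbolic part I would fix, as in the proof of Proposition~\ref{prop:existmaxgeods}, a smooth globally hyperbolic metric $\hat g$ with $g\prec\hat g$. Both being Lorentzian and the condition $g'\prec\hat g$ are open in the $C^0$-topology, so there is a $C^1$-fine neighborhood $U_0$ of $g$ such that every $g'\in U_0$ is Lorentzian with $g'\prec\hat g$; each such $g'$ is then globally hyperbolic by Remark~\ref{rem:globhypdef}.

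For causal geodesic completeness I would fix a compact exhaustion $\{K_m\}_{m\in\N}$ of $M$ with $K_m\comp K_{m+1}^\circ$, and set $V_m:=\{v\in TM:\pi(v)\in K_m,\ \|v\|_h=1,\ g(v,v)\le\eta_m\}$ with $\eta_m>0$ chosen small enough that $V_m$ still contains every $h$-unit $g'$-causal vector over $K_m$ for all $g'\in U_0$. I would then rerun the Arzela--Ascoli argument of Lemmas~\ref{lem:aa}--\ref{lem:ab} and Propositions~\ref{prop:longexistence}--\ref{prop:relcompactfunnels} with the approximating sequence $g_{\eps_n}$ replaced by an arbitrary sequence $g'_n\in U_0$ converging to $g$ in $C^1_{\mathrm{loc}}$, observing that any limit $v$ of $g'_n$-causal $h$-unit vectors $v_n\in V_m$ satisfies $g(v,v)\le 0$ (since $g'_n(v_n,v_n)\le 0$ and $g'_n\to g$ uniformly on $K_m$), so the limiting $g$-geodesics are causal and hence, by hypothesis, complete. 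This produces, for each $m$, a constant $\delta_m>0$ and a compact $L_m\subset M$ such that every $g'\in U_0$ satisfying $\|g-g'\|_{C^1,L_m}<\delta_m$ has the property that every $h$-unit $g'$-causal geodesic starting over $K_m$ is defined on $[-m,m]$ and remains inside $L_m$.

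To conclude I would patch using the fine topology: choose a positive continuous $f:M\to(0,\infty)$ with $f|_{L_m}\le\delta_m$ for every $m$, which (with $\delta_m$ chosen decreasing) is arranged by setting $f\equiv\delta_m$ on $L_m\setminus L_{m-1}$ and smoothing across the boundaries. Put $U:=U_0\cap W(g,f)$, a $C^1$-fine neighborhood of $g$. For any $g'\in U$ and any $h$-unit $g'$-causal vector $v$ based at $p\in M$, choosing $m_0$ minimal with $p\in K_{m_0}$, for every $m\ge m_0$ the condition on $L_m$ is in force, so the $g'$-geodesic through $v$ is defined on $[-m,m]$; letting $m\to\infty$ gives completeness on all of $\R$, and affine reparametrization invariance then transfers this to \emph{every} $g'$-causal geodesic. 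The main obstacle is the refinement carried out in the previous paragraph -- producing the uniform constants $\delta_m,L_m$ out of the sequential, $C^1_{\mathrm{loc}}$-type statements of Section~\ref{sec:geod} while handling the fact that $g'$-causal cones are not contained in the $g$-causal cones -- which is resolved by slightly enlarging the compact initial-data set from the $g$-causal cone to $V_m$ and using that Arzela--Ascoli limits of $g'_n$-causal initial data are $g$-causal, so that the causal completeness hypothesis still applies to the limit $g$-geodesics.
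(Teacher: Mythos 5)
Your first step---fixing a smooth globally hyperbolic $\hat{g}$ with $g\prec\hat{g}$ and passing to a $C^0$-fine neighborhood $U_0$ on which every $g'$ is Lorentzian, satisfies $g'\prec\hat{g}$, and is hence globally hyperbolic---matches the paper. The patching step, however, contains a genuine gap. You produce, for each $m$, a constant $\delta_m>0$ and a compact $L_m\sse M$ such that $\|g-g'\|_{C^1,L_m}<\delta_m$ guarantees long existence for $g'$-causal geodesics starting over $K_m$, and then seek a positive continuous $f$ with $f|_{L_m}\leq\delta_m$ for every $m$. Your construction (setting $f\equiv\delta_m$ on $L_m\setminus L_{m-1}$) does \emph{not} satisfy this: on the inner sets $L_k\setminus L_{k-1}$ with $k<m$, $f$ equals $\delta_k\geq\delta_m$, so $\sup_{L_m}f=\delta_1\geq\delta_m$. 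Worse, since $L_1\sse L_m$ for all $m$, any $f$ with $f|_{L_m}\leq\delta_m$ for all $m$ must have $f|_{L_1}\leq\inf_m\delta_m$. Your local stability statement becomes strictly more demanding as $m$ grows (longer required existence time, larger compact $L_m$), so generically $\delta_m\to 0$ and no positive continuous $f$ exists. Requiring $C^1$-closeness on a \emph{growing} compact exhaustion $L_m$, simultaneously for all $m$, is simply not expressible as a single fine-topology neighborhood.

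The paper avoids this by making the stability estimate \emph{ring-localized}. It chooses the exhaustion $\{K_i\}$ so that $J_{\hat{g}}^+(K_i)\cap J_{\hat{g}}^-(K_i)=K_i$; then any $g'$-causal $g'$-geodesic, being $\hat{g}$-timelike because $g'\prec\hat{g}$, can never re-enter $K_i^\circ$ after leaving it. The inner Lemma of the paper's proof therefore only requires $\|g-g'\|_{C^1(K_{i+2}\setminus K_{i-1}^\circ)}\leq\eps_i$, and it tracks a $g'$-causal geodesic ring by ring, restarting from $\partial K_i$ with a controlled velocity bound $c_i+1$ and gaining a fixed parameter-time increment in each ring. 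Since any point of $M$ lies in only a bounded number of the rings $K_{i+2}\setminus K_{i-1}^\circ$, the function $f$ may be taken to be a finite pointwise minimum of the $\eps_i$'s and thus remains positive. The subtlety you flag at the end---that $g'$-causal vectors need not be $g$-causal---is real but is not what blocks the patching; the paper handles it simply by observing that all the relevant vectors lie in the $\hat{g}$-causal cone. What is missing from your argument is the ring localization together with the non-re-entry property of the exhaustion, which jointly make the fine-neighborhood requirement locally finite.
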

\begin{proof}
First note that by \cite[Theorem~4.5]{Clemens} there exists a smooth metric $\hat{g}$ with $g\prec \hat{g}$ and $\hat{g}$ globally hyperbolic. Given any such $\hat{g}$ there always exists a $C^0$-fine (and hence $C^1$-fine) neighborhood $U\subseteq \mathrm{Pseud}^1(M)$ around $g$ such that all $g'\in U$ are Lorentzian metrics and satisfy $g'\prec \hat{g}$ (this essentially follows from \cite[Lemma~1.4]{Clemens}). Thus all $g'\in U$ are globally hyperbolic and for any $K\subseteq M$ one has $J_{g'}^+(K)\cap J_{g'}^-(K)\subseteq J_{\hat{g}}^+(K)\cap J_{\hat{g}}^-(K)$.

Since $J_{\hat{g}}^+(K)\cap J_{\hat{g}}^-(K)$ is compact for any compact $K$, we can construct a compact exhaustion $\{K_i\}_{i=1}^\infty$ of $M$ 
with the following properties: $K_{i-1} \subseteq K_i^\circ$ and $J_{\hat{g}}^+(K_i)\cap J_{\hat{g}}^-(K_i)=K_i$. Note that for $g'\in 
U$ any $g'$-causal $g'$-geodesic is $\hat{g}$-timelike and hence cannot re-enter a $K_i^\circ$ after leaving it.

 Set $c_0=0$ and inductively define constants $0<c_i<\infty$ such that all $g$-causal $g$-geodesics $\gamma $ starting in $K_i$ with $||\dot{\gamma}(0)||_h\leq c_{i-1}+1$ satisfy $||\dot{\gamma}||_h\leq c_i$ while $\gamma $ remains in $K_i$.\footnote{To see that this is possible note that because of Proposition~\ref{prop:relcompactfunnels} it is sufficient to show that there exists $N\in \N$ such that $\gamma|_{[N,\infty)}\sse M\setminus K_i $ for all such $\gamma $. Assume this were not true, then there exist $g$-causal $g$-geodesics $\gamma_n:[0,\infty)\to M$ with $\gamma_n|_{[0,n)}\sse K_i$ and by a limiting process as in Lemma \ref{lem:ab} we obtain an inextendible $g$-causal $g$-geodesic imprisoned in the compact set $K_i$. This contradicts global hyperbolicity.} 
 Further, by Proposition \ref{prop:relcompactfunnels} and skipping over some of the $K_i$ if necessary, we may assume that all that all $g$-causal $g$-geodesics $\gamma $ starting in $K_{i}$ with $||\dot{\gamma}(0)||_h\leq c_{i-1}+1$ remain in $K_{i+1}^\circ$ for all parameters $0\leq s\leq i$.
 
We have the following
\begin{Lemma}
For all $i$ there exists $\eps_i$ such that for all $g'\in U$ with $||g-g'||_{C^1(K_{i+2}\setminus K_{i-1}^\circ )}\leq \eps_i$ all $g'$-causal $g'$-geodesics $\gamma$ contained in $M\setminus K_{i-1}^\circ $ and starting in $K_i \setminus  K_{i-1}^\circ $ with $||\dot{\gamma}(0)||_h\leq c_{i-1}+1$
\begin{itemize}
	\item[(i)] exist and remain in $K_{i+2}^\circ$ for parameters in $[0,i]$ and
	\item[(ii)] for each such geodesic $\gamma $ there exists $s_\gamma $ such that $\gamma|_{[0,s_\gamma]}\subseteq K_{i} \setminus K_{i-1}^\circ $, $\gamma(s_\gamma)\in \partial K_i \subseteq K_{i+1}\setminus K_i^\circ $ and $||\dot{\gamma}(s_\gamma)||_h\leq c_i+1$.
\end{itemize}
\end{Lemma}
\begin{proof}
	Assume not. Then there exists $i\in \mathbb{N}$, metrics $g'_n\in U$ with $||g-g'_n||_{C^1(K_{i+2}\setminus K_{i-1}^\circ )}\leq 1/n$ and $g'_n$-causal $g'_n$-geodesics $\gamma_n:[0,b_n)\to M\setminus K_{i-1}^\circ $ with $\gamma_n(0)\in K_i \setminus K_{i-1}^\circ $, $\gamma_n(0)\to v\in K_{i}$,  and $||\dot{\gamma}_n(0)||_h\leq c_{i-1}+1$ such that either (i) or (ii) does not hold.
	
	Assume (i) doesn't hold. Then there exist $t_n\leq i$ such that $\gamma_n(t_n)\notin K_{i+2}^\circ $, but still $\gamma_n(t_n)\in K_{i+2}$ and hence $\gamma_n|_{[0,t_n]}\subseteq K_{i+2}\setminus K_{i-1}^\circ$. If $||\dot{\gamma}_n|_{[0,t_n]}||_h$ remains bounded, then, by essentially Lemma \ref{lem:aa}, this implies the existence of a $g$-causal $g$-geodesic $\gamma$ with $\gamma(t)\in \partial K_{i+1}$, so $\notin K_{i+1}^\circ$, for some $t< \lim t_n \leq i$, a contradiction. And if $||\dot{\gamma}_n|_{[0,t_n]}||_h$ does not remain bounded, then, by essentially Lemma \ref{lem:ab}, there exists an inextendible $g$-causal $g$-geodesic that does not exist on $[0,\infty)$, also a contradiction.
	
	Assume now that (ii) doesn't hold. Because each $\gamma_n$ must leave every compact subset by global hyperbolicity of $g'_n$ an $s_n$ such that $\gamma_n|_{[0,s_n]}\subseteq K_{i}\setminus K_{i-1}^\circ \subseteq K_{i+2}\setminus K_{i-1}^\circ$ and $\gamma_n(s_n)\in \partial K_i \subseteq K_{i+1}\setminus K_i^\circ $ must exist. Now, similar to when looking at (i), when assuming $||\dot{\gamma}(s_n)||_h> c_i+1$ we get a contradiction: Either to causal completeness of $g$ (if $||\dot{\gamma}_n|_{[0,s_n]}||_h$ does not remain bounded), or to $||\dot{\gamma}||_h\leq c_i$ (in case $||\dot{\gamma}_n|_{[0,s_n]}||_h$ remains bounded and $\dot{\gamma}:[0,s)\to TM$ can be extended to $s=\lim s_n$, as then $||\dot{\gamma}(s)||_h=\lim ||\dot{\gamma}_n(s_n)||_h$), or to the non-existence of inextendible causal curves imprisoned in a compact set (in case $||\dot{\gamma}_n|_{[0,s_n]}||_h$ remains bounded but $\dot{\gamma}:[0,s)\to TM$ cannot be extended to $s=\lim s_n$).
\end{proof}
The desired result now follows almost immediately: Choose a function $f$ such that for all $g'\in W(g,f)$ we have $||g-g'||_{C^1(K_{i+2}\setminus K_{i-1}^\circ)}\leq \eps_i$ for all $i$. Let $g'\in U\cap W(g,f)$ and let $\gamma :(a,b)\to M$ be an inextendible $g'$-causal $g'$-geodesic. We show that $\gamma $ is complete. First, there exists $N\in \mathbb{N}$ such that $\gamma \subseteq M\setminus \bigcup _{i< N} K_i$ and $\gamma$ meets $K_N$. Reparametrizing we can assume $\gamma(0)\in K_N$ and $||\dot{\gamma}(0)||_h\leq c_{N-1}+1$. From the lemma we get that $\gamma $ exists on $[0,N]$, hence $b\geq N$. Further the lemma also tells us that $\gamma|_{[s_\gamma,b)}$ satisfies the assumptions necessary to apply the lemma with $N+1$ instead of $N$, hence $b\geq N+1$. Proceeding like this, we get $b\geq N+n$ for all $n\in \N$, hence $\gamma$ is future complete. Past completeness follows analogously.
\end{proof}

\begin{remark}
	As in \cite{BEStab} the result is also true for null geodesic completeness and it should also be
	sufficient to assume that $g$ is causally disprisoning and causally pseudoconvex (instead of globally hyperbolic), but one would have to be a bit more careful in the construction of the compact exhaustion and one needs to check that, as in \cite[Lemma~3.1]{BEStab}, being causally disprisoning and causally pseudoconvex is $C^1$-fine stable for $g\in C^1$.
\end{remark} 
 
For the remainder of the paper we will return to using $C^1_\mathrm{loc}$ convergence instead of $C^1$-fine convergence because we only ever need estimates on compact sets.

\section{Curvature for $C^1$-metrics} \label{sec:curv}
Since we do not necessarily wish to assume that $g\in W^{2,p}_{\mathrm{loc}}$ for some $p$,\footnote{
If $g\in C^1\cap W^{2,p}_{\mathrm{loc}}$, then curvature is well defined in $L^p_{\mathrm{loc}}$. Any reader only interested in this case can safely skip most of this section and just read all curvature bounds, in particular those in Def.~\ref{def:sec} and Def.~\ref{def:genNEC}, as pointwise a.e.~bounds.} the definition of curvature is not completely straightforward. Nevertheless, $C^1$ is well above the minimal regularities considered in general relativity\footnote{The most general class of metrics that allow for a sensible definition of distributional curvature was identified in \cite{GT87} to be $g\in W^{1,2}_{\mathrm{loc}}$ satisfying an appropriate non-degeneracy condition (see also \cite{SV09}). Any $C^1$ metric belongs to this class.} and it is easy to see that the Riemann tensor and the Ricci tensor as well as scalar curvature are still well defined \emph{as distributions}. 

In this section we are going to very briefly review what exactly this entails, focusing especially on what it means for a distribution to be bounded (by a constant or another distribution), how one may formulate bounds on timelike Ricci curvature equivalent to the strong energy condition in the smooth case\footnote{We postpone looking at possible ways to define the bounds on Ricci curvature necessary to prove Penrose's theorem to section \ref{sec:penrose}.} and how one regularizes both distributions and non-smooth functions/tensor fields on manifolds. Nothing in this section is new, we mainly follow \cite{TheBook_GKOS}, Section 3.1, but some information can also be found in, e.g., \cite{DieudIII, deRham}. See also \cite{LeFlochMadare} for a concise summary on how to define distributional curvature for more general connections in case $M$ is orientable.

\subsection{Very brief review: Distributions on manifolds}
Analogous to the definition of distributions on an open subset of
$\mathbb{R}^{n}$ the space of distributions on a manifold $M$ is
defined as the dual space of certain smooth 'test objects'. However,
if one wants to preserve the embedding of smooth functions into distributions
via integration, the space of these test objects has to be the space
of compactly supported smooth sections of the \emph{volume bundle} $\vol M$ over $M$,
rather than $C^\infty_c\left(M\right)$ itself. 

If $M$ is orientable, then $\mathrm{Vol}\left(M\right)$ is simply the well-known vector bundle of exterior $n$-forms $\Lambda^{n}T^{*}M$. For non-orientable manifolds $\vol M$ is still a one-dimensional real vector bundle over $M$, but it is characterized by the transition functions $
\phi_{\alpha\beta}(x)=\left|\det D\left(\psi_{\beta}\circ\psi_{\alpha}^{-1}\right)\left(\psi_{\alpha}(x)\right)\right|$ 
 (instead of $\phi_{\alpha\beta}(x)=\det D\left(\psi_{\beta}\circ\psi_{\alpha}^{-1}\right)\left(\psi_{\alpha}(x)\right)$ which are the transition functions for $\Lambda^{n}T^{*}M$). Locally, any smooth section of the volume bundle (i.e., a volume density) can be written as \[\mu|_U=f \, |dx^1\wedge \dots \wedge dx^n|\] for a smooth function $f$ on $U$. 
 
  The integral of a compactly supported volume density $\mu\in\Gamma_{c}\left(M,\vol M\right)$, with local expressions $^\alpha\mu |dx^1\wedge \dots \wedge dx^n|$,
is defined analogous to the integration of an $n$-form on an orientable
manifold by
\[
\int_{M}\mu=\sum_{\alpha}\int_{U_{\alpha}}\xi_{\alpha}\mu:=\sum_{\alpha}\int_{\psi_{\alpha}\left(U_{\alpha}\right)}\xi_{\alpha}\left(\psi_{\alpha}^{-1}(x)\right)\,^\alpha\mu \,\left(\psi_{\alpha}^{-1}(x)\right)dx
\]
for a given atlas $\left(U_{\alpha},\psi_{\alpha}\right)$ and a subordinate
partition of unity $\xi_{\alpha}$. This does not depend on the choice of the atlas or the partition of unity. The volume density $\mu$ is called \emph{non-negative} (or \emph{positive}) if every $^\alpha\mu \in C^\infty(U_\alpha)$ is a non-negative (or positive) function. This is equivalent to $\int_U \mu \geq 0$ (resp.~$>0$) for any open $U\subseteq M$.
\\

The space of
\emph{distributions} on $M$ is then
defined as the space of continuous linear functionals on $\Gamma_{c}\left(M,\vol M\right)$,\footnote{Details on the topology on $\Gamma_{c}\left(M,\vol M\right)$ can be found in \cite{TheBook_GKOS}.
However, these details will not be important here.}  i.e.,
\[
\mathcal{D}'\left(M\right):=\Gamma_{c}\left(M,\vol M\right)'.
\]
We can embed $C^\infty \hookrightarrow \mathcal{D}'\left(M\right)$ by identifying a smooth function $f$ with the functional $\mu \mapsto \left\langle f,\mu \right \rangle :=\int_M f\mu $. 
More generally, for the vector bundle $T^r_sM$ over $M$, one defines the space of 
\emph{$T^r_sM$-valued distributions} or \emph{$(r,s)$-tensor distributions} by
\[
\mathcal{D}'\mathcal{T}_{s}^{r}(M)\equiv\mathcal{D}'\left(M,T_{s}^{r}M\right):=\Gamma_{c}\left(M,T_{r}^{s}M\otimes\vol M\right)'.
\]
We will mainly use the following equivalent description of tensor distributions:
\[
\mathcal{D}'\mathcal{T}_{s}^{r}(M) \cong\mathcal{D}'\left(M\right)\otimes_{\mathcal{C}^{\infty}}\mathcal{T}_{s}^{r}\left(M\right).
\]

The next proposition (cf.~\cite[Section 3.1.4]{TheBook_GKOS})
will allow us to do everything in charts and basically treat tensor distributions like families of distributions on open balls in $\R^n$.

\begin{Proposition}\label{prop:distribcharts} \begin{itemize}
\item[(i)] Given an atlas $\left(U_{\alpha},\psi_{\alpha}\right)$ of $M$, locally $\res T{U_\alpha}\in\mathcal{D}'\left(U_{\alpha},T_{s}^{r}M\right)$
can be written as
\[
\res T{U_{\alpha}}=\left(^\alpha T\right)_{j_{1}\dots j_{s}}^{i_{1}\dots i_{r}}\,\partial_{i_{1}}\otimes\dots\otimes\partial_{i_{r}}\otimes dx^{j_{1}}\otimes\dots\otimes dx^{j_{s}}
\]
with local coefficients $\left(^\alpha T \right)_{j_{1}\dots j_{s}}^{i_{1}\dots i_{r}}\in\mathcal{D}'\left(U_{\alpha}\right)$.

\item[(ii)] Extending the pushforward $(\psi_{\alpha})_*:C^\infty_c(U_\alpha)\to C^\infty_c(\psi_\alpha(U_\alpha))$ to distributions (by setting $ \left\langle (\psi_\alpha)_{*}u,\phi \right\rangle :=\left\langle u,(\Psi_\alpha)^{*}\phi \right\rangle $, where $\Psi_\alpha$ is a chart for $\vol M$ associated to $\psi_\alpha$) further allows us to identify $\left(^\alpha T\right)_{j_{1}\dots j_{s}}^{i_{1}\dots i_{r}}\in \mathcal{D}'(U_\alpha)$ with its chart representations \[(^{\alpha}\tilde{T})_{j_{1}\dots j_{s}}^{i_{1}\dots i_{r}}:=(\psi_{\alpha})_*\big(\left(^{\alpha}T\right)_{j_{1}\dots j_{s}}^{i_{1}\dots i_{r}}\big) \in \mathcal{D}'(\psi_\alpha(U_\alpha)).\]

\item[(iii)] As in the case of smooth tensor fields, this process is reversible: For any family $\{\big((^{\alpha}\tilde{T})_{j_{1}\dots j_{s}}^{i_{1}\dots i_{r}}\big) \in \mathcal{D}'(\psi_\alpha(U_\alpha))^{n^{r+s}}\}_{\alpha \in A}$ obeying the right transformation rules there exists a tensor distribution $T$ on $M$ such that 
\[(\psi_{\alpha})_*\big(\left(^{\alpha}T\right)_{j_{1}\dots j_{s}}^{i_{1}\dots i_{r}}\big) =(^{\alpha}\tilde{T})_{j_{1}\dots j_{s}}^{i_{1}\dots i_{r}} \in \mathcal{D}'(\psi_\alpha(U_\alpha)).\]
\end{itemize}
\end{Proposition}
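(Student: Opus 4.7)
The plan is to treat all three parts as standard unpacking of the tensor-distribution definition plus a gluing argument, leaning on the identification $\mathcal{D}'\mathcal{T}^r_s(M)\cong \mathcal{D}'(M)\otimes_{C^\infty}\mathcal{T}^r_s(M)$ stated just before the proposition.

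For part (i), I would start by fixing a chart $(U_\alpha,\psi_\alpha)$ and noting that over $U_\alpha$ the bundle $T^r_s M$ is trivial with smooth coordinate frame $\partial_{i_1}\otimes\dots\otimes \partial_{i_r}\otimes dx^{j_1}\otimes\dots\otimes dx^{j_s}$. Under the tensor-product description, restricting a tensor distribution $T\in \mathcal{D}'\mathcal{T}^r_s(M)$ to $U_\alpha$ is just a $\mathcal{D}'(U_\alpha)$-linear combination of these basis sections, whose coefficients I would \emph{define} as the local coordinates $(^\alpha T)^{i_1\dots i_r}_{j_1\dots j_s}$. Concretely, testing $T$ against a compactly supported section of $T^s_r M\otimes \mathrm{Vol}(M)$ supported in $U_\alpha$ and written in the dual coordinate frame against a coordinate volume density $|dx^1\wedge\dots\wedge dx^n|$ isolates each coefficient as a distribution on $U_\alpha$.

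For part (ii), I would check that the given formula $\langle (\psi_\alpha)_* u,\phi\rangle := \langle u,(\Psi_\alpha)^*\phi\rangle$ really defines a distribution: continuity follows because $(\Psi_\alpha)^*$ maps $C^\infty_c(\psi_\alpha(U_\alpha))$ continuously into $\Gamma_c(U_\alpha,\mathrm{Vol}(U_\alpha))$ (this is where the volume-density bundle, rather than $\Lambda^n T^*M$, makes things work uniformly in the orientable and non-orientable cases, since the chart $\Psi_\alpha$ trivializes $\mathrm{Vol}(M)$ via the absolute-value transition functions). One also verifies that $(\psi_\alpha)_*$ is a topological isomorphism $\mathcal{D}'(U_\alpha)\to \mathcal{D}'(\psi_\alpha(U_\alpha))$ whose inverse is $(\psi_\alpha^{-1})_*$, which justifies the identification.

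For part (iii), I would perform the standard sheaf-theoretic gluing. Given a family $\{(^\alpha\tilde T)^{i_1\dots i_r}_{j_1\dots j_s}\}_\alpha$ obeying the appropriate transformation laws, first pull everything back to $U_\alpha$ via $(\psi_\alpha^{-1})_*$, so each chart carries a candidate local tensor distribution $T_\alpha\in \mathcal{D}'\mathcal{T}^r_s(U_\alpha)$. Fix a partition of unity $\{\xi_\alpha\}$ subordinate to the atlas and define
\[
\langle T,\mu\rangle := \sum_\alpha \langle T_\alpha,\xi_\alpha\mu\rangle
\]
for $\mu\in \Gamma_c(M,T^s_r M\otimes \mathrm{Vol}(M))$. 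The main technical point — and the only place where care is required — is to check that this does not depend on the choice of partition of unity or atlas; this reduces to showing that on any overlap $U_\alpha\cap U_\beta$ one has $T_\alpha=T_\beta$ as distributions, which is exactly what the assumed transformation rules encode once the change-of-chart Jacobian acting on tensor indices is combined with its absolute value acting on the volume density factor. I expect this bookkeeping between tensorial and density transformation to be the only mildly delicate step; everything else is a direct translation of the smooth case to distributions and I do not anticipate further obstacles.
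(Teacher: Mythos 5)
The paper does not prove this proposition itself; it simply cites Section~3.1.4 of Grosser--Kunzinger--Oberguggenberger--Steinbauer for it. Your argument is the standard one and is correct: (i) follows from local triviality of $T^r_sM$ together with the module isomorphism $\mathcal{D}'\mathcal{T}^r_s(M)\cong\mathcal{D}'(M)\otimes_{C^\infty}\mathcal{T}^r_s(M)$, (ii) from the pullback $(\Psi_\alpha)^*$ being a continuous linear isomorphism between the relevant test spaces, and (iii) from a partition-of-unity gluing whose consistency on overlaps is precisely the content of the assumed transformation law, which (as you correctly flag) must combine the tensorial Jacobian with the $|\det|$ factor coming from the density bundle when moving between $\mathcal{D}'(U_\alpha)$ and $\mathcal{D}'(\psi_\alpha(U_\alpha))$.
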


\subsection{Distributional curvature and curvature bounds}
As justified by the exposition in the previous subsection, we may do all our calculations and definitions locally. 
	So let $\Omega $ be an open ball in $\R^n$ and let $g_{ij} \in C^1(\Omega)$ be a Lorentzian metric on $\Omega$. Then the inverse exists and is again $C^1$ and all Christoffel symbols 
	\begin{equation}\label{eq:Gamma}
	\Gamma^k_{ij}= \frac{1}{2} g^{kl}(\partial_ig_{jk}+\partial_jg_{ik}-\partial_kg_{ij})
	\end{equation} are continuous on $\Omega$. Hence, any partial derivatives of the Christoffel symbols are well defined as distributions on $\Omega$ in the usual way (i.e., $\langle \partial_m \Gamma^k_{ij} ,\phi \rangle := -\int_\Omega \Gamma^k_{ij} \partial_m\phi $). We now define the distributional Riemann tensor by the usual expression, i.e.,
\begin{equation}\label{eq:Riem}
\mathrm{Riem}^m_{\;\;\;ijk} := \partial_j \Gamma^m_{ik}-\partial_k \Gamma^m_{ij}+\Gamma^m_{js}\Gamma^s_{ik}-\Gamma^m_{ks}\Gamma^s_{ij}\in \mathcal{D}'(\Omega ).
\end{equation}
If $g_{ij}$ comes from a $C^1$-metric on a manifold $M$, one can check (as in the smooth case) that this expression indeed behaves correctly under a change of charts, so all such local expressions together define a tensor distribution $\mathrm{Riem}[g]$ on $M$. 

To obtain the distributional Ricci tensor we simply contract the distributional Riemann tensor the usual way, i.e., $\Ric_{ij}:=\mathrm{Riem}^m_{\;\;\;imj}$ or in coordinates
\begin{equation}\label{eq:Ric}
\Ric_{ij} := \partial_m \Gamma^m_{ij}-\partial_j \Gamma^m_{im}+\Gamma^m_{ij}\Gamma^k_{km}-\Gamma^m_{ik}\Gamma^k_{jm}\in \mathcal{D}'(\Omega ).
\end{equation}
Note that this is entirely unproblematic because it does not involve any type changes. Again this definition gives rise to a tensor distribution $\Ric[g]$ on $M$. 

If we want to define scalar curvature by $\mathrm{R}=g^{ij}\Ric_{ij}$ we would multiply a distribution with a non-smooth function which is problematic. The way around this is to instead define
\begin{equation*}
\mathrm{R} := \partial_m (g^{ij} \Gamma^m_{ij})-(\partial_m g^{ij}) \Gamma^m_{ij}-\partial_j (g^{ij} \Gamma^m_{im})+(\partial_j g^{ij})\Gamma^m_{im}+g^{ij}\Gamma^m_{ij}\Gamma^k_{km}-g^{ij}\Gamma^m_{ik}\Gamma^k_{jm},
\end{equation*}
which, thanks to the product rule, is an equivalent expression in the smooth case.

\paragraph{Curvature bounds:}
For distributions $u\in \mathcal{D}'(M)$ one has a natural notion of positivity, namely
\begin{definition}\label{def:posdistribs}
	Let $u\in \mathcal{D}'(M)$. Then $u\geq 0$ (resp.~$u >0$) if $\langle u,\mu \rangle \geq 0$ (resp.~$>0$) for any compactly supported non-negative (resp.~positive) volume density $\mu $.
	
	Using this, one defines $u\geq v$ (resp.~$u<v$) for any two $u,v\in \mathcal{D}'(M)$ by non-negativity (rep.~positivity) of the distribution $u-v$.
\end{definition}

This can be used to define energy conditions the following way: Given any smooth vector field $\MX$, we define $\Ric (\MX,\MX)\in \mathcal{D}'(M)$ locally via $\Ric(\MX,\MX)=\Ric_{ij} \MX^i \MX^j$.   

\begin{definition}\label{def:sec}
	Let $g$ be a $C^1$-Lorentzian metric on $M$. We say that $(M,g)$ satisfies the \emph{strong energy condition} or has \emph{non-negative timelike Ricci curvature}, if the distribution $\Ric(\MX,\MX)$ is non-negative (in the natural sense of Def.~\ref{def:posdistribs}) for all smooth timelike vector fields $\MX$ on $M$. 
\end{definition}

\begin{remark}\begin{itemize}
		\item[(i)]	If $g$ is smooth, this is equivalent to the pointwise condition $\Ric_p(X,X)\geq 0$ for all timelike vectors $X\in T_pM$ and all $p\in M$ because any such $X$ can be extended to a smooth timelike vector field $\MX$ on $M$ and the smooth function $\Ric(\MX,\MX)\in \mathcal{D}'(M)$  is non-negative in the sense of Def.~\ref{def:posdistribs} if and only if it is non-negative at every point. If $g\in C^{1,1}$, this is equivalent to the a.e. pointwise condition used in \cite{hawkingc11}.
		\item[(ii)] The strong energy condition is equivalent to all local representations $\Ric_{ij} \MX^i \MX^j\in \mathcal{D}'(\Omega)$ being non-negative distributions on $\Omega $ for any smooth timelike vector field $\MX$ on $\Omega $.
	\end{itemize}
\end{remark}

\subsection{Regularization of distributions and non-smooth tensor fields}
A distribution $u\in \mathcal{D}'(\Omega)$ with compact support\footnote{The support of a distribution  $u\in \mathcal{D}'(\Omega)$ is defined as follows: $x \in \mathrm{supp}(u)$ if and only if for all open neighborhoods $V\sse \Omega$ of $x$ there exists some $\phi \in \mathcal{D}(\Omega)$ with $\mathrm{supp}(\phi)\sse V$ with $\langle u, \phi \rangle\neq 0$.  If $\chi$ is a compactly supported smooth function on $\Omega$, then $\chi u$ has compact support (in fact $\mathrm{supp}(\chi u)\subseteq \mathrm{supp}(\chi)$) for any $u\in \mathcal{D}'(\Omega)$}
	can be \emph{regularized} by convolution as follows: Let $\rho_\eps$ be a standard mollifier, then for $\eps$ small enough (i.e., $\eps <\mathrm{d}(\mathrm{supp}(u),\partial \Omega)$) the function
	\begin{equation}\label{eq:convdistribRn}
	x\mapsto (u\star \rho_\eps)(x):=\langle u, \rho_\eps(x-.) \rangle
	\end{equation}
is smooth on $\Omega$. Further, $\langle u\star \rho_\eps, \phi \rangle \to \langle u,\phi \rangle$ for any test function $\phi$ and if $u\geq 0$, then $u\star \rho_\eps\geq 0$.
If $u=f\in L^{1}(\Omega)$, then this is just the usual convolution, i.e.,
$(f\star \rho_\eps)(x)=\int_{\Omega}f(x-y)\rho_\eps(y) dy$.\\

We now want to lift this to distributions on manifolds:\footnote{See also \cite[Theorem~3.2.10]{TheBook_GKOS} for this construction.}  We fix a countable atlas $\{\left(U_{\alpha},\psi_{\alpha}\right)\}_{\alpha\in \N}$ with relatively compact $U_\alpha$, a subordinate partition of unity $\xi_{\alpha}:M\to [0,1]$ and functions $\chi_{\alpha}\in C^{\infty}_c(U_{\alpha})$
with $\left|\chi_{\alpha}\right|\leq1$ and $\chi_{\alpha}\equiv1$
on an open neighborhood of $\supp(\xi_{\alpha})$ in $U_{\alpha}$. Then for $\T\in\mathcal{D}'\mathcal{T}_{s}^{r}(M)$ 
we define a smooth $(r,s)$-tensor field $\T\star_M \rho_{\eps}$ via
\begin{equation}
\T\star_M \rho_\eps:=\sum_{\alpha\in\N}\chi_{\alpha}(\psi_{\alpha})_*^{-1} \big(\left((\psi_{\alpha})_*(\xi_\alpha \T)\right)*\rho_{\eps}\big), \label{eq:smoothing}
\end{equation}
where $(\psi_{\alpha})_*(\T):=\big((^\alpha\T)_{j_{1}\dots j_{s}}^{i_{1}\dots i_{r}}\big) \in (\mathcal{D}'(\psi_\alpha(U_\alpha)))^{n^{r+s}}$ (see Proposition~\ref{prop:distribcharts})
and $(\psi_{\alpha})_*(\T)\star \rho_{\eps}$ is to be understood component-wise via \eqref{eq:convdistribRn}.

Clearly
for any distribution $u\in \mathcal{D}'(M)$ we have
\begin{equation}\label{eq:posofsmoothing} u \geq 0 \implies u\star_M \rho_\eps \geq 0 .
\end{equation} 

Extending definition \eqref{eq:smoothing} to $C^k$- or $W^{k,p}_\mathrm{loc}$-tensor fields or functions on $M$ one recovers the usual way of smoothing tensor fields of these regularities. We have the following well-known convergence properties (which follow easily from the corresponding properties of smoothing via convolution on $\R^n$):

\begin{Proposition}[Convergence of $\T\star_M \rho_\eps$]\label{prop:convofconvol} We have:
\begin{itemize}
\item[(i)] If $\T\in \mathcal{D}'\mathcal{T}_{s}^{r}(M)$, then
\[\langle \T\star_M \rho_\eps,\phi \rangle \to \langle \T,\phi \rangle \qquad \forall \phi\in\Gamma_{c}\left(M,T_{r}^{s}M\otimes\vol M\right). \]
\item[(ii)] If $\T$ is $C^k$, then $\T\star_M \rho_\eps \to \T$ in $C^{k}_{\mathrm{loc}}$. Further, if $k\geq 1$, then 
for any compact $K\subseteq M$ there exists a constant $c_K>0$ and $\eps_0(K)$ such that
\begin{equation}\label{eq:convollesseps}
||\T-\T\star_M \rho_\eps||_{\infty,K}\leq c_K \eps
\end{equation}
for all $\eps<\eps_0$.

\item[(iii)] If $\T$ is $W^{k,p}_{\mathrm{loc}}$ ($1\leq p<\infty$), then $\T\star_M \rho_\eps \to \T$ in $W^{k,p}_{\mathrm{loc}}$.
\end{itemize}
\end{Proposition}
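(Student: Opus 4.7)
The plan is to reduce everything to the corresponding statements on $\R^n$ via the locally finite atlas. The first observation is that since $\{U_\alpha\}$ is locally finite and each $\chi_\alpha$ is supported in $U_\alpha$, for any compact $K\subseteq M$ only finitely many $\alpha$ contribute to $(T\star_M \rho_\eps)|_K$, so the sum in \eqref{eq:smoothing} is pointwise finite and defines a genuine smooth $(r,s)$-tensor field for $\eps$ small enough (namely, smaller than the distance of $\mathrm{supp}(\xi_\alpha)$ to $\partial \psi_\alpha(U_\alpha)$ for the relevant indices). The second, equally crucial, observation is that since $\chi_\alpha \equiv 1$ on an open neighborhood of $\mathrm{supp}(\xi_\alpha)$ we have $\chi_\alpha \xi_\alpha = \xi_\alpha$, so for a regular tensor field $T$ we can write $T = \sum_\alpha \xi_\alpha T = \sum_\alpha \chi_\alpha \xi_\alpha T$, matching the shape of \eqref{eq:smoothing}.

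For (i), I pair with a test section $\phi \in \Gamma_c(M, T^s_r M \otimes \mathrm{Vol}(M))$. By compact support of $\phi$, only finitely many $\alpha$ contribute, and for each such $\alpha$ one can transport the pairing into the chart via Proposition~\ref{prop:distribcharts}, reducing the problem to showing that the components $(\psi_\alpha)_*(\xi_\alpha T)^{i_1\dots}_{j_1\dots} \star \rho_\eps$ converge to $(\psi_\alpha)_*(\xi_\alpha T)^{i_1\dots}_{j_1\dots}$ in $\mathcal{D}'(\psi_\alpha(U_\alpha))$. Since each such component is a compactly supported distribution on an open set in $\R^n$, this is the standard fact that $u\star \rho_\eps \to u$ in $\mathcal{D}'$ for compactly supported Euclidean distributions. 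Summing finitely many contributions then yields (i).

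For (ii) and (iii), the starting point is the identity
\[
T - T\star_M \rho_\eps = \sum_{\alpha} \chi_\alpha\,(\psi_\alpha)_*^{-1}\Bigl[(\psi_\alpha)_*(\xi_\alpha T) - (\psi_\alpha)_*(\xi_\alpha T)\star \rho_\eps\Bigr],
\]
which is again a finite sum on any compact $K\subseteq M$. Each summand is supported in $\mathrm{supp}(\chi_\alpha)\subseteq U_\alpha$, so locally uniform convergence (for (ii)) or $W^{k,p}_{\mathrm{loc}}$ convergence (for (iii)) on $M$ reduces to the corresponding Euclidean facts applied componentwise to $(\psi_\alpha)_*(\xi_\alpha T)$, which is a compactly supported $C^k$ or $W^{k,p}$ field on $\psi_\alpha(U_\alpha)$. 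The quantitative bound \eqref{eq:convollesseps} for $k\geq 1$ comes from the one-line Taylor estimate on $\R^n$: for a $C^1$ function $f$ and $|y|\leq \eps$ one has $|f(x-y)-f(x)|\leq \eps\,\|Df\|_{\infty, K+B_\eps}$, which upon integration against $\rho_\eps$ gives $\|f-f\star \rho_\eps\|_{\infty,K}\leq \eps\,\|Df\|_{\infty,K+B_\eps}$; applying this componentwise to $(\psi_\alpha)_*(\xi_\alpha T)$, transporting back via $(\psi_\alpha)_*^{-1}$ and $\chi_\alpha$ (which introduces only multiplicative constants depending on the finitely many charts meeting $K$), and summing gives \eqref{eq:convollesseps}.

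The only mildly subtle point is ensuring that nothing in the passage between chart- and manifold-norms blows up: one must use that on the compact set $K$ only finitely many charts are involved and that the chart diffeomorphisms $\psi_\alpha$ together with $\chi_\alpha$ are smooth with bounded derivatives on $\mathrm{supp}(\chi_\alpha)\cap K$, so norms in the chart and on $M$ are comparable up to constants depending only on $K$. There is no deeper obstacle; the proposition is essentially a bookkeeping exercise turning known Euclidean regularization theory into a manifold statement.
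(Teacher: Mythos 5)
Your proof is correct and follows essentially the same route as the paper: reduce to the Euclidean mollification estimates via the locally finite atlas, exploit $\chi_\alpha\equiv 1$ near $\supp\xi_\alpha$ (so that $\sum_\alpha\chi_\alpha\xi_\alpha\T=\T$ and, for small $\eps$, the cutoffs drop out of the convolution terms), and use the one-line Taylor bound $\|f-f\star\rho_\eps\|_\infty\leq\eps\|Df\|_\infty$ for \eqref{eq:convollesseps}. The paper only sketches \eqref{eq:convollesseps} and says the rest follows similarly; you spell out (i) and (iii) as well, but the underlying argument is the same bookkeeping reduction to $\R^n$.
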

\begin{proof}
	We only give a brief sketch for \eqref{eq:convollesseps}, the rest follows similarly. It suffices to show that for any compact $K \sse U$ for some chart chart $(\psi,U)$ we have
$||\psi_*\T-\psi_*(\T\star_M \rho_\eps)||_{\infty,K}\leq c_K \eps$. Since $\chi_\alpha\equiv 1$ on a neighborhood of $\supp \xi_\alpha$ and $|\{\alpha:K\cap U_\alpha\neq 0\}|$ is finite, there exists $\eps_0(K)$ such that for all $\eps <\eps_0(K)$
\[\psi_*(\T\star_M \rho_\eps)=\sum_{\{\alpha:K\cap U_\alpha\neq 0\}} \psi_*\big((\psi_{\alpha})_*^{-1} \big(\left((\psi_{\alpha})_*(\xi_\alpha \T)\right)\star \rho_{\eps}\big)\big).\]
Also,
\[\psi_*\T=\psi_*(\sum_{\{\alpha:K\cap U_\alpha\neq 0\}} \xi_{\alpha}\T) =\sum_{\{\alpha:K\cap U_\alpha\neq 0\}} \psi_*((\psi_{\alpha})_*^{-1}(\psi_{\alpha})_*(\xi_\alpha \T))\]
and thus, since $||f-f\star \rho_\eps||_{\infty}\leq ||\partial f||_{\infty} \eps =c_K\eps$ if $f\in C^1(\R^n)$ with compact support, we see that the difference of each summand will be bounded by some positive constant $c_{K,\alpha}$ times $\eps$ and the claim follows.
\end{proof}

\section{Hawking's singularity theorem} \label{sec:hawk}

\subsection{Curvature bounds and approximations: A $C^1$-version of \cite[Lemma~3.2]{hawkingc11}}
One of the crucial observations in proving singularity theorems for $C^{1,1}$-metrics was that, while the non-negative timelike Ricci curvature condition itself does not survive the approximation process, nearby approximating metrics still have to have ''almost non-negative'' timelike Ricci curvature in the following sense (cf. \cite[Lemma~3.2]{hawkingc11}):

\begin{Lemma}\label{lem:c11hawk}
Let $M$ be a smooth manifold with a $C^{1,1}$-Lorentzian metric $g$ and smooth background Riemannian metric $h_M$ on $M$. Let $K$ be a compact subset of $M$ and suppose that $\Ric(\MX,\MX)\geq 0$  for every $g$-timelike smooth local vector field $\MX$. Then
\begin{equation}\label{suffest}
\begin{split}
  &\forall C>0 \; \forall \delta >0 \;  \forall \kappa<0\ \exists \eps_0>0\ \forall \eps<\eps_0\
  \forall X\in TM|_K \\ 
  & \text{ with }\ g(X,X)\le \kappa 
   \text{ and } \|X\|_{h} \leq C:
  \ \Ric_\eps(X,X) > -\delta.
  \end{split}
\end{equation}
Here 
$\Ric_\eps$ is the Ricci-tensor corresponding to a smooth approximating metric
as in \cite[Proposition~3.1]{hawkingc11}.
\end{Lemma}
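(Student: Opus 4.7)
My plan is to argue by contradiction combined with a local reduction. Suppose the conclusion fails: there exist constants $C,\delta>0$, $\kappa<0$, a sequence $\eps_n\to 0$, and tangent vectors $X_n\in TM|_K$ with $g(X_n,X_n)\leq \kappa$, $\|X_n\|_h\leq C$, and $\Ric_{\eps_n}(X_n,X_n)\leq -\delta$. By compactness of $\{X\in TM|_K: g(X,X)\leq \kappa,\, \|X\|_h\leq C\}$ (using that this set is closed in the relatively compact locus $\{\|X\|_h\leq C\}\cap TM|_K$, and noting that $\{g(\cdot,\cdot)\leq\kappa<0\}$ is preserved under $C^0$ perturbations of $g$), I pass to a subsequence with $X_n\to X_\infty\in T_pM$, $p\in K$. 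Since $g(X_\infty,X_\infty)\leq \kappa<0$ and $g\in C^1$, I extend $X_\infty$ to a smooth timelike local vector field $\MX$ on a neighborhood $U$ of $p$, shrinking $U$ so that $\MX$ remains $g$-timelike throughout.

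Next I work in a chart around $p$ and decompose the Ricci tensor via \eqref{eq:Ric} into a derivative part $D[\Gamma] := \partial_m\Gamma^m_{ij}-\partial_j\Gamma^m_{im}$ and a quadratic part $Q[\Gamma] := \Gamma^m_{ij}\Gamma^k_{km}-\Gamma^m_{ik}\Gamma^k_{jm}$. Since $g_\eps\to g$ in $C^1_{\mathrm{loc}}$, the Christoffel symbols $\Gamma_\eps\to \Gamma$ uniformly on $\bar U$, so $Q[\Gamma_\eps](\MX,\MX)\to Q[\Gamma](\MX,\MX)$ uniformly; evaluating at $X_n$ rather than $\MX(\pi(X_n))$ adds an error controlled by $\|X_n-\MX(\pi(X_n))\|_h\to 0$. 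The difficulty is entirely in the derivative part.

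For the derivative part I introduce an auxiliary smoothing parameter $\eps'=\eps'(\eps)\to 0$ and compare $D[\Gamma_\eps]$ with $D[\Gamma]\star_M\rho_{\eps'}$. Since $g\in C^1$, $\partial \Gamma$ exists only as a distribution, but by a Friedrichs-type estimate (the Lemma \ref{lem:Fthesecond} referenced in the paper) one can couple $\eps$ and $\eps'$ so that $D[\Gamma_\eps]\to D[\Gamma]\star_M\rho_{\eps'}$ in $C^0_{\mathrm{loc}}$ with a controllable error. Consequently
\[
\Ric_\eps(\MX,\MX)\big|_x = \big(\Ric[g]\star_M\rho_{\eps'}\big)(\MX,\MX)\big|_x + o(1),
\]
uniformly in $x\in U$, where the $o(1)$ term combines the Friedrichs error and the uniform $C^0$-convergence on the quadratic part. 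Now the hypothesis enters: $\Ric(\MX,\MX)\in\mathcal{D}'(U)$ is non-negative on $U$, and by \eqref{eq:posofsmoothing} the smoothing $(\Ric(\MX,\MX))\star_M\rho_{\eps'}\geq 0$ pointwise. A standard commutator estimate (using $\MX\in C^\infty$) bounds $\bigl|(\Ric[g]\star_M\rho_{\eps'})(\MX,\MX)(x)-((\Ric(\MX,\MX))\star_M\rho_{\eps'})(x)\bigr|\leq c_K\,\eps'\|\MX\|_{C^1}$, which vanishes as $\eps'\to 0$.

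Assembling these pieces, $\Ric_{\eps_n}(X_n,X_n)\geq -\delta/2$ for $n$ large, contradicting the assumption $\Ric_{\eps_n}(X_n,X_n)\leq -\delta$. The main obstacle is the Friedrichs-type estimate coupling $\eps$ and $\eps'$: one must choose the rate $\eps'(\eps)$ so that the error in replacing $D[\Gamma_\eps]$ by $D[\Gamma]\star_M\rho_{\eps'}$ vanishes uniformly on the compact set $\pi(TM|_K)\cap \bar U$, while still allowing the commutator with the smooth vector field $\MX$ to be absorbed. Once this coupling is in place — exactly the content of the Friedrichs lemma cited in the paper — everything else follows by continuity and compactness.
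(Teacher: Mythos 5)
Your proposal is essentially correct for the stated $C^{1,1}$ regularity, but it takes a genuinely different route from the one the paper (following \cite{hawkingc11}) sketches in the proof of Lemma~\ref{lem:thecurvatureapprox}. You argue by contradiction, extract a limiting tangent vector $X_\infty$ at $p\in K$, extend it to an \emph{arbitrary} smooth timelike field $\MX$, and are then forced to commute the pairing with $\MX\otimes\MX$ past the mollification $\star_M\rho_{\eps'}$, which you control with a commutator estimate. The paper's (direct) argument sidesteps that commutator entirely: for each $X\in TM|_K$ one works in a chart near $\pi(X)$ and extends $X$ to the \emph{constant} vector field in those coordinates, so $(\Ric\star_M\rho_{\eps'})(X,X)=(\Ric(\MX,\MX))\star_M\rho_{\eps'}(\pi(X))$ is an \emph{exact} identity (the constant components $\MX^i\MX^j$ pull through the convolution), and only the non-negativity-preserving property of mollification is invoked. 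Two things are worth flagging. First, your commutator bound $c_K\,\eps'\|\MX\|_{C^1}$ is valid here only because $\Ric\in L^\infty_{\mathrm{loc}}$ for $C^{1,1}$-metrics (the bound should really scale like $\|\MX\|_{C^1}\|\MX\|_{C^0}$, though for a fixed $\MX$ that is immaterial); similarly, transferring the estimate from $\MX(\pi(X_n))$ back to $X_n$ uses implicitly that $\Ric_\eps$ is uniformly bounded, again true for $C^{1,1}$ but false for $C^1$. This is precisely why the constant-extension trick is preferable in the context of this paper: it scales down to the $C^1$ version, whereas your extension-plus-commutator route would not without further ideas. Second, the Friedrichs Lemma~\ref{lem:Fthesecond} has the approximation parameter and the mollification parameter equal, and the delicate part of the paper's proof is the careful chart-and-cutoff reduction to that lemma; your discussion of "coupling $\eps$ and $\eps'$" and of $D[\Gamma_\eps]\to D[\Gamma]\star_M\rho_{\eps'}$ is correspondingly vaguer than what is actually needed, though in the $C^{1,1}$ case the estimates are forgiving enough that this does not break the argument.
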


Before we can establish our own version of this for $C^1$-metrics, we will need some preparations: \\

To begin with, we state and prove our own version of \cite[Proposition~2.5]{KSSV} (which itself goes back to, essentially, \cite[Proposition~1.2]{CG})
adapted to the precise regularity of our metric and the precise convergence properties we will need. We will postpone a discussion comparing \cite[Proposition~2.5]{KSSV} and Lemma \ref{lem:approxnice} to Remark~\ref{rem:compapprox}.

\begin{Lemma}\label{lem:approxnice}
	Let $(M,g)$ be a spacetime with a $C^1$-Lorentzian metric.
	Then for any $\eps>0$, there exist smooth Lorentzian metrics $\check{g}_\eps$ and $\hat{g}_\eps$, time-orientable by the same smooth timelike vector field as $g$, on $M$ such that \begin{itemize}
		\item[(i)] 	$\check{g}_\eps \prec g\prec \hat{g}_\eps$ for all $\eps$
		\item[(ii)] $\check{g}_\eps - g\star_M \rho_{\eps}\to 0$ in $C^{\infty}_{\mathrm{loc}}$ and for any compact $K\sse M$ there exists $c_K>0$ and $\eps_0(K)>0$ such that
		\begin{equation*}
		||\check{g}_\eps - g\star_M \rho_{\eps}||_{\infty,K}\leq c_K \eps
		\end{equation*} for all $\eps <\eps_0(K)$. The same also holds for $\hat{g}_\eps$.
		\item[(iii)] $\check{g}_\eps$ and $\hat{g}_\eps$ converge to $g$ in $C^1_\mathrm{loc}$ 
	\end{itemize}
\end{Lemma}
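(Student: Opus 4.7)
My starting point will be the chartwise convolution $g_\eps := g\star_M \rho_\eps$, which by Proposition~\ref{prop:convofconvol} is a smooth tensor field (for $\eps$ small enough in the sense of (3.3.2)) that converges to $g$ in $C^1_\mathrm{loc}$ and satisfies the crucial linear bound $\|g-g_\eps\|_{\infty,K}\le c_K\,\eps$ on every compact $K\sse M$ (this is where the hypothesis $g\in C^1$ enters in an essential way). The obstruction to using $g_\eps$ directly as either $\check g_\eps$ or $\hat g_\eps$ is that its cones will in general be neither strictly inside nor strictly outside those of $g$. The plan is to cure this by adding a smooth, fixed, positive-definite correction with an $\eps$-small coefficient that is nevertheless large enough to beat the $c_K\eps$ convolution error on compact sets.

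\textbf{Construction.} Set
\[
\check g_\eps \;:=\; g\star_M \rho_\eps \,+\, \eps\,\psi\,h, \qquad \hat g_\eps \;:=\; g\star_M \rho_\eps \,-\, \eps\,\psi\,h,
\]
where $h$ is the fixed smooth Riemannian background and $\psi\in C^\infty(M)$ is a positive function still to be chosen. Since $\psi h$ is a fixed smooth symmetric $(0,2)$-tensor, (ii) is immediate: $\check g_\eps - g\star_M\rho_\eps = \eps\psi h\to 0$ in $C^\infty_\mathrm{loc}$ and $\|\eps\psi h\|_{\infty,K}\le (\sup_K\psi)\,\eps$, and similarly for $\hat g_\eps$. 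Combined with $g_\eps\to g$ in $C^1_\mathrm{loc}$ this gives (iii). For $\eps$ sufficiently small both tensors still have Lorentzian signature and the time-orienting vector field $T$ remains timelike, since $\check g_\eps(T,T)=g(T,T)+O(\eps)<0$ and analogously for $\hat g_\eps$; for large $\eps$ (where the cone inclusions below would fail) I would simply replace $\check g_\eps$, $\hat g_\eps$ by their values at a fixed small $\eps_0$.

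\textbf{Cone inclusions via the choice of $\psi$.} For any compact $K$ and any $X\in T_pM$ with $p\in K$, $\|X\|_h=1$, and $g(X,X)\le 0$ one computes
\[
\hat g_\eps(X,X)\;=\;g_\eps(X,X)-\eps\psi(p)\;\le\;g(X,X)+c_K\eps-\eps\psi(p)\;\le\;\eps\bigl(c_K-\psi(p)\bigr),
\]
which is strictly negative as soon as $\psi(p)>c_K$; a symmetric computation shows that if instead $g(X,X)\ge 0$ and $\|X\|_h=1$ then $\check g_\eps(X,X)\ge \eps(\psi(p)-c_K)>0$ under the same condition. By homogeneity in $X$ both inequalities extend to all nonzero $X$. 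To make the bound global I take a locally finite compact exhaustion $\{K_i\}$ with $K_i\sse K_{i+1}^\circ$, extract constants $c_{K_i}$ from Proposition~\ref{prop:convofconvol}, and use a subordinate partition of unity to build $\psi\in C^\infty(M)$ with $\psi|_{K_i}>c_{K_i}$ for every $i$. With this $\psi$ the cone inclusions $g\prec\hat g_\eps$ and $\check g_\eps\prec g$ hold pointwise on all of $M$, establishing (i).

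\textbf{Where the work sits.} The only real subtlety is the global choice of $\psi$: the constants $c_K$ may grow when $K$ exhausts $M$, so $\psi$ must be allowed to grow correspondingly, and one must verify the cone inclusion uniformly on each chart rather than just pointwise. This is handled by a standard partition-of-unity argument, and, crucially, since $\psi h$ is a fixed \emph{smooth} tensor independent of $\eps$, no loss of regularity occurs: $\eps\psi h\to 0$ in $C^k_\mathrm{loc}$ for every $k$, so the $C^\infty_\mathrm{loc}$ closeness to $g\star_M\rho_\eps$ in (ii) and the $C^1_\mathrm{loc}$ convergence to $g$ in (iii) come for free once $g\star_M\rho_\eps$ itself enjoys the corresponding convergence from Proposition~\ref{prop:convofconvol}.
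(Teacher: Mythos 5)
Your strategy --- mollify, add an $\eps$-small positive-definite correction, and choose the coefficient via a partition of unity --- is essentially the one the paper uses; your correction term $\eps\psi h$ is a simpler, equally valid replacement for the paper's chartwise $\eta_\alpha(\eps)\,{}^\alpha v$, since positive definiteness of $h$ in particular gives positivity on the $g$-non-timelike directions, which is all the cone inclusion needs.

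However, there is a genuine gap, and it is not the one you flag (the growth of $\psi$ along a compact exhaustion). The linear estimate $||g-g\star_M\rho_\eps||_{\infty,K}\leq c_K\,\eps$ from Proposition~\ref{prop:convofconvol}~(ii) holds only for $\eps<\eps_0(K)$, and on non-compact $M$ the thresholds $\eps_0(K_i)$ will in general shrink to zero along the exhaustion (the smoothing must stay within the finitely many charts meeting $K_i$, and chart sizes need not be uniformly bounded below). Thus for \emph{any} fixed $\eps>0$ there is an index $i$ with $\eps_0(K_i)<\eps$, and on such $K_i$ the inequality $\check g_\eps(X,X)\geq\eps\,(\psi(p)-c_{K_i})$ does not follow --- no choice of $\psi$ can rescue it, because $||g-g\star_M\rho_\eps||_{\infty,K_i}$ is no longer bounded by $c_{K_i}\,\eps$ once $\eps\geq\eps_0(K_i)$, and for that matter $\check g_\eps$ need not even remain Lorentzian there. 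In particular, this is not a ``large-$\eps$'' problem that can be cured by freezing the family at a single small value: no $\eps_0>0$ lies below every $\eps_0(K_i)$. The paper closes exactly this gap via the globalization technique of \cite[Lemma~4.3]{HKS:12}: having verified that the naive family $\tilde g_\eps$ (yours or the paper's) has the desired properties on each compact $K$ for all $\eps<\eps_K$, one replaces the mollification parameter by a smooth positive function $u(\eps,p)$, decaying suitably in $p$, and sets $\check g_\eps(p):=\tilde g_{u(\eps,p)}(p)$. On each compact set $\check g_\eps$ then agrees with $\tilde g_\eps$ for small $\eps$, so properties (ii) and (iii) carry over, while the decay of $u$ in $p$ forces $\check g_\eps$ to be a globally defined Lorentzian metric, time-orientable by the same vector field, with $\check g_\eps\prec g$ for every $\eps>0$. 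This globalization step is essential and is absent from your argument.
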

\begin{proof}
	This essentially follows by doing the same construction as in the proof of \cite[Proposition~2.5]{KSSV}. In the spirit of making the present paper more self-contained, we nevertheless include the full construction of $\check{g}_\eps$ here (especially since we are going to use $\check{g}_\eps$ while the proof in \cite{KSSV} only explicitly constructs $\hat{g}_\eps$). Also, to get (ii) one has to tweak their arguments a bit: Essentially, the idea is to shift $\eta_\alpha(\eps)\equiv \eta(\lambda_\alpha(\eps),\alpha)$ from the ''convolution term'' to the ''correction term'' in the construction.

	We use the same notation as in section \ref{sec:curv} 
but will additionally assume that each chart $\psi_\alpha$ can be extended to a chart $\psi_\alpha'$ on an open set $U'_\alpha \supset \overline{U_\alpha}$ and $\overline{U_\alpha}$ is compact.
	
	Let $\omega $ be a smooth $g$-timelike one-form on $M$ (exists by time-orientability, see \cite[Proposition~2.5]{KSSV} for a more detailed argument). 
	Fix $\alpha$ and let $\{^\alpha\nu_i\}_{i=1,\dots,n-1}$ be smooth $g$-spacelike one-forms on $\psi_\alpha'(U'_\alpha)$ such that $\{(\psi_\alpha')_*(\omega)|_p,\,^\alpha\nu_1|_p,\dots,\,^\alpha\nu_{n-1}|_p\}$ forms a basis of $\R^n$ for all $p\in \psi_\alpha'(U'_\alpha)$. On $V_\alpha :=\psi_\alpha(U_\alpha)\sse \R^n$ set $^\alpha v :=\sum_{i=1}^{n-1} \,^\alpha\nu_i \otimes \,^\alpha\nu_i $ and $^\alpha g_{\eps}:=((\psi_\alpha)_*(\xi_\alpha g))\star \rho_{\eps}$ and define
	\begin{equation}\label{eq:defcheckgeps}
	^\alpha\tilde{g}_\eps :=\,^\alpha g_{\eps}+\eta_\alpha(\eps ) \,^\alpha v.
	\end{equation}
We want to determine $\eta_\alpha(\eps)$ such that $^\alpha\tilde{g}_\eps \prec (\psi_\alpha)_*g$ on $V_\alpha$. This follows immediately if we can show that $((\psi_\alpha)_*g)|_p(X,X)\geq 0$ implies $^\alpha\tilde{g}_\eps|_p (X,X)>0$ for all $p\in V_\alpha$ and $X\in \R^n$ with $||X||_e=1$. So, let $p$ and $X$ be as above, then $^\alpha v|_p(X,X)=\sum (^\alpha\nu_i|_p(X))^2>0$ (else $X$ would have to be proportional to the vector field dual to $\omega $, hence $g$-timelike, contradicting $((\psi_\alpha)_*g)|_p(X,X)\geq 0$). By (relative) compactness there exists $\kappa_\alpha >0$ such that $^\alpha v|_p(X,X)>\kappa_\alpha $ for all  $p\in V_\alpha$ and $X\in \R^n$ with $||X||_e=1$. Now combining this with \eqref{eq:convollesseps} 
shows
\begin{align*}
^\alpha\tilde{g}_\eps|_p (X,X)&=(\psi_\alpha)_*(\xi_\alpha g)|_p(X,X)+\big(\,^\alpha g_{\eps}|_p-(\psi_\alpha)_*(\xi_\alpha g)|_p\big) (X,X)+\eta_\alpha(\eps)\,^\alpha v|_p(X,X) \\
&\geq 0+ (-c_{\alpha} \eps)+\eta_\alpha(\eps) \kappa_\alpha>0
\end{align*}
if $\eta_\alpha(\eps)=C_\alpha \eps$ for any constant $C_\alpha >\frac{c_{\alpha}}{\kappa_\alpha}$.

We define $\tilde{g}_\eps :=\sum_{\alpha\in\N} \chi_\alpha (\psi_\alpha)_*^{-1}(^\alpha\tilde{g}_\eps) \in \mathcal{T}^0_2M$. This satisfies
\begin{itemize}
\item[(a)] For all compact $K\sse M$ there exists a constant $c_K>0$ such that $||\tilde{g}_\eps  - g\star_M \rho_{\eps}||_{\infty,K}\leq c_K \eps$: We have
\begin{align*}\tilde{g}_\eps &=\sum_{\alpha\in\N} \chi_\alpha (\psi_\alpha)_*^{-1}(^\alpha\tilde{g}_\eps)=\sum_{\alpha\in\N} \chi_\alpha (\psi_\alpha)_*^{-1}(^\alpha g_{\eps})+\eps \sum_{\alpha\in\N}C_\alpha  \chi_\alpha (\psi_\alpha)_*^{-1}(^\alpha v)\\
&=g \star_M \rho_{\eps}+\eps \sum_{\alpha\in\N} C_\alpha \chi_\alpha (\psi_\alpha)_*^{-1}(^\alpha v),
\end{align*} 
hence $||\tilde{g}_\eps  - g\star_M \rho_{\eps}||_{\infty,K}=\eps ||\sum_{\alpha\in\N}C_\alpha \chi_\alpha (\psi_\alpha)_*^{-1}(^\alpha v)||_{\infty,K}\leq c_K \eps $.
\item[(b)] $\tilde{g}_\eps  - g\star_M \rho_{\eps}\to 0$ in $C^\infty_\mathrm{loc}$ 
\item[(c)] $\tilde{g}_\eps\to g$ in $C^1_\mathrm{loc}$ 
\item[(d)] If $g(X,X)\geq 0$ for some $0\neq X\in TM$, then $\tilde{g}_\eps (X,X)>0$: $g(X,X)\geq 0$ implies that $((\psi_\alpha)_*g)(X_\alpha,X_\alpha)\geq 0$ (where $X_\alpha:=(\psi_\alpha)_*(X)$) for all $\alpha$ with $p\in U_\alpha$, hence $^\alpha \tilde{g}_\eps (X_\alpha,X_\alpha)>0$ and thus $(\psi_\alpha)_*^{-1}(^\alpha \tilde{g}_\eps)(X,X)>0$.
\end{itemize}
Further, by construction, $(\eps,p)\mapsto \tilde{g}_\eps(p)$ is smooth and by (c) and (d) for any compact $K\sse M$ there exists some $\eps_K$ such that for all $0<\eps<\eps_K$, $\tilde{g}_\eps$ is of the same signature as $g$, satisfies $\tilde{g}_\eps|_K \prec g|_K$ and is time-orientable by the same smooth $g$-timelike vector field as $g$. 
Thus (cf.~e.g.~\cite[ Lemma~4.3]{HKS:12}) we can construct a smooth map $u:(\eps,p)\mapsto \R_{>0}$ such
that for each $\eps$, the map $\check{g}_\eps : M\to T^0_2M$ defined by $\check{g}_\eps (p):= \tilde{g}_{u(\eps,p)}(p)$ is a globally defined Lorentzian metric, time-orientable by the same vector field as $g$ and satisfying $\check{g}_\eps \prec g$ and which on any given compact $K\sse M$
coincides with $\tilde{g}_\eps$ for sufficiently small $\eps$. This $\check{g}_\eps$ satisfies (i)-(iii) as desired.
\end{proof}
Similar estimates to \eqref{eq:convollesseps} and Lemma~\ref{lem:approxnice} (ii) hold for the inverse metrics:
\begin{Corollary}
	For any compact $K\sse M$ there exists $c_K>0$ and $\eps_0(K)>0$ such that
	\begin{equation}\label{eq:invest1}
	||(\check{g}_\eps)^{-1} - (g\star_M \rho_{\eps})^{-1}||_{\infty,K}\leq c_K \eps
	\end{equation} and 
		\begin{equation}\label{eq:invest2}
	||g^{-1} - (g\star_M \rho_{\eps})^{-1}||_{\infty,K}\leq c_K \eps
	\end{equation} for all $\eps <\eps_0(K)$. 
\end{Corollary}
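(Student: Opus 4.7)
The plan is to derive both estimates from the algebraic identity for matrix inverses
\[A^{-1}-B^{-1}=A^{-1}(B-A)B^{-1},\]
applied in each chart to the component matrices of the metrics. This reduces the problem to two ingredients: (a) the estimates on the differences $\check{g}_\eps-g\star_M \rho_\eps$ and $g-g\star_M\rho_\eps$ that we already have, and (b) a uniform bound on the operator norms of the inverses appearing on the right-hand side.

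For ingredient (a), Lemma~\ref{lem:approxnice}(ii) gives $||\check{g}_\eps-g\star_M\rho_\eps||_{\infty,K}\leq c_K\eps$, and Proposition~\ref{prop:convofconvol}(ii), equation \eqref{eq:convollesseps}, gives $||g-g\star_M\rho_\eps||_{\infty,K}\leq c_K\eps$ (for $\eps$ small enough, after possibly enlarging $c_K$). For ingredient (b), I would argue as follows. Since $g$ is a $C^1$-Lorentzian metric, $g^{-1}$ is continuous, so $||g^{-1}||_{\infty,K}<\infty$. By Proposition~\ref{prop:convofconvol}(ii) and Lemma~\ref{lem:approxnice}(iii) we have $g\star_M\rho_\eps\to g$ and $\check{g}_\eps\to g$ uniformly on $K$. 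Because the set of non-degenerate symmetric bilinear forms is open and matrix inversion is continuous on it (Cramer's rule), for sufficiently small $\eps$ the approximations are non-degenerate at every point of $K$ and their inverses converge uniformly on $K$ to $g^{-1}$. In particular there exist $\eps_0(K)>0$ and a constant $M_K$ such that
\[||(g\star_M\rho_\eps)^{-1}||_{\infty,K},\ ||(\check{g}_\eps)^{-1}||_{\infty,K}\leq M_K\quad\text{for all }\eps<\eps_0(K).\]

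Combining (a) and (b) via the identity above (working componentwise in a finite subatlas covering $K$ and absorbing the resulting chart-dependent constants into $c_K$) yields
\[||(\check{g}_\eps)^{-1}-(g\star_M\rho_\eps)^{-1}||_{\infty,K}\leq C\,M_K^2\,||\check{g}_\eps-g\star_M\rho_\eps||_{\infty,K}\leq \tilde c_K\eps\]
and analogously
\[||g^{-1}-(g\star_M\rho_\eps)^{-1}||_{\infty,K}\leq C\,M_K\,||g^{-1}||_{\infty,K}\,||g-g\star_M\rho_\eps||_{\infty,K}\leq \tilde c_K\eps,\]
which proves both claims after relabeling $\tilde c_K$ as $c_K$.

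The only non-routine point — and the thing I would be careful to write out explicitly — is the uniform bound in (b): one needs the $C^0$-convergence of the approximating metrics to $g$ on $K$ to guarantee that for small $\eps$ none of the $g\star_M\rho_\eps(x)$ or $\check{g}_\eps(x)$ degenerates at any $x\in K$, and that their inverses remain uniformly bounded. Everything else is either the algebraic identity or a direct quotation of earlier estimates.
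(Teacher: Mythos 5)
Your proof is correct and follows essentially the same strategy as the paper: reduce to the difference estimates from Lemma~\ref{lem:approxnice}(ii) and \eqref{eq:convollesseps}, then use local uniform non-degeneracy and boundedness of the approximating metrics to control the inverses. The only cosmetic difference is the algebraic identity used at the end — you invoke $A^{-1}-B^{-1}=A^{-1}(B-A)B^{-1}$, while the paper uses the cofactor formula $A^{-1}=\tfrac{1}{\det A}C(A)^{T}$; both express the same Lipschitz dependence of matrix inversion on compact sets of non-degenerate matrices.
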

\begin{proof} Since both $\check{g}_\eps$ and $g\star_M \rho_{\eps}$ converge locally uniformly to $g$, both nets are locally uniformly bounded and locally uniformly non-degenerate. Using this and \eqref{eq:convollesseps}, respectively Lemma~\ref{lem:approxnice} (ii), the results follow straightforwardly from the general formula $A^{-1}=\frac{1}{\det A} C(A)^{T}$, where $C(A)$ is the matrix of cofactors for $A$.
\end{proof}

\begin{remark}[Comparison with the approximations from \cite{KSSV}]\label{rem:compapprox}
For the estimates in the upcoming Lemma \ref{lem:convolinsteadofcheck} and Lemma \ref{lem:thecurvatureapprox} to work, it will be essential that $\check{g}_\eps-g\star_M \rho_\eps\to 0$ in $C^2_\mathrm{loc}$ and that both \eqref{eq:invest1} and \eqref{eq:invest2} hold. Note that while \eqref{eq:invest2} is independent of the choice of $\check{g}_\eps$, this is not the case for the other two requirements. 

For the approximations constructed in \cite{KSSV} (and used in the proofs of the $C^{1,1}$-singularity theorems), one has $||\check{g}_\eps-g||_{\infty,M}=d_h(\check{g}_\eps ,g)<\frac{\eps}{2}$, which together with \eqref{eq:invest2} implies \eqref{eq:invest1}.
Unfortunately, it is unclear how to estimate the second derivatives of $\check{g}_\eps-g\star_M \rho_\eps$. Defining \[g\star_M\rho_{\eta(\eps)}:= \sum_{\alpha\in\N}\chi_{\alpha}(\psi_{\alpha})_*^{-1} \big(\left((\psi_{\alpha})_*(\xi_\alpha \T)\right)*\rho_{\eta(\lambda_\alpha (\eps),\alpha)}\big)\] their construction gives
$\check{g}_\eps-  g\star_M\rho_{\eta(\eps)}\to 0$ in $C^\infty_\mathrm{loc}$  and \eqref{eq:invest1}. While \eqref{eq:invest2} might not hold for $g\star_M \rho_{\eta(\eps)}$, convergence of $\eta(\lambda_\alpha (\eps),\alpha)\to 0$ as $\eps \to 0$ gives at least $g-g\star_M \rho_{\eta(\eps)}\to 0$ locally uniformly. This is sufficient for the $C^{1,1}$-metrics considered in \cite{hawkingc11} and for the other $C^{1,1}$-singularity theorems
 because if $f$ in Lemma~\ref{lem:Fthesecond} is locally Lipschitz the assertion merely requires $a_\eps \to a$ locally uniformly.

	It is only due to the lower regularity of $g$ we are considering that it becomes essential to have a finer control on how fast $(g\star_M \rho_{\eta(\eps)})^{-1}$ converges to $g^{-1}$. One possibility to obtain fine enough control is following the construction in \cite{KSSV} closely and noting that one can always choose $\eta(\lambda_\alpha (\eps),\alpha)=c_\alpha \eps$ for constants $c_\alpha >0$. This then implies \eqref{eq:invest2} for $g\star_M\rho_{\eta(\eps)}$. However, the path chosen in our construction of $\check{g}_\eps$ in Lemma~\ref{lem:approxnice} allows us to simply use $g\star_M \rho_\eps$ instead of $g\star_M \rho_{\eta(\eps)}$, which improves the readability of follow-up arguments.
\end{remark}

As further preparation for the generalization of Lemma \ref{lem:c11hawk} from $C^{1,1}$-metrics to $C^1$-metrics we show that $\Ric[\check{g}_\eps]-\Ric[g \star_M \rho_{\eps}]\to 0$ locally uniformly. Note that due to the lower regularity this is more complicated than in \cite{hawkingc11} and thus we include it as a separate Lemma (instead of only being one step in the proof of \cite[Lemma~3.2]{hawkingc11}).  

\begin{Lemma}\label{lem:convolinsteadofcheck} $\Ric[\check{g}_\eps]-\Ric[g \star_M \rho_{\eps}]\to 0$ locally uniformly.
\end{Lemma}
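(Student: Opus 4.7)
The plan is to work locally in a chart and expand $\Ric_{ij}$ as in \eqref{eq:Ric}, viewing it as a polynomial in the metric $g'$, its inverse $(g')^{-1}$ and their first and (once) second partial derivatives. The difference $\Ric[\check{g}_\eps]-\Ric[g\star_M\rho_\eps]$ then splits naturally into two types of contributions: a ``$\Gamma\Gamma$'' part quadratic in the Christoffel symbols (involving only $g'$, $(g')^{-1}$ and $\partial g'$) and a ``$\partial\Gamma$'' part linear in partial derivatives of the Christoffel symbols, which itself splits into a regular summand of schematic form $\partial(g')^{-1}\cdot\partial g'$ and a problematic one of the form $(g')^{-1}\cdot\partial^2 g'$.

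For the regular pieces the argument is essentially immediate: since both $\check{g}_\eps\to g$ and $g\star_M\rho_\eps\to g$ in $C^1_\mathrm{loc}$, and matrix inversion is smooth on non-degenerate symmetric matrices, we also have $(\check{g}_\eps)^{-1}\to g^{-1}$ and $(g\star_M\rho_\eps)^{-1}\to g^{-1}$ in $C^1_\mathrm{loc}$. Hence the polynomial expressions in these data differ by a quantity tending to zero in $C^0_\mathrm{loc}$.

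The critical step is to handle the $(g')^{-1}\cdot\partial^2 g'$ piece, where I would split
\[
(\check{g}_\eps)^{-1}\partial^2\check{g}_\eps - (g\star_M\rho_\eps)^{-1}\partial^2(g\star_M\rho_\eps)
= \bigl((\check{g}_\eps)^{-1}-(g\star_M\rho_\eps)^{-1}\bigr)\partial^2\check{g}_\eps
+ (g\star_M\rho_\eps)^{-1}\partial^2\bigl(\check{g}_\eps - g\star_M\rho_\eps\bigr).
\]
The second summand converges to $0$ in $C^0_\mathrm{loc}$ because $(g\star_M\rho_\eps)^{-1}$ is locally uniformly bounded and $\partial^2(\check{g}_\eps-g\star_M\rho_\eps)\to 0$ locally uniformly by Lemma~\ref{lem:approxnice}(ii). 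For the first summand, \eqref{eq:invest1} supplies $||(\check{g}_\eps)^{-1}-(g\star_M\rho_\eps)^{-1}||_{\infty,K}\leq c_K\,\eps$ on each compact $K$, so it suffices to show $\eps\cdot||\partial^2\check{g}_\eps||_{\infty,K}\to 0$. Writing $\partial^2\check{g}_\eps = \partial^2(g\star_M\rho_\eps) + \partial^2(\check{g}_\eps - g\star_M\rho_\eps)$, the second addend is again controlled by Lemma~\ref{lem:approxnice}(ii), while for the first addend a standard Friedrichs-type mollifier estimate (of the kind underlying Lemma~\ref{lem:Fthesecond}) yields $\eps\cdot||\partial^2(g\star_M\rho_\eps)||_{\infty,K}\to 0$, using that $g\in C^1$ forces $\partial g$ to be uniformly continuous on compact sets.

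The main obstacle is precisely this last estimate. A priori $\partial^2(g\star_M\rho_\eps)$ may blow up at rate $1/\eps$, and one might worry that the $O(\eps)$ difference of inverses is not enough to compensate. The rescue is that, thanks to the uniform continuity of $\partial g$, one in fact has $||\partial^2(g\star_M\rho_\eps)||_{\infty,K}\leq C_K\,\omega_{\partial g}(\eps)/\eps$ with $\omega_{\partial g}(\eps)\to 0$, so multiplying by the $O(\eps)$ inverse-metric difference produces an error of order $\omega_{\partial g}(\eps)$ which vanishes locally uniformly. Without both the refined rate \eqref{eq:invest1} and this improved mollifier bound — both crucially tied to the hypothesis $g\in C^1$ — the argument would break down, which is exactly the point flagged in Remark~\ref{rem:compapprox}.
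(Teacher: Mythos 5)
Your proof is correct and follows essentially the same route as the paper's: isolate the $\Gamma\Gamma$ and $\partial(g')^{-1}\cdot\partial g'$ terms as harmless by $C^1_{\mathrm{loc}}$ convergence, telescope the $(g')^{-1}\partial^2 g'$ piece, control one factor by Lemma~\ref{lem:approxnice}(ii) and the other by \eqref{eq:invest1} together with a mollifier estimate giving $\eps\|\partial^2(g\star_M\rho_\eps)\|_{\infty,K}\to 0$ from continuity of $\partial g$. The only minor imprecision is the citation: the estimate $\eps\|\partial^2(g\star_M\rho_\eps)\|_{\infty,K}\to 0$ is precisely Lemma~\ref{lem:Ftheveryfirst} applied to $f=\partial g$, not Lemma~\ref{lem:Fthesecond}, but your modulus-of-continuity argument reproduces exactly what that lemma proves, so the substance is right.
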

\begin{proof} It suffices to check local uniform convergence in any chart $\psi$. Since both $g \star_M \rho_{\eps}$ and $\check{g}_\eps $ converge to $g$ in $C^1_{\mathrm{loc}}$ we have for the products of their Christoffel symbols 
\[\Gamma_{js}^m[\check{g}_\eps ]\Gamma_{ik}^l[\check{g}_\eps ]-\Gamma_{js}^m[g \star_M \rho_{\eps}]\Gamma_{ik}^l[g \star_M \rho_{\eps}]\to 0\]
locally uniformly. To see that $\partial_m \Gamma^k_{ij}[\check{g}_\eps ]-\partial_m\Gamma_{ij}^k[g \star_M \rho_{\eps}]\to 0$ locally uniformly  it suffices to check that $ (\check{g}_\eps)^{mk}  \partial_s\partial_r(\check{g}_\eps )_{ij} - (g \star_M \rho_{\eps}) ^{mk}  \partial_s\partial_r(g \star_M \rho_{\eps})_{ij} \to 0$ locally uniformly for any chart $\psi$ (where we use the shorthand $g_{ij}$ to refer to $(\psi_*g)_{ij}$). We have:
\begin{align*}
&|(\check{g}_\eps)^{mk}  \partial_s\partial_r(\check{g}_\eps )_{ij} - (g \star_M \rho_{\eps}) ^{mk}  \partial_s\partial_r(g \star_M \rho_{\eps})_{ij} |\leq \\
&\leq \underbrace{|(\check{g}_\eps)^{mk}|  |\partial_s\partial_r(\check{g}_\eps -g \star_M \rho_{\eps})_{ij}|}_{\to 0\;\mathrm{by~Lemma~\ref{lem:approxnice}}} + \underbrace{|(\check{g}_\eps)^{mk}-(g \star_M \rho_{\eps})^{mk}|}_{\leq c_K \eps  \;\mathrm{by~\eqref{eq:invest1}}}\; |\partial_s\partial_r(g \star_M \rho_{\eps})_{ij} |.
\end{align*}
So it remains to estimate $\eps |\partial_s\partial_r(g \star_M \rho_{\eps})_{ij} |$. 

In general, on any compact subset $K$ of $U$, we have
\[|\partial_s\partial_r(g \star_M \rho_{\eps})_{ij}|\leq \sum_{\beta} C^{klmd}_{srij}(\psi_\beta,\psi,K)\; \partial_m\partial_d(((\psi_{\beta})_*(\xi_\beta g))_{kl}\star \rho_{\eps})\circ(\psi_\beta\circ\psi^{-1})+C(\psi_\beta,\psi,K), \]
because the transformation terms that come from $\partial_r\partial_sg_{ij} $ not being tensorial will involve at most first 
derivatives of $g \star_M \rho_{\eps}$ which converge locally uniformly and hence are bounded.

So, $\eps ||\partial_s\partial_r(g \star_M \rho_{\eps})_{ij} ||_{\infty,K}$ is bounded by a finite sum of terms that are themselves either bounded by $\eps C\to 0$ or of the form $\eps ||\partial_m(f\star\rho_{\eps})||_{K,\infty}$ for a continuous function $f$ on $\R^n$, which converges to zero as $\eps\to 0$ by Lemma~\ref{lem:Ftheveryfirst}. 
\end{proof}

We are now finally ready to state and prove our own version of the essential Lemma \ref{lem:c11hawk} (\cite[Lemma~3.2]{hawkingc11}) for $C^1$-metrics. Note that there is a slight difference in the formulation to make it more immediately applicable in the situations we will encounter, but both formulations (Lemma~\ref{lem:c11hawk}, only with $g\in C^1$ instead of $g\in C^{1,1}$, and Lemma~\ref{lem:thecurvatureapprox}) are equivalent.

\begin{Lemma}\label{lem:thecurvatureapprox}
Let $M$ be a smooth manifold with a $C^1$-Lorentzian metric $g$. Let $K$ be a compact subset of $TM$ and suppose that the distribution $\Ric(\MX,\MX)\geq 0$  for every $g$-timelike smooth vector field $\MX$ (cf.~Def.~\ref{def:sec}). 
Then
\begin{equation}\label{mine}
\forall \delta >0 \; \exists \eps_0>0\ \forall \eps<\eps_0\
  \forall X \in K  \text{ with }\ \check{g}_\eps (X,X)=-1 :
  \ \Ric[\check{g}_\eps](X,X) > -\delta
\end{equation}
for smooth approximating metrics $\check{g}_\eps$
as in Lemma~\ref{lem:approxnice}.
\end{Lemma}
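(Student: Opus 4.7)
The plan is to prove the claim by contradiction after first reducing from the abstract approximations $\check{g}_\eps$ to the mollifications $g\star_M \rho_\eps$, where convolution techniques apply directly. Since $\Ric[\check{g}_\eps] - \Ric[g\star_M\rho_\eps]\to 0$ locally uniformly by Lemma \ref{lem:convolinsteadofcheck}, and since $K$ is a compact subset of $TM$, it suffices to establish the conclusion with $\Ric[\check{g}_\eps]$ replaced by $\Ric[g\star_M\rho_\eps]$. Assume for contradiction that there exist $\delta>0$, $\eps_n \to 0$ and $X_n \in K$ with $\check{g}_{\eps_n}(X_n,X_n) = -1$ and $\Ric[g\star_M \rho_{\eps_n}](X_n,X_n) \leq -\delta$. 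By compactness of $K$, we may pass to a subsequence so that $X_n \to X_0 \in K$; since $\check{g}_\eps\to g$ locally uniformly, $g(X_0,X_0) = -1$, so $X_0$ is $g$-timelike.

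Next I would localize around $p_0 := \pi(X_0)$. Fix a chart $(U,\psi)$ with $p_0 \in U$, write $\xi_0 := \psi_*X_0 \in \R^n$, and choose a neighborhood $W\subseteq \R^n$ of $\xi_0$ and an open $V\comp U$ containing $p_0$ small enough that every \emph{constant coordinate} vector field $\MX \equiv \xi \in W$ is $g$-timelike on $V$. By hypothesis, for each such $\MX$ the local representation $\Ric_{ij}\xi^i\xi^j$ is a non-negative distribution on $V$, and hence \eqref{eq:posofsmoothing} gives $(\Ric_{ij}\star_M\rho_\eps)(p)\,\xi^i\xi^j \geq 0$ pointwise for all $p$ in any relatively compact $V'\comp V$ and all sufficiently small $\eps$. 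The heart of the proof is then a Friedrichs-type commutator estimate
\[ \Ric[g\star_M\rho_\eps]_{ij} \;-\; \Ric[g]_{ij}\star_M \rho_\eps \;\longrightarrow\; 0 \quad \text{locally uniformly on } V. \]
Inspecting \eqref{eq:Ric}, the $\Gamma\cdot\Gamma$ pieces cause no trouble since $\Gamma[g\star_M\rho_\eps]\to \Gamma[g]$ in $C^0_\mathrm{loc}$, so everything reduces to controlling the $\partial\Gamma$ terms. Expanding these via \eqref{eq:Gamma}, the mismatch between $(g\star_M\rho_\eps)^{kl}$ and $g^{kl}\star_M\rho_\eps$ is handled by \eqref{eq:invest2} (after multiplying by a factor which, just as in the proof of Lemma \ref{lem:convolinsteadofcheck}, blows up no faster than $\eps^{-1}$), and the remaining purely-convolved terms are controlled by the sharp Friedrichs estimate of Lemma \ref{lem:Fthesecond}.

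Combining the two ingredients yields
\[ \Ric[g\star_M\rho_\eps]_{ij}(p)\,\xi^i\xi^j \;\geq\; -\eta_\eps \]
uniformly for $(p,\xi)$ in a neighborhood of $(p_0,\xi_0)$, with $\eta_\eps\to 0$ (uniformity in $\xi$ over the bounded set $W$ is automatic because the remainder is a quadratic polynomial in $\xi$ whose coefficients tend to zero uniformly). Since $(p_n,\psi_*X_n)\to (p_0,\xi_0)$, they lie in this neighborhood for large $n$, and reading off $\Ric[g\star_M\rho_{\eps_n}](X_n,X_n)\geq -\eta_{\eps_n}\to 0$ contradicts the standing assumption $\leq -\delta$. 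The main obstacle is the Friedrichs-type estimate of the previous paragraph: in the $C^{1,1}$ setting of \cite[Lemma~3.2]{hawkingc11} one had $\Ric\in L^\infty_\mathrm{loc}$ and could simply invoke a standard Friedrichs mollifier lemma, but here $\Ric$ is only a distribution and one must exploit the sharper approximation built into Lemma \ref{lem:approxnice} (in particular $\check{g}_\eps - g\star_M\rho_\eps\to 0$ in $C^2_\mathrm{loc}$ together with the $O(\eps)$ bound of Lemma \ref{lem:approxnice}(ii)) via Lemma \ref{lem:Fthesecond}, which is precisely the reason those refined estimates were set up.
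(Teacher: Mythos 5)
Your proof is essentially correct and follows the same route as the paper's: reduce from $\check{g}_\eps$ to $g\star_M\rho_\eps$ via Lemma~\ref{lem:convolinsteadofcheck}, then establish $\Ric[g\star_M\rho_\eps]-\Ric\star_M\rho_\eps\to 0$ locally uniformly using \eqref{eq:invest2} together with the Friedrichs commutator estimate of Lemma~\ref{lem:Fthesecond}, and finally exploit that the constant coordinate extension of a nearly-$g$-unit-timelike vector is itself timelike on a small enough neighborhood so that the strong energy condition and \eqref{eq:posofsmoothing} give pointwise non-negativity of $\Ric\star_M\rho_\eps(X,X)$. The only differences are cosmetic: you run a contradiction/subsequence argument directly on \eqref{mine}, while the paper first restates the goal in the form \eqref{suffest} and then, after the Friedrichs step, outsources the constant-extension argument to \cite[Lemma~3.2]{hawkingc11} (with the small caveat of replacing extension-by-zero with a smooth cutoff, which your phrase ``for all $p$ in any relatively compact $V'\comp V$ and all sufficiently small $\eps$'' quietly absorbs). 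One small imprecision: you describe \eqref{eq:invest2} and Lemma~\ref{lem:Fthesecond} as handling two separate pieces, whereas in fact \eqref{eq:invest2} is precisely the $O(\eps)$ hypothesis on $a_\eps=g^{ij}_\eps$ that makes Lemma~\ref{lem:Fthesecond} applicable to the single commutator $a_\eps(f\star\rho_\eps)-(af)\star\rho_\eps$ (the internal splitting you gesture at happens inside that lemma's proof, not before it); this does not affect the validity of the argument.
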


\begin{proof} Because $\check{g}_\eps \to g$ uniformly on compact sets it is sufficient to establish that \eqref{suffest} remains true for $C^1$-metrics, as this is easily seen to imply \eqref{mine}. To this end, we follow the proof in \cite[Lemma~3.2]{hawkingc11}.

As in their case, it suffices if we prove the result for approximations $g_\eps := g \star_M \rho_{\eps} $,
because $\Ric[\check{g}_\eps]-\Ric[g \star_M \rho_{\eps}]\to 0$ locally uniformly by Lemma~\ref{lem:convolinsteadofcheck}.
As in \cite[Lemma~3.2]{hawkingc11} we proceed by showing that $\Ric[g_\eps ]-\Ric \star_M \rho_{\eps} \to 0$ locally uniformly: 

This is clearly ok for all terms involving only the Christoffel symbols themselves and not their derivatives (i.e., terms involving at most first derivatives of the metric). To deal with the terms involving derivatives of the Christoffel symbols we proceed as in Lemma~\ref{lem:convolinsteadofcheck}: We look at the problem in a chart $\psi$ and careful analysis of the behavior of the relevant terms under a change of charts $\psi_\beta \circ \psi^{-1}$ shows that it is sufficient to show that 
\[((\psi_\beta)_*g_\eps)^{ij} ((\xi \partial_k ((\psi_\beta)_*g)_{lm})\star\rho_{\eps} )-(((\psi_\beta)_*g)^{ij} \xi \partial_k ((\psi_\beta)_*g)_{lm})\star\rho_{\eps} \to 0\] 
in $C^1(\psi_\beta(U_\beta))$ for any smooth function $\xi$ on $\psi_\beta(U_\beta)\sse \R^n$ with compact support in $\psi_\beta(U_\beta)$. 
By \eqref{eq:invest2}, for all compact $K$ there exist $c_K$ such that
$||g^{ij}-g^{ij}_\eps||_{\infty,K}\leq c_K \eps $, so 
this follows from a general version of Friedrichs Lemma (which will be proved in Lemma \ref{lem:Fthesecond}). 

Thus, we may replace $\Ric [g_\eps]$ with $\Ric \star_M \rho_\eps$. Since we are in a compact set the latter is, essentially, just a finite sum of component wise convolutions in charts. So it suffices to look at the case where $M=\R^n$ and the rest of the argument proceeds exactly as in \cite[Lemma~3.2]{hawkingc11}, with only one small modification due to Ric only being distributional: In \cite[eq.~(8)]{hawkingc11} we replace setting $\tilde{\Ric}$ equal to zero outside of a small ball by multiplication with an appropriate positive smooth cut-off function. The rest goes through word for word (keeping in mind \eqref{eq:posofsmoothing}).
\end{proof}

\paragraph{A Friedrichs Lemma and related convergence results.}
We are now going to show the exact version of the Friedrichs Lemma used in the proof of Lemma~\ref{lem:approxnice}. The arguments are based on similar estimates as in \cite[Section 2]{CdLS:12}. We start by showing that if $f$ is only continuous but not Lipschitz, then $\partial_j(f\star \rho_\eps)$ generally diverges as $\eps\to 0$, but it does not diverge too fast:
\begin{Lemma}\label{lem:Ftheveryfirst}
Let $f$ be a continuous function on $\R^n$ and  let $\rho_\eps $ be a standard mollifier. Then $\eps ||\partial_j(f\star\rho_{\eps})||_{K,\infty}\to 0$ as $\eps\to 0$ for any compact set $K\sse \R^n$.
\end{Lemma}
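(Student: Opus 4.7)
The plan is to exploit the rescaling structure of standard mollifiers together with the fact that $\int_{\R^n}\partial_j\rho(z)\,dz=0$ (since $\rho$ has compact support) in order to rewrite $\eps\partial_j(f\star\rho_\eps)$ as a weighted average of increments $f(x-\eps z)-f(x)$, which then go to zero uniformly on $K$ by the uniform continuity of $f$ on a suitable compact neighborhood.

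More concretely, first I would move the derivative onto the mollifier and use the scaling $\partial_j\rho_\eps(y)=\eps^{-n-1}(\partial_j\rho)(y/\eps)$ to write, for $x\in K$,
\[
\partial_j(f\star\rho_\eps)(x)=\int_{\R^n} f(x-y)\,\partial_j\rho_\eps(y)\,dy=\eps^{-1}\int_{\R^n} f(x-\eps z)\,(\partial_j\rho)(z)\,dz,
\]
after the substitution $z=y/\eps$. Multiplying by $\eps$ and subtracting the vanishing term $f(x)\int\partial_j\rho(z)\,dz=0$ yields
\[
\eps\,\partial_j(f\star\rho_\eps)(x)=\int_{\R^n}\bigl[f(x-\eps z)-f(x)\bigr]\,(\partial_j\rho)(z)\,dz.
\]

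Now fix a compact set $K\sse\R^n$ and let $K':=K+\overline{B_1(0)}$, which is still compact. For all $\eps\leq 1$ and all $z\in\supp\rho\sse\overline{B_1(0)}$ the point $x-\eps z$ lies in $K'$, so I may bound the integrand using the modulus of continuity
\[
\omega_f(\delta):=\sup\{|f(p)-f(q)|:p,q\in K',\ |p-q|\leq\delta\},
\]
which is finite and satisfies $\omega_f(\delta)\to 0$ as $\delta\to 0$ by uniform continuity of $f$ on the compact set $K'$. This gives
\[
\eps\,\|\partial_j(f\star\rho_\eps)\|_{\infty,K}\leq\omega_f(\eps)\,\|\partial_j\rho\|_{L^1(\R^n)}\xrightarrow{\eps\to 0}0,
\]
which is the claim. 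The only real subtlety is the cancellation step using $\int\partial_j\rho=0$; without it one would merely get boundedness rather than convergence to zero, so everything hinges on this observation together with uniform continuity on $K'$.
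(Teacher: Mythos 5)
Your proof is correct and follows essentially the same strategy as the paper's: exploit the cancellation coming from $\int\partial_j\rho=0$ (the paper phrases this as $c\star\partial_j\rho_\eps\equiv 0$ for constants $c$), invoke uniform continuity of $f$ on a compact neighborhood of $K$, and observe that the scaling of the mollifier makes $\eps\int|\partial_j\rho_\eps|$ independent of $\eps$. The only difference is cosmetic — you perform the substitution $z=y/\eps$ up front, whereas the paper works directly with $\partial_j\rho_\eps$ and changes variables only at the end.
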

\begin{proof} Since $c\star \partial_j\rho_\eps\equiv 0$ for any constant $c$ we have \[\partial_j(f\star\rho_{\eps})(x)=(f\star \partial_j\rho_{\eps})(x)=((f-f(x))\star \partial_j\rho_{\eps})(x).\]  Let $\tilde{\eps}>0$, then there exists $\eps_0$ such that $|f(x-y)-f(x)|\leq \tilde{\eps}$ for all $x \in K$ and $y$ in the ball of radius $\epsilon_0$ around $0$, i.e., $y\in B_{\eps_0} $. Let $\eps <\eps_0$, then
\begin{align*}
	 \eps |((f-f(x))\star \partial_j \rho_{\eps})(x)|&\leq \eps \int_{B_{\eps}} |f(y-x)-f(x)|\, |\partial_j\rho_{\eps}(y)| dy 
	\leq \eps \tilde{\eps} \int_{B_{\eps}}|\partial_j\rho_{\eps}(y)|dy.
	\end{align*}
	So it is enough to estimate that $\eps\int_{B_{\eps}}|\partial_j\rho_{\eps}(y)|dy$ remains bounded. We have \[\eps \partial_j\rho_{\eps}(y)=\eps \frac{1}{\eps^n}\partial_j(\rho(\frac{y}{\eps}))=\eps\frac{1}{\eps^{n+1}}(\partial_j\rho)(\frac{y}{\eps})=\frac{1}{\eps^{n}}(\partial_j\rho)(\frac{y}{\eps})\]
and thus by the integral transformation formula $\int_{B_{\eps}}|\eps\partial_j\rho_{\eps}(y)|dy=\int_{B_{1}}|(\partial_j\rho)(y)|dy$ is independent of $\eps$.
\end{proof}

Next, 
we show convergence of $(a\star \rho_\eps) (f\star \rho_\eps)- (af)\star \rho_\eps \to 0$ in $C^1_{\mathrm{loc}}$ if $a\in C^1$ and $f\in C^0$ using essentially the same techniques as in the previous proof.
\begin{Lemma}[A Friedrichs Lemma] \label{lem:Fthefirst} Let $a \in C^1(\R^n)$, $f\in C^0(\R^n)$ and $\rho_\eps $ be a standard mollifier. Then $(a\star \rho_\eps) (f\star \rho_\eps)- (af)\star \rho_\eps \to 0$ in $C^1(K)$ for any compact $K\subseteq \R^n$.
\end{Lemma}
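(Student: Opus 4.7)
The plan is to verify convergence separately at the $C^0$ and $C^1$ levels. The $C^0$ part is routine: $a\star\rho_\eps\to a$ and $f\star\rho_\eps\to f$ locally uniformly, and since $af\in C^0$ also $(af)\star\rho_\eps\to af$ locally uniformly, so both summands converge to $af$ and their difference tends to $0$ in $C^0(K)$ for any compact $K$. For the $C^1$ part I would split
\[
(a\star\rho_\eps)(f\star\rho_\eps) - (af)\star\rho_\eps = \bigl[(a\star\rho_\eps)-a\bigr](f\star\rho_\eps) + \bigl[a(f\star\rho_\eps) - (af)\star\rho_\eps\bigr]
\]
and treat the two brackets separately.

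The first bracket, upon differentiation, becomes $\partial_j((a\star\rho_\eps)-a)(f\star\rho_\eps) + ((a\star\rho_\eps)-a)(f\star\partial_j\rho_\eps)$. The first summand tends to $0$ locally uniformly since $\partial_j a\in C^0$ (so $(\partial_j a)\star\rho_\eps\to\partial_j a$ in $C^0_\mathrm{loc}$) and $f\star\rho_\eps$ is locally uniformly bounded. For the second, $|a(x-y)-a(x)|\leq\|\nabla a\|_{\infty,K'}|y|$ on a compact neighborhood $K'\supset K$ yields $\|a\star\rho_\eps-a\|_{\infty,K}=O(\eps)$, while Lemma~\ref{lem:Ftheveryfirst} gives $\eps\|f\star\partial_j\rho_\eps\|_{\infty,K}\to 0$; the product is $o(1)$ uniformly on $K$.

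The second bracket, i.e.\ the Friedrichs commutator, is the heart of the argument. Differentiating produces
\[
\partial_j a\,(f\star\rho_\eps) + a\,(f\star\partial_j\rho_\eps) - (af)\star\partial_j\rho_\eps,
\]
where the first term converges locally uniformly to $\partial_j a\cdot f$. I would rewrite the remaining two terms as
\[
a(x)(f\star\partial_j\rho_\eps)(x) - ((af)\star\partial_j\rho_\eps)(x) = \int [a(x)-a(x-y)]\,f(x-y)\,\partial_j\rho_\eps(y)\,dy,
\]
insert the mean-value identity $a(x)-a(x-y) = \int_0^1\nabla a(x-ty)\cdot y\,dt$, and rescale $y=\eps z$, using that $y^i\partial_j\rho_\eps(y)\,dy$ transforms into $z^i\partial_j\rho(z)\,dz$, to arrive at
\[
\sum_i\int_0^1\!\int \partial_i a(x-t\eps z)\,f(x-\eps z)\,z^i\,\partial_j\rho(z)\,dz\,dt.
\]
By uniform continuity of $\partial_i a$ and $f$ on a compact neighborhood of $K$ (and dominated convergence in $(t,z)$), this converges uniformly in $x\in K$ to $\sum_i\partial_i a(x)\,f(x)\int z^i\partial_j\rho(z)\,dz = -\partial_j a(x)\,f(x)$, using the integration-by-parts identity $\int z^i\partial_j\rho(z)\,dz=-\delta_{ij}$. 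This cancels the $\partial_j a\cdot f$ contribution exactly, so the derivative of the commutator tends to $0$ uniformly on $K$.

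The main obstacle is the commutator step: each of $a(f\star\partial_j\rho_\eps)$ and $(af)\star\partial_j\rho_\eps$ individually only has the $o(1/\eps)$ control from Lemma~\ref{lem:Ftheveryfirst}, so one must exploit the $C^1$ regularity of $a$, via the mean-value integral representation and the rescaling $y=\eps z$, to extract the $O(\eps)$ cancellation that not only makes the difference of the two bounded but also pins down its limit as $-\partial_j a\cdot f$.
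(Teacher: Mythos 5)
Your proof is correct, and it takes a genuinely different route from the paper's. The paper works directly with the $j$-th derivative of the full expression, $h_\eps=(af)\star\partial_j\rho_\eps-(a\star\partial_j\rho_\eps)(f\star\rho_\eps)-(a\star\rho_\eps)(f\star\partial_j\rho_\eps)$, and uses the identities $c\star\rho_\eps\equiv c$, $c\star\partial_j\rho_\eps\equiv 0$ (for constants $c$) to rewrite it as a sum of three terms in which each factor has had the ``constant at $x$'' subtracted off; each term is then estimated in one shot by a constant times the modulus of continuity of $f$, using the Lipschitz bound $|a(x-y)-a(x)|\leq C_K|y|$ to kill the $1/\eps$ coming from $\partial_j\rho_\eps$. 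You instead peel off the easy piece $\bigl[(a\star\rho_\eps)-a\bigr](f\star\rho_\eps)$ and isolate the Friedrichs commutator $a(f\star\rho_\eps)-(af)\star\rho_\eps$; for its derivative you use the fundamental-theorem-of-calculus representation of $a(x)-a(x-y)$ and the rescaling $y=\eps z$ to compute the limit explicitly, observing that $\int z^i\partial_j\rho\,dz=-\delta_{ij}$ produces exactly $-\partial_j a\cdot f$ to cancel the surviving $\partial_j a\,(f\star\rho_\eps)$ term. Both approaches exploit the same $O(\eps)$ gain from the $C^1$ regularity of $a$ against the $O(1/\eps)$ blowup of $\partial_j\rho_\eps$. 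The paper's version is a more self-contained elementary estimate and delivers an explicit quantitative bound in terms of the modulus of continuity of $f$; yours is closer to the classical formulation of Friedrichs' commutator lemma, makes the cancellation mechanism transparent, and generalizes cleanly (e.g.\ to nonstandard mollifiers where $\int z^i\partial_j\rho\,dz$ would be computed the same way).
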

\begin{proof}
We only need to show that the first derivatives converge to zero locally uniformly, i.e. we need to estimate
\begin{align*}
h_\eps (x):= \big((af)\star \partial_j\rho_\eps-(a\star \partial_j \rho_\eps) (f\star \rho_\eps)-(a\star  \rho_\eps) (f\star \partial_j\rho_\eps) \big)(x)
\end{align*}
for $j=1,\dots,n$ and $x$ in $K$. We have
\begin{align*}
h_\eps &= ((a-a(x))f)\star \partial_j\rho_\eps+a(x)(f\star \partial_j\rho_\eps)-(a\star \partial_j \rho_\eps) (f\star \rho_\eps)-(a\star  \rho_\eps) (f\star \partial_j\rho_\eps) \\
&= ((a-a(x))f)\star \partial_j\rho_\eps-(a\star \partial_j \rho_\eps) (f\star \rho_\eps)-((a-a(x))\star  \rho_\eps) (f\star \partial_j\rho_\eps) \\
&=  ((a-a(x))(f-f(x)))\star \partial_j\rho_\eps-(a(x)f(x))\star \partial_j\rho_\eps\\
&\,\,\,\,\,-(a\star \partial_j \rho_\eps) ((f-f(x))\star \rho_\eps)
-((a-a(x))\star  \rho_\eps) (f\star \partial_j\rho_\eps) \\
&= ((a-a(x))(f-f(x)))\star \partial_j\rho_\eps-(a\star \partial_j \rho_\eps) ((f-f(x))\star \rho_\eps)
\\&-((a-a(x))\star  \rho_\eps) (f\star \partial_j\rho_\eps),
\end{align*}
where we used $c\star \rho_\eps\equiv c$ and $c\star \partial_j\rho_\eps\equiv 0$ for any constant $c$. Let $C_K:=||\partial a||_{\infty,K+B_1}$, $\tilde{\eps}>0$ and assume that $\eps $ is small enough such that $|f(x-y)-f(x)|\leq \tilde{\eps}$ for all $x \in K$ and $y\in B_\eps $. We now estimate $h_\eps (x)$ term by term: First,
\begin{align*}
& \big|\int ((a(x-y)-a(x))(f(x-y)-f(x)) \partial_j\rho_\eps(y) dy\big|\leq C_K \int |f(x-y)-f(x)| |y| |\partial_j\rho_\eps(y)| dy \\
& \leq C_K \tilde{\eps} \int |y| |\partial_j\rho_\eps(y)| dy \leq  C_K \tilde{\eps} \int \eps  |\partial_j\rho_\eps(y)| dy = C_K' \, \tilde{\eps }.
\end{align*}
Second,
\begin{align*}
& \big|\int a(x-y) \partial_j \rho_\eps(y) dy \int (f(x-y)-f(x))\rho_\eps(y) dy \big|\leq \tilde{\eps } \, \big| \int (a(x-y)-a(x)) \partial_j \rho_\eps (y)dy  \big| \\
&\leq \tilde{\eps } C_K \int |y| |\partial_j \rho_\eps(y) | dy \leq  \tilde{\eps } C_K'. 
\end{align*}
And third,
\begin{align*}
& \big|\int (a(x-y)-a(x)) \rho_\eps(y) dy \int (f(x-y)-f(x)) \partial_j\rho_\eps(y) dy \big|\leq C_K \eps \, \tilde{\eps } \, \int |\partial_j \rho_\eps (y)| dy  \leq  \tilde{\eps } C_K'. 
\end{align*}
\end{proof}

Combining the previous two Lemmas immediately also gives the following:

\begin{Lemma} \label{lem:Fthesecond} Let $a,a_\eps \in C^1_{\mathrm{loc}}(\R^n)$ and assume that $a_\eps \to a$ in $C^1_{\mathrm{loc}}$ and that for all compact $K$ there exists $c_K$ such that $||a_\eps -a||_{\infty, K}\leq c_K \, \eps$. Further, let $f\in C^0(\R^n)$. Then $a_\eps (f\star \rho_\eps)- (af)\star \rho_\eps \to 0$ in $C^1(K)$ for all compact $K\subseteq \R^n$.
\end{Lemma}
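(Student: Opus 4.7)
The plan is to reduce this to the previous two lemmas via an add-and-subtract argument. Writing
\[ a_\eps(f\star\rho_\eps) - (af)\star\rho_\eps = \bigl(a_\eps - a\bigr)(f\star\rho_\eps) + \bigl[a(f\star\rho_\eps) - (af)\star\rho_\eps\bigr], \]
the $C^0$-convergence on compacts is immediate: the first summand is bounded by $c_K\eps \cdot \|f\star\rho_\eps\|_{\infty,K}$ which tends to zero since $f\star\rho_\eps$ is locally uniformly bounded, and the second summand already converges to zero in $C^1_{\mathrm{loc}}(\R^n)$ by Lemma~\ref{lem:Fthefirst}.

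The work is at the level of first derivatives. Applying $\partial_j$ and Leibniz we obtain
\[ \partial_j\bigl[a_\eps(f\star\rho_\eps) - (af)\star\rho_\eps\bigr] = (\partial_j a_\eps)(f\star\rho_\eps) + a_\eps(f\star\partial_j\rho_\eps) - (af)\star\partial_j\rho_\eps. \]
I would then add and subtract $(\partial_j a)(f\star\rho_\eps) + a(f\star\partial_j\rho_\eps)$ to regroup this into three terms:
\[ \underbrace{(\partial_j a_\eps - \partial_j a)(f\star\rho_\eps)}_{(\mathrm{I})} \;+\; \underbrace{\bigl[(\partial_j a)(f\star\rho_\eps) + a(f\star\partial_j\rho_\eps) - (af)\star\partial_j\rho_\eps\bigr]}_{(\mathrm{II})} \;+\; \underbrace{(a_\eps - a)(f\star\partial_j\rho_\eps)}_{(\mathrm{III})}. \]
Term (I) tends to zero locally uniformly because $\partial_j a_\eps \to \partial_j a$ locally uniformly (by the $C^1_{\mathrm{loc}}$-convergence hypothesis) and $f\star\rho_\eps$ is locally uniformly bounded. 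Term (II) is precisely $\partial_j\bigl[a(f\star\rho_\eps) - (af)\star\rho_\eps\bigr]$ and so tends to zero locally uniformly by Lemma~\ref{lem:Fthefirst}.

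The one nontrivial piece is term (III), and this is where the sharp quantitative bound $\|a_\eps - a\|_{\infty,K}\leq c_K\eps$ comes in essentially. Here $f\star\partial_j\rho_\eps$ itself may blow up in $\eps$ (this is why we need $f\in C^1$ to get a pointwise bound in general), but we can factor
\[ (a_\eps - a)(f\star\partial_j\rho_\eps) = \frac{a_\eps - a}{\eps}\cdot\eps\,(f\star\partial_j\rho_\eps), \]
where the first factor is uniformly bounded by $c_K$ on $K$, and the second factor tends to zero locally uniformly by Lemma~\ref{lem:Ftheveryfirst}. So (III) $\to 0$ uniformly on $K$ as well, completing the proof. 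The main obstacle is clearly term (III): without the quantitative rate $\|a_\eps - a\|_{\infty,K}=O(\eps)$ one could only compensate the growth of $f\star\partial_j\rho_\eps$ by a non-quantitative $o(1)$ estimate, which would not suffice — this is exactly the rigidity warned about in the paragraph preceding Lemma~\ref{lem:Fthefirst}.
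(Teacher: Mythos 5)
Your overall strategy matches the paper's (add-and-subtract, then invoke Lemmas \ref{lem:Ftheveryfirst} and \ref{lem:Fthefirst}), but you misapply Lemma \ref{lem:Fthefirst}. That lemma is stated for $(a\star\rho_\eps)(f\star\rho_\eps) - (af)\star\rho_\eps$, not for $a(f\star\rho_\eps) - (af)\star\rho_\eps$, yet you cite it for the latter in both your $C^0$ step and for term (II). The discrepancy between the two expressions is $(a - a\star\rho_\eps)(f\star\rho_\eps)$, and its $C^1$-convergence to zero is not a consequence of Lemma \ref{lem:Fthefirst}: the cross term $(a - a\star\rho_\eps)(f\star\partial_j\rho_\eps)$ in its derivative needs exactly the same quantitative treatment as your term (III), namely the $O(\eps)$ bound $||a - a\star\rho_\eps||_{\infty,K}\leq c_K\eps$ (which holds since $a\in C^1$, by Proposition \ref{prop:convofconvol}(ii), i.e.\ \eqref{eq:convollesseps}) combined with Lemma \ref{lem:Ftheveryfirst}.

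The paper avoids this by decomposing with respect to $a\star\rho_\eps$ rather than $a$:
\[a_\eps(f\star\rho_\eps) - (af)\star\rho_\eps = (a_\eps - a\star\rho_\eps)(f\star\rho_\eps) + \bigl[(a\star\rho_\eps)(f\star\rho_\eps) - (af)\star\rho_\eps\bigr],\]
so that Lemma \ref{lem:Fthefirst} applies verbatim to the bracket, while the first summand is handled by $||a_\eps - a\star\rho_\eps||_{\infty,K}\leq c'_K\eps$ (your hypothesis together with \eqref{eq:convollesseps}), the $C^1_{\mathrm{loc}}$-convergence $a_\eps - a\star\rho_\eps\to 0$, and Lemma \ref{lem:Ftheveryfirst} --- precisely the mechanism of your term (III). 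Your argument is thus easily repaired: insert one more add-and-subtract, $a(f\star\rho_\eps) - (af)\star\rho_\eps = (a - a\star\rho_\eps)(f\star\rho_\eps) + \bigl[(a\star\rho_\eps)(f\star\rho_\eps) - (af)\star\rho_\eps\bigr]$, and treat the first summand as you treated term (III). Your handling of terms (I) and (III), and in particular your identification of (III) as the place where the rate $||a_\eps - a||_{\infty,K}\leq c_K\eps$ is essential, is correct.
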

\begin{proof} We estimate $a_\eps (f\star \rho_\eps)- (af)\star \rho_\eps = (a_\eps-a\star \rho_\eps) (f\star \rho_\eps)+(a\star \rho_\eps) (f\star \rho_\eps)- (af)\star \rho_\eps $. The first term converges as desired by $C^1$ convergence of $a_\eps -a\star \rho_\eps \to 0$ and Lemma \ref{lem:Ftheveryfirst} 
	(note that $||a_\eps-a\star \rho_\eps||_{\infty,K}\leq c'_K\eps$). 
The second term converges by Lemma \ref{lem:Fthefirst}. 
	\end{proof}

\subsection{Proof of Hawking's theorems for $C^1$-metrics}

Now let $\Sigma\sse M$ be a spacelike hypersurface and $\mathbf{n}$ be the future pointing unit normal vector field along $\Sigma$. Since $g\in C^1$, also $\mathbf{n} \in C^1$. We define the second fundamental form by 
$\mathrm{II}(V,W):=\mbox{nor}(\nabla_VW)$ for $C^1$-vector fields $V,W$  tangential to $\Sigma$ and the mean curvature vector $H$ by \[H:=\frac{1}{n-1} \mathrm{tr}_\Sigma \mathrm{II}=\frac{1}{n-1} \sum_{i=1}^{n-1} \mathrm{II}(E_i,E_i)\]
for (local) orthonormal vector fields $E_i$ tangential to $\Sigma$.

The following lemma is a well known generalization of estimates used in the classical proof of Hawking's theorem. Still, in the interest of completeness we briefly sketch this routine argument here.
\begin{Lemma} \label{lem:hawkingprep} Let $g$ be smooth. Let $\Sigma $ be a spacelike hypersurface and $\gamma :[0,b]\to M$, $p:=\gamma(0)\in \Sigma $ be a future directed unit speed timelike~geodesic that maximizes the distance to $\Sigma $, i.e., $L(\gamma|_{[0,s]})=\tau_\Sigma(\gamma(s))$ for all $s\in[0, b]$. Then $\dot{\gamma}(0)\perp T\Sigma $. Further, if $(n-1) g(H,\dot{\gamma}(0))\geq \beta >0$ and $\Ric(\dot{\gamma},\dot{\gamma})\geq -\delta $ with $0\leq \delta \leq \frac{3\beta}{b}(1-c)$ for some $0<c\leq 1$, then $b\leq \frac{n-1}{c \beta}$.
\end{Lemma}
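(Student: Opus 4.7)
The plan is the classical Hawking argument: the first variation of arc-length yields the orthogonality at $t=0$, while the second variation applied to a linearly decaying test function along a parallel orthonormal frame yields the bound on $b$. Because $g$ is smooth throughout this lemma, all classical variational machinery is available.

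For the orthogonality, take any variation $\alpha(s,t)$ of $\gamma$ with $\alpha(s, 0) \in \Sigma$ and $\alpha(s, b) = \gamma(b)$, and variational field $V := \partial_s \alpha|_{s=0}$ satisfying $V(0) \in T_p\Sigma$. The first variation of arc-length combined with the geodesic equation $\nabla_{\dot\gamma}\dot\gamma = 0$ reduces $L'(0)$ to the boundary term $g(V(0), \dot\gamma(0))$. Maximality of $\gamma$ forces $L'(0) = 0$ for every such admissible $V(0)$, hence $\dot\gamma(0) \perp T_p\Sigma$.

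For the estimate on $b$, choose parallel orthonormal vector fields $E_1, \ldots, E_{n-1}$ along $\gamma$ with $E_i(0)$ a basis of $T_p\Sigma$, set $f(t) := 1 - t/b$, and let $V_i := f E_i$. These satisfy $V_i \perp \dot\gamma$, $V_i(0) \in T_p\Sigma$, and $V_i(b) = 0$. Maximality of $\gamma$ to $\Sigma$ precludes focal points of $\Sigma$ in $(0, b]$, so the second-variation / index form (in the sign convention making it nonnegative on maximal timelike geodesics from $\Sigma$) satisfies $I(V_i, V_i) \geq 0$. Summing over $i$, using $V_i' = f' E_i$ by parallelism, $\sum_i g(R(E_i, \dot\gamma)\dot\gamma, E_i) = \Ric(\dot\gamma, \dot\gamma)$, and $\sum_i g(\mathrm{II}(E_i(0), E_i(0)), \dot\gamma(0)) = (n-1)\, g(H, \dot\gamma(0))$, collapses the $n-1$ inequalities to
\begin{equation*}
(n-1) \int_0^b (f')^2\, dt \;-\; \int_0^b f^2\, \Ric(\dot\gamma, \dot\gamma)\, dt \;\geq\; (n-1)\, g(H, \dot\gamma(0)).
\end{equation*}

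Inserting $f(t) = 1 - t/b$ gives $\int_0^b (f')^2\, dt = 1/b$ and $\int_0^b f^2\, dt = b/3$, and together with $\Ric(\dot\gamma,\dot\gamma) \geq -\delta$ and $(n-1)g(H, \dot\gamma(0)) \geq \beta$ this yields $(n-1)/b + \delta b/3 \geq \beta$, equivalently $\beta - (n-1)/b \leq \delta b/3$. The hypothesis $\delta \leq \tfrac{3\beta}{b}(1-c)$ bounds the right-hand side by $\beta(1-c)$, so $(n-1)/b \geq \beta c$, i.e.\ $b \leq (n-1)/(c\beta)$. The one point requiring care is the bookkeeping of sign conventions in the second-variation formula for timelike geodesics from a spacelike hypersurface and the identification of the boundary term with $(n-1) g(H, \dot\gamma(0))$; both are entirely standard in the smooth Lorentzian setting of this lemma.
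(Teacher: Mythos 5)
Your proposal is correct and follows essentially the same route as the paper: first variation for orthogonality, then the index form applied to $V_i=(1-t/b)E_i$ with $E_i$ a parallel orthonormal frame along $\gamma$, summing over $i$ to produce $\Ric(\dot\gamma,\dot\gamma)$ and $(n-1)g(H,\dot\gamma(0))$, and finally the same elementary algebra with $\int_0^b(f')^2=1/b$, $\int_0^b f^2=b/3$ to obtain $b\leq(n-1)/(c\beta)$. The only cosmetic difference is that you invoke absence of focal points to justify nonnegativity of the index form, whereas the paper argues directly from $L''(0)\leq 0$ for the maximizing geodesic; both are standard and equivalent in the smooth setting.
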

\begin{proof}
That $\dot{\gamma}(0)\perp T\Sigma $ is standard and does not involve curvature (cf.~e.g.,~\cite[Corollary~10.26]{ON83}). In particular, since $\gamma $ was assumed to have unit speed, $\dot{\gamma}(0)$ is the future pointing unit normal for $\Sigma$ at $p$. The second claim is standard for $\Ric(\dot{\gamma},\dot{\gamma})\geq 0$ (cf.~e.g., \cite[Proposition~10.37]{ON83}), but similar estimates work if $\Ric(\dot{\gamma},\dot{\gamma})\geq -\delta $: If $\gamma $ maximizes the distance to $\Sigma $, it is a maximizing geodesic, hence for any t.l.~variation $\gamma_\eps$ of $\gamma$ with $\gamma_\eps(0)\in \Sigma$ and $\gamma_\eps(b)=\gamma(b)$  we have $\frac{d^2}{d\eps ^2}\big|_{\eps =0}L(\gamma_\eps )\leq  0$. Let $\{e_i\}$ be an ONB of $T_p\Sigma $ and let $E_i$ be the vector field along $\gamma$ obtained by parallel transport of $e_i$. Then, for any variation with variational vector field $V(s):=(1-\frac{s}{b})E_i(s)$ this yields
\[ 0\leq \frac{1}{b} - \int_0^b (1 - \frac{s}{b})^2 g(R(\dot{\gamma}, E_i)E_i, \dot{\gamma}) ds - g(\mathrm{II}(e_i,e_i), \dot{\gamma}(0)). 
\]
Taking the trace gives
\[ 0\leq \frac{n-1}{b} - \int_0^b (1-\frac{s}{b})^2 \Ric(\dot{\gamma},\dot{\gamma}) ds -\beta \leq \frac{n-1}{b}+\delta \frac{b}{3} -\beta \leq \frac{n-1}{b}+\beta (1-c)-\beta=\frac{n-1}{b}-c\beta. \qedhere
\]
\end{proof}
\medskip 
We can now state and prove our first version of Hawking's theorem for $C^1$-metrics. This is a direct generalization of \cite[Theorem~14.55B]{ON83}.
\medskip
\begin{Theorem}[$C^1$-Hawking, version 1]\label{thm:c1hawk1} Let $(M,g)$ be a time-oriented Lorentzian manifold with $g\in C^1$. If $(M,g)$ has non-negative timelike Ricci curvature (in the distributional sense of Def.~\ref{def:sec}) and there exists a smooth compact spacelike hypersurface $\Sigma$ with $g(H,\mathbf{n})>0$
	on $\Sigma$, then M is not timelike geodesically complete (i.e., there exists at least one incomplete timelike geodesic). 
\end{Theorem}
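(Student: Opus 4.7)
The plan is to argue by contradiction, reducing to the smooth case via the approximations $\check g_\eps\prec g$ from Lemma \ref{lem:approxnice}, the long-existence-time results of Section \ref{sec:geod}, Lemma \ref{lem:thecurvatureapprox} for the almost-nonnegative Ricci bound, and the smooth focal-point estimate of Lemma \ref{lem:hawkingprep}. Assume for contradiction that $(M,g)$ is timelike geodesically complete. Since $\Sigma$ is compact and $g\in C^1$, the $g$-future unit normal $\mathbf n$, the second fundamental form $\mathrm{II}$, and the mean curvature vector $H$ are continuous on $\Sigma$, so $\beta:=\min_\Sigma(n-1)g(H,\mathbf n)>0$. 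The $C^1_{\mathrm{loc}}$-convergence $\check g_\eps\to g$ together with compactness of $\Sigma$ ensures that the $\check g_\eps$-analogues $\mathbf n_\eps,\mathrm{II}_\eps,H_\eps$ are well-defined for $\eps$ small and converge uniformly on $\Sigma$; hence $(n-1)\check g_\eps(H_\eps,\mathbf n_\eps)\geq\beta/2$ on $\Sigma$ for all $\eps<\eps_1$.

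Next, I would transfer long existence to the approximations. Let $K\subset TM$ be a compact set containing both the $g$- and $\check g_\eps$-unit future normals to $\Sigma$ for $\eps<\eps_1$. By the assumed $g$-completeness and Propositions \ref{prop:longexistence} and \ref{prop:relcompactfunnels}, for every $N\in\mathbb N$ and $\eps$ small, every $\check g_\eps$-geodesic starting in $K$ exists on $[0,N]$, with all such trajectories contained in a common relatively compact $F\subset TM$. Lemma \ref{lem:thecurvatureapprox} then yields $\mathrm{Ric}[\check g_\eps](X,X)>-\delta$ on $\{X\in F:\check g_\eps(X,X)=-1\}$ for any prescribed $\delta>0$ and $\eps$ small. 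Fixing $N$ much larger than $(n-1)/\beta$ and $\delta$ correspondingly small, Lemma \ref{lem:hawkingprep} (applied to $\check g_\eps$ with $\beta/2$ in place of $\beta$ and some $c$ close to $1$) then provides a uniform upper bound $B<N$ on the $\check g_\eps$-length of any $\check g_\eps$-unit future normal $\check g_\eps$-geodesic from $\Sigma$ that $\check g_\eps$-maximizes the $\check g_\eps$-distance to $\Sigma$.

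To close the argument I would exploit the Cauchy development $D^+_{\check g_\eps}(\Sigma)$, which is smoothly globally hyperbolic with $\Sigma$ as a Cauchy hypersurface. Applying Corollary \ref{cor:sigmamaxgeod} inside $D^+_{\check g_\eps}(\Sigma)$ produces, for each point of $D^+_{\check g_\eps}(\Sigma)\cap I^+_{\check g_\eps}(\Sigma)$, a $\check g_\eps$-maximizing $\check g_\eps$-normal $\check g_\eps$-geodesic from $\Sigma$; the length bound from the previous step therefore gives $\tau_{\Sigma,\check g_\eps}\leq B$ on this set. \textbf{The main obstacle} is then to show that, for each small $\eps$, at least one $\check g_\eps$-unit future normal $\check g_\eps$-geodesic $\gamma_{p,\eps}$ (which exists on $[0,N]$ by Step 2) remains inside $D^+_{\check g_\eps}(\Sigma)$ up to some parameter $t>B$: this would yield $\tau_{\Sigma,\check g_\eps}(\gamma_{p,\eps}(t))\geq t>B$, contradicting the bound and hence closing the argument.

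Achieving this ``trapping inside $D^+_{\check g_\eps}(\Sigma)$'' uniformly in $\eps$ is the technical crux. It will require combining the $C^1_{\mathrm{loc}}$-convergence of $\check g_\eps\to g$ with a stability analysis of the Cauchy developments $D^+_{\check g_\eps}(\Sigma)$ and their horizons $H^+_{\check g_\eps}(\Sigma)$, in the spirit of Theorem \ref{thm:stability} and Remark \ref{rem:globhypdef}: the $\check g_\eps$-Cauchy developments must nest uniformly around $D^+_g(\Sigma)$, so that $\check g_\eps$-normal geodesics cannot leave $D^+_{\check g_\eps}(\Sigma)$ earlier than the $g$-normal geodesics would leave $D^+_g(\Sigma)$, while the assumed $g$-completeness together with $\mathbf n_\eps\to\mathbf n$ on the compact $\Sigma$ forces the latter exit time to be arbitrarily large. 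This is precisely where the synergy between the causal-theoretic stability (Section \ref{sec:geod}) and the curvature-theoretic estimates (Lemmas \ref{lem:convolinsteadofcheck} and \ref{lem:thecurvatureapprox}) developed in the paper is needed.
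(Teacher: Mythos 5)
Your first two steps match the paper: approximate by $\check g_\eps \prec g$ as in Lemma~\ref{lem:approxnice}, use compactness of $\Sigma$ plus $C^1_{\mathrm{loc}}$-convergence to get $(n-1)\check g_\eps(H_\eps,\mathbf n_\eps)>\beta>0$ uniformly, transfer long existence via Propositions~\ref{prop:longexistence} and \ref{prop:relcompactfunnels}, control curvature on the resulting relatively compact set via Lemma~\ref{lem:thecurvatureapprox}, and apply Lemma~\ref{lem:hawkingprep} to bound the length of maximizing $\check g_\eps$-normal geodesics by some $B$. The gap is in the contradiction mechanism you propose for the last step, and it is not merely hard to fill but structurally impossible: the length bound $\tau_{\Sigma,\check g_\eps}\leq B$ on $D^+_{\check g_\eps}(\Sigma)$ \emph{forces} every unit-speed normal $\check g_\eps$-geodesic $\gamma$ to leave $D^+_{\check g_\eps}(\Sigma)$ by parameter $B$, since $\tau_{\Sigma,\check g_\eps}(\gamma(t))\geq t$. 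So the ``trapping inside $D^+_{\check g_\eps}(\Sigma)$ up to $t>B$'' you flag as the main obstacle is exactly what the focal-point estimate rules out; no amount of stability analysis of the Cauchy developments across $\eps$ can produce it. The argument as proposed cannot close.

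What the paper does instead, once it has the same length bound, is the classical topological argument at the level of each fixed smooth $g_\eps$ (no uniformity in $\eps$ or stability of $D^+$ is needed). Because every point of $D^+_{\eps}(\Sigma)$ is reached by a maximizing normal $g_\eps$-geodesic of length $\leq b_0$, one has $D^+_{\eps}(\Sigma)\subseteq F_{\eps,K,b_0}$, a compact set, so $H^+_\eps(\Sigma)$ is compact. Moreover $H^+_\eps(\Sigma)\neq\emptyset$: the normal geodesics, which exist well past parameter $b_0$ by the long-existence result, must exit $D^+_\eps(\Sigma)$, and $F_{\eps,K,b}\setminus F_{\eps,K,b_0}\subseteq I^+_\eps(\Sigma)\setminus D^+_\eps(\Sigma)$ is non-empty. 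A compact, non-empty future Cauchy horizon over a compact $\Sigma$ then yields the standard contradiction (cf.~\cite[Theorem~14.55B]{ON83}). So the exit of geodesics from $D^+_\eps(\Sigma)$ is a feature the paper exploits, not an obstacle to be circumvented; replacing your last step with this classical Cauchy-horizon argument is what completes the proof.
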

\begin{proof}
Assume to the contrary that $(M,g)$ is timelike geodesically complete. Let $g_\eps :=\check{g}_\eps$ be smooth approximations as constructed in Lemma~\ref{lem:approxnice}. Let $\beta :=\frac{1}{2} (n-1)\min_{\Sigma} g(H,\mathbf{n})>0$, where the minimum exists by compactness of $\Sigma$ and continuity of $H$. Since $g_\eps \prec g$ the hypersurface $\Sigma$ is spacelike for $g_\eps$ as well\footnote{If $g_\eps(X,X)$ were $\leq 0$ for any non-zero $X\in T\Sigma$, then, since $g_\eps \prec g$, $g(X,X)<0$ which contradicts $g$-spacelikeness of $\Sigma$. So $g_\epsilon (X,X)>0$ for all non-zero $X\in T\Sigma$, i.e., $\Sigma$ is $g_\eps$-spacelike.}
 and the $C^1_{\mathrm{loc}}$ convergence of $g_\eps \to g$  implies $\mathbf{n_\eps}\to \mathbf{n}$
 and $H_\eps\to H$ uniformly on $\Sigma$. Thus, for all $\eps$ small enough, we have $(n-1)g_\eps(H_\eps,\mathbf{n_\eps})>\beta>0$ on $\Sigma$. 

Let now $b:= 4\frac{n-1}{\beta}$ and $\delta_0:=\frac{3\beta}{b}\frac{1}{2}=\frac{3\beta^2}{8(n-1)}>0$. Let \[K:=\bigcup_{0<\eps\leq 1}\{v\in TM|_\Sigma: v\perp_\eps T\Sigma,\,g_\eps(v,v)=-1\}\cup \{v\in TM|_\Sigma: v\perp_g T\Sigma,\,g(v,v)=-1\}.\] This is a compact subset of $TM$. So $F_{\leq \eps_0, K,b} \sse TM$ is relatively compact by 
Proposition~\ref{prop:relcompactfunnels}. Let $\tilde{K}\sse TM$ be compact with $F_{\leq \eps_0,K,b}\sse \tilde{K}$ (by Rem.~\ref{rem:compactfunnels} we could actually just take $\tilde{K}=F_{\leq \eps_0,K,b}$). So by Lemma~\ref{lem:thecurvatureapprox} there exists $\eps_1>0$ (depending only on $\tilde{K}$ and $\delta_0$, i.e., $\beta$) such that
\[
\forall \eps<\eps_1\
\forall X \in \tilde{K}  \text{ with }\ g_\eps (X,X)=-1 :
\ \Ric_\eps(X,X) > -\delta_0.\]
 
Fix $\eps <\min\{\eps_1,\eps_0\}$. Then $\dot{\gamma}\in F_{\leq \eps_0,K,b}\sse \tilde{K}$ for all $g_\eps$-unit speed $g_\eps$-geodesics $\gamma:[0,b]\to M$ starting $g_\eps$-orthogonally to $\Sigma$, so for any such geodesic $\Ric(\dot{\gamma},\dot{\gamma})\geq -\delta_0$ and $(n-1)g_\eps (H_\eps,\dot{\gamma}(0))=(n-1)g_\eps (H_\eps,\mathbf{n}_\eps)>\beta$. Thus, Lemma~\ref{lem:hawkingprep} applies and any future directed $g_\eps$-unit speed $g_\eps$-geodesic starting $g_\eps$-orthogonally to $\Sigma$ will stop maximizing the $g_\eps$-distance to $\Sigma $ after $b_0:=2\frac{n-1}{\beta}<b$. 

We now proceed as in the classical proof of Hawking's theorem: This shows that $D^+_\eps(\Sigma)$ is contained in the compact set $F_{\eps, K,b_0}\sse F_{\eps, K,b}$, which implies that the future Cauchy horizon $H^+_\eps (\Sigma)$ is compact and non-empty (since clearly $\emptyset \neq F_{\eps, K,b}\setminus F_{\eps, K,b_0}\subset I^+_\eps(\Sigma)\setminus D^+_\eps(\Sigma)$) and get a contradiction by the usual argument (cf.~e.g, \cite[Theorem~14.55B]{ON83}).
\end{proof}

\begin{remark}	\begin{itemize}
\item[(i)] We can even draw the following more precise conclusion: There exists at least one inextendible future directed timelike geodesic $\gamma $ starting at $\Sigma $ with length less than $\frac{n-1}{\beta}$ for any $\beta<(n-1)\min_{\Sigma} g(H,\mathbf{n})$. This follows, essentially, from restricting to future directed $v$ in the definition of $K$ and replacing Proposition~\ref{prop:relcompactfunnels} with the more precise Rem.~\ref{rem:relcompactfunnels}.
\item[(ii)] Note that this neither uses nor shows that $D^+(\Sigma)\sse \{p\in M:\tau_\Sigma(p)\leq \frac{n-1}{\beta}\}$. This can, however, be inferred from the next theorem.
\end{itemize}
\end{remark}

The next theorem is a direct generalization of \cite[Theorem~14.55A]{ON83}.

\begin{Theorem}[$C^1$-Hawking, version 2]\label{thm:c1hawk2} Let $(M,g)$ be a time-oriented Lorentzian manifold with $g\in C^1$. If $(M,g)$ has non-negative timelike Ricci curvature (in the distributional sense of Def.~\ref{def:sec}) and there exists a smooth spacelike Cauchy hypersurface $\Sigma$ with $(n-1)g(H,\mathbf{n})>\beta >0$ on $\Sigma$, then
$\tau_\Sigma(p)\leq \frac{n-1}{\beta}$ for all $p\in I^+(\Sigma)$.
\end{Theorem}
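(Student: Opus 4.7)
The plan is to reduce to the focal-point inequality Lemma~\ref{lem:hawkingprep} applied to the smooth approximants $\check{g}_\eps$ of Lemma~\ref{lem:approxnice}, using Corollary~\ref{cor:sigmamaxgeod} to produce approximating $\check{g}_{\eps_n}$-maximizing geodesics that go to the specific point $p$. Compared with Theorem~\ref{thm:c1hawk1}, this version avoids the need for completeness and hence for Proposition~\ref{prop:relcompactfunnels}. I argue by contradiction: assume $p \in I^+(\Sigma)$ with $L := \tau_\Sigma(p) > (n-1)/\beta$, and fix auxiliary constants $\beta' \in ((n-1)/L, \beta)$ (nonempty since $L > (n-1)/\beta$), $b_1 \in ((n-1)/\beta', L)$, and $c := (n-1)/(b_1 \beta') \in (0, 1)$, so that $(n-1)/(c\beta') = b_1$.

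By Corollary~\ref{cor:sigmamaxgeod}, there is a maximizing $g$-timelike geodesic $\gamma$ from $\Sigma$ to $p$ of $g$-length $L$, obtained as the $C^1$-limit of a sequence of $L_{\check{g}_{\eps_n}}$-maximizing $\check{g}_{\eps_n}$-timelike $\check{g}_{\eps_n}$-geodesics $\gamma_{\eps_n}$ from $\Sigma$ to $p$ starting $\check{g}_{\eps_n}$-orthogonally to $\Sigma$. Reparametrize each $\gamma_{\eps_n}$ by $\check{g}_{\eps_n}$-arclength (an affine rescaling, hence still a geodesic) to get $\tilde\gamma_{\eps_n}:[0, L_{\eps_n}] \to M$ with $L_{\eps_n} = L_{\check{g}_{\eps_n}}(\gamma_{\eps_n}) = \tau_{\check{g}_{\eps_n}}(\Sigma, p)$. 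The $C^1([0,1])$-convergence $\gamma_{\eps_n}\to\gamma$ combined with $\check{g}_{\eps_n}\to g$ in $C^0_{\mathrm{loc}}$ implies $L_{\eps_n}\to L$. For $n$ large, the following all hold simultaneously: $L_{\eps_n} > b_1$; the points $\tilde\gamma_{\eps_n}(0)$ lie in a fixed compact neighborhood $V\subseteq\Sigma$ of $\gamma(0)$ on which $(n-1)\check{g}_{\eps_n}(H_{\eps_n}, \mathbf{n}_{\eps_n}) > \beta'$ (by uniform convergence $H_{\eps_n}\to H$ and $\mathbf{n}_{\eps_n}\to\mathbf{n}$ from $C^1_{\mathrm{loc}}$-convergence of the metrics), so in particular $(n-1)\check{g}_{\eps_n}(H_{\eps_n}, \dot{\tilde\gamma}_{\eps_n}(0)) \geq \beta'$; and the velocities $\dot{\tilde\gamma}_{\eps_n}$ all lie in a common compact set $\tilde K\subseteq TM$ of future-directed $\check{g}_{\eps_n}$-unit timelike vectors over a common compact $K_M\subseteq M$ containing $\mathrm{im}\,\gamma$ (using uniform non-degeneracy of $\check{g}_{\eps_n}$ on $K_M$).

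Now set $\delta_0 := 3\beta'(1-c)/(L+1) > 0$; Lemma~\ref{lem:thecurvatureapprox} applied to $\tilde K$ yields $\eps_1 > 0$ such that $\Ric[\check{g}_{\eps_n}](\dot{\tilde\gamma}_{\eps_n}, \dot{\tilde\gamma}_{\eps_n}) > -\delta_0$ whenever $\eps_n < \eps_1$. For $n$ large, $L_{\eps_n}\leq L+1$ and hence $\delta_0\leq 3\beta'(1-c)/L_{\eps_n}$. Then Lemma~\ref{lem:hawkingprep} applies to the smooth metric $\check{g}_{\eps_n}$ and the maximizing geodesic $\tilde\gamma_{\eps_n}$ of length $b = L_{\eps_n}$ (with $\beta$ replaced by $\beta'$ and $\delta = \delta_0$), yielding $L_{\eps_n}\leq(n-1)/(c\beta')=b_1$, contradicting $L_{\eps_n} > b_1$. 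The main technical obstacle is the careful threading of the constants $\beta' < \beta$, $b_1 < L$, $c < 1$, $\delta_0 > 0$ through the argument so that the approximation errors---in the mean curvature at the drifting starting point $\tilde\gamma_{\eps_n}(0)$, in $L_{\eps_n} - L$, and in $\Ric[\check{g}_{\eps_n}]$ versus the distributional bound on $\Ric[g]$---can all be absorbed while still producing the strict inequality $L_{\eps_n} > b_1 \geq (n-1)/(c\beta')$ needed for the contradiction. No new analytic machinery is required beyond what was developed for Theorem~\ref{thm:c1hawk1}.
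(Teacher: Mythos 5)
Your proof is correct and follows essentially the same strategy as the paper's: proof by contradiction, using Corollary~\ref{cor:sigmamaxgeod} to produce $\check{g}_{\eps_n}$-maximizing $\check{g}_{\eps_n}$-geodesics from $\Sigma$ to $p$ converging in $C^1$, feeding them through Lemma~\ref{lem:thecurvatureapprox} to get an almost-non-negative Ricci bound on a fixed compact set in $TM$, and then invoking Lemma~\ref{lem:hawkingprep} to cap the $\check{g}_{\eps_n}$-length and contradict $L_{\eps_n}\to L>(n-1)/\beta$. The paper makes exactly this observation that the ``key additional ingredient'' over version 1 is Corollary~\ref{cor:sigmamaxgeod}, and likewise sets $\delta$ from a fixed $c<1$ with $\tau_\Sigma(p)>\frac{n-1}{c\beta}$. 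One small point where you are actually a bit more careful than the paper: the application of Lemma~\ref{lem:hawkingprep} requires $\delta\leq\frac{3\beta}{b}(1-c)$ with $b=L_{\eps_n}$, which is not automatic from the paper's choice $\delta=\frac{3\beta}{\tau_\Sigma(p)}(1-c)$ unless $L_{\eps_n}\leq\tau_\Sigma(p)$ (and $\check g_\eps\prec g$ alone does not force $\tau_{\check g_\eps}\leq\tau_g$); the paper implicitly absorbs the possible overshoot via $L_{\eps_n}\to L$, while your choice of $\delta_0=3\beta'(1-c)/(L+1)$ together with the auxiliary $\beta'<\beta$ and $b_1<L$ handles it explicitly and removes the need for that comment.
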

\begin{proof}
The key additional ingredient for this version is Proposition~\ref{prop:existmaxgeods} or more specifically Corollary~\ref{cor:sigmamaxgeod}. Assume there exists $p\in I^+(\Sigma)$ with $\tau_\Sigma(p)> \frac{n-1}{\beta}$. Let $\check{g}_\eps$ be approximations as in Lemma~\ref{lem:approxnice}, then by Corollary~\ref{cor:sigmamaxgeod} there exists a sequence of $\check{g}_{\eps_k}$-geodesics $\tilde{\gamma}_{\eps_k}:[0,1]\to M$ maximizing the $g_{\eps_k}$-distance from $\Sigma$ to $p$ and converging in $C^1([0,1])$ to a maximizing $g$-geodesic $\tilde{\gamma } : [0,1]\to M$. Let $l_k:=L_{\eps_k}(\tilde{\gamma}_{\eps_k})$ and $l:=L(\tilde{\gamma})=\tau_\Sigma(p)$, then $l_k\to l$. We reparametrize these geodesics to geodesics $\gamma_k:[0,l_k]\to M$ (resp.~$\gamma:[0,l]\to M$) that have $\check{g}_{\eps_k}$ (resp.~$g$) unit speed and note that because of the $C^1([0,1])$ convergence of the original curves 
the set $K:=\bigcup_{k\in \N}\mathrm{im} (\dot{\gamma}_{k})\cup \mathrm{im} (\dot{\gamma})$ is compact. 

Let $U$ be a neighborhood of $\gamma(0)\in \Sigma$ and let $N_0\in \N$ be large enough such that $\gamma_k(0)\in U$ and $(n-1)g_{\eps_k} (H_{\eps_k},\mathbf{n}_{\eps_k})>\beta$ on $U$ for all $k>N_0$. Let $0<c<1$ be such that $\tau_\Sigma(p)>\frac{n-1}{c\beta}>\frac{n-1}{\beta}$ and set $\delta:=\frac{3\beta}{\tau_\Sigma(p)}(1-c)$. Then by Lemma~\ref{lem:thecurvatureapprox} there exists $\eps_0>0$ (depending only on $K$ and $\delta$) such that
\[
\forall \eps<\eps_0\
\forall X \in K  \text{ with }\ g_\eps (X,X)=-1 :
\ \Ric_\eps(X,X) > -\delta.\]
 
Now let $k>N_0$ such that $\eps_k <\eps_0$. Then for the $\check{g}_{\eps_n}$-unit speed $\check{g}_{\eps_k}$-geodesics $\gamma_k:[0,l_k]\to M$ 
we have $\Ric(\dot{\gamma},\dot{\gamma})\geq -\delta$ and $(n-1)g_{\eps_k} (H_{\eps_k},\dot{\gamma}_k(0))=(n-1)g_{\eps_k} (H_{\eps_k},\mathbf{n}_{\eps_k})>\beta$. Thus, Lemma~\ref{lem:hawkingprep} applies and gives $l_k\leq \frac{n-1}{c\beta}$. But then $\limsup_{k\to\infty}l_k\leq \frac{n-1}{c\beta}<\tau_\Sigma(p)$, contradicting $l_k=L_{\eps_k}(\gamma_k)\to L(\gamma)=\tau_\Sigma(p)$.
\end{proof}

\begin{remark}\label{rem:complete} Comparing Theorems \ref{thm:c1hawk1} and \ref{thm:c1hawk2} to their smooth counterparts, namely \cite[Theorem~14.55B]{ON83} and \cite[Theorem~14.55A]{ON83}, one can't help but notice that the order of presentation (and proof) has switched. For people familiar with the smooth proofs this may at first seem a little strange since traditionally \cite[Theorem~14.55A]{ON83} is regarded as the ''easier''/''first'' version and is a crucial tool in the proof of \cite[Theorem~14.55B]{ON83}. What is, intuitively, responsible for switching the ''natural'' order of these theorems around for $C^1$-metrics is that geodesics may be non-unique which makes it harder to prove statements about \emph{every} $g$-geodesic starting orthogonally to $\Sigma$. Or expressed differently: Both Theorem \ref{thm:c1hawk1} and \ref{thm:c1hawk2} are proved using contradiction and assuming that \emph{all} timelike geodesics starting orthogonally to $\Sigma $ are complete (or at least exist for very long times) gives much more leverage than merely assuming that one maximizing curve has long length. 

In particular, as shown in section \ref{sec:geod}, (subsequences of) geodesics for approximating metrics $g_\eps$ will converge to $g$-geodesics, so it is a reasonable expectation that any property which is satisfied by \emph{all} $g$-geodesics will imply something for $g_\eps$ geodesics. And indeed this is key in the proof of Theorem \ref{thm:c1hawk1}. But as discussed in subsection \ref{sec:existofmax}, it is unclear whether any arbitrary maximizing curve or even maximizing geodesic is a limit of  (a subsequence of) maximizing geodesics of approximating metrics.\footnote{And it is easy to imagine that there exist different approximating sequences with different limit geodesics, so at least once a choice of a specific net  $g_\eps$ of approximating metrics is made one may no longer be able to obtain any geodesic as a limit of a subsequence of $g_\eps $ geodesics.}  Luckily, the existence result in Proposition~\ref{prop:existmaxgeods} guarantees that given any maximizing curve between two points $p$ and $q$ there must also exist a (possibly different) maximizing curve from $p$ to $q$ that is exactly such a limit.

In light of there potentially being no relation between timelike geodesics and locally maximizing timelike curves (see the discussion in Remark \ref{rem:geoddef}) one should also note that Theorem \ref{thm:c1hawk1} only shows the existence of an incomplete timelike geodesic while Theorem \ref{thm:c1hawk2} implies both the existence of incomplete timelike geodesics and of incomplete timelike locally maximizing curves. In fact, it implies that any inextendible timelike curve is incomplete (i.e., has finite length), so in particular there exists timelike initial data that doesn't admit any complete geodesic.
\end{remark}

\section{The Penrose singularity theorem}\label{sec:penrose}

To formulate a statement analogous to the Penrose singularity theorem for $C^1$-metrics we have to first discuss how the classical null energy condition, $\Ric(X,X)\geq 0$ for all null vectors $X\in TM$, can be formulated for $C^1$-metrics where $\Ric$ is merely distributional. In the case of timelike Ricci bounds, $\Ric(X,X)\geq 0$ for all timelike vectors $X\in TM$, this generalization was straightforward and we simply required $\Ric(\MX,\MX)\geq 0$ as a distribution for all local smooth timelike vector fields $\MX$. This worked, essentially, because timelikeness is an open condition and every timelike vector $X\in TM$ can be extended to a \emph{smooth} timelike vector field even if the metric is merely $C^1$. And, even further, \emph{any} ''local enough'' smooth extension (i.e., extending to a small enough neighborhood) will be timelike. In particular this is true for any extension which is constant in some chart, a crucial fact used in the proof of both the $C^{1,1}$ and the $C^1$ Hawking singularity theorem. 

However, if $X\in TM$ is null, we will in general only be able to extend $X$ to a \emph{smooth} null vector field if the metric itself is smooth. And, no matter how smooth the metric is and how small of a neighborhood one extends to, there will always be smooth extensions which do not possess any causal character. In particular, we cannot control the causal character of the extension of $X$ to a constant vector field in some given coordinates. For $g\in C^{1,1}$, the workaround for this problem was to work with two different extensions: First, one can locally extend a null vector to a Lipschitz continuous null vector field via parallel transport. And second, local boundedness of $\Ric$ implies that on small enough neighborhoods $\Ric$ of this null extension is arbitrarily close to $\Ric $ of a constant extension in a given chart. Because of this it was sufficient to formulate the null energy condition in \cite[Theorem~1.1]{penrosec11} as
\begin{enumerate}
	\item[(i)] For any Lipschitz-continuous local null vector field $\MX$, $\Ric(\MX,\MX)\geq 0$ almost everywhere.
\end{enumerate}

Now for a $C^1$-metric parallel transport is ill-defined and so this method of extending a null vector to a local null vector field is not available. While it should still be possible to extend any null vector $X$ to a local $C^1$-null vector field $\MX$ using different methods, this still would not help us much because $\Ric$ is no longer locally bounded and so having some estimate for $\Ric(\MX,\MX)$ (which can be properly defined despite $\MX$ only being $C^1$ and not smooth because $\Ric$ is a distribution of order one) doesn't tell us anything about $\Ric (\MX',\MX')$ for another extension $\MX'$ (e.g.~a constant extension in a given chart), no matter how close $\MX$ and $\MX'$ are. 

We therefore give the following definition, essentially stating that if $\MX$ is close to being null, then $\Ric(\MX,\MX)$ has to be almost non-negative:

\begin{definition}[Distributional null energy condition]\label{def:genNEC} We say that $g\in C^1$ satisfies the \emph{distributional null energy condition} if for any compact $K\sse M$ and any $\delta >0$ there exists $\eps(\delta, K)$ such that $\Ric(\MX,\MX)>-\delta$ (as distributions, cf.~Def.~\ref{def:posdistribs}) for any local smooth vector field $\MX\in \mathfrak{X}(U)$ ($U\sse K$) with $||\MX||_h=1$ and satisfying $|g(\MX,\MX)|<\eps(\delta, K)$ on $U$.
\end{definition}

While this does not assume anything about the causal character of the smooth local vector field $\MX$, the condition that $|g(\MX,\MX)|<\eps(\delta, K)$ in this definition is equivalent to $\MX$ being close to a $C^1$-null vector field $\mathcal{N}$.
More precisely, we have the following lemma: 

\begin{Lemma}\label{lem:genNECequiv}
A metric $g\in C^1$ satisfies the distributional null energy condition if and only if for any compact $K\sse M$ and any $\delta >0$ there exists $\eps(\delta, K)$ such that $\Ric(\MX,\MX)>-\delta$ for any (local) smooth vector field $\MX\in \mathfrak{X}(U)$ ($U\sse K$) with $||\MX||_h=1$  and which is $\eps (\delta, K)$ close to a $C^1$ $g$-null vector field $\mathcal{N}$ on $U$, i.e., $||\MX-\mathcal{N}||_h<\eps(\delta, K)$ on $U$.
\end{Lemma}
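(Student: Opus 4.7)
My plan is to prove the lemma by reducing it to the following quantitative equivalence, uniformly over any compact $K \sse M$: for smooth $\MX$ with $\|\MX\|_h = 1$, $|g(\MX, \MX)|$ is small if and only if there exists a $C^1$ $g$-null vector field $\mathcal{N}$ that is $h$-close to $\MX$. Once such a quantitative equivalence is established, the equivalence of the two characterizations of the distributional null energy condition is immediate by suitably translating between the two types of $\eps$.

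One direction, ``$\MX$ close to a $C^1$-null $\mathcal{N}$ in $h$-norm $\Rightarrow |g(\MX, \MX)|$ small,'' is essentially an application of Cauchy--Schwarz: since $g(\mathcal{N},\mathcal{N})=0$, I would expand
\[
g(\MX, \MX) = g(\MX - \mathcal{N}, \MX - \mathcal{N}) + 2\,g(\mathcal{N}, \MX - \mathcal{N}),
\]
and use that $g$ is uniformly bounded with respect to $h$ on $K$ together with $\|\mathcal{N}\|_h \le 1 + \eps$ to conclude $|g(\MX, \MX)| = O(\eps)$.

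For the converse, given smooth $\MX$ with $\|\MX\|_h = 1$ and $|g(\MX, \MX)| < \eps$, I will construct $\mathcal{N}$ by projecting onto the split $\R T \oplus T^{\perp_g}$, where $T$ is the smooth $g$-timelike vector field fixing the time orientation. Explicitly, write $\MX = aT + W$ with $a := -g(\MX, T)/g(T, T)$ and $W := \MX - aT$, so that $W \perp_g T$, and set $\mathcal{N} := aT + \lambda W$ with $\lambda := \sqrt{a^{2}\,|g(T, T)|/g(W, W)}$. A direct computation gives $g(\mathcal{N}, \mathcal{N}) = 0$ and $\lambda^{2} - 1 = -g(\MX, \MX)/g(W, W)$, so $|\lambda-1| = O(\eps)$ provided $g(W,W)$ is bounded below. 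Since $g, T, \MX$ are at least $C^1$ and $g(W,W)$ will be nowhere vanishing, $\mathcal{N}$ is automatically $C^1$.

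The hard part will be establishing a uniform positive lower bound $\|W\|_h \geq c_K > 0$ for all admissible $\MX$ on $K$ (once this is in hand, uniform positive definiteness of $g|_{T^\perp}$ over compact sets yields $g(W,W)\geq c'_K > 0$). This is exactly where the normalization $\|\MX\|_h = 1$ is essential: combining the identity $a^{2}|g(T, T)| = g(W, W) + O(\eps)$ with the triangle inequality $1=\|\MX\|_h \le |a|\,\|T\|_h + \|W\|_h$ and the uniform upper bound on $\|T\|_h$ over $K$ shows that $\|W\|_h$ cannot be arbitrarily small once $\eps$ is sufficiently small (otherwise both $\|W\|_h$ and $|a|$ would be small, contradicting $\|\MX\|_h=1$). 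With this uniform lower bound in hand, $\|\MX - \mathcal{N}\|_h = |\lambda - 1|\,\|W\|_h = O(\eps)$, completing the proof.
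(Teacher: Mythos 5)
Your proof is correct and follows the same overall strategy as the paper's: establish a quantitative equivalence, on any compact $K$, between ``$|g(\MX,\MX)|$ is small'' and ``$\MX$ is $h$-close to a $C^1$ $g$-null field,'' with the easy direction an elementary estimate on the bilinear form and the hard direction an explicit construction of $\mathcal{N}$. The only substantive difference is the choice of decomposition: the paper fixes a local $C^1$ $g$-orthonormal frame $E_0,\dots,E_{n-1}$ and multiplies the timelike component $\MX^0$ by the factor $\sqrt{1-a\eps/(\MX^0)^2}$, bounding $|\MX^0|$ uniformly from below; you split $\MX = aT + W$ along the globally fixed time-orientation field $T$ and rescale the spacelike part $W$ by $\lambda$, bounding $\|W\|_h$ uniformly from below. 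These are dual parametrizations of the same construction, and both use the normalization $\|\MX\|_h = 1$ together with $|g(\MX,\MX)| < \eps$ to get the necessary lower bound. Your version has the mild advantage of avoiding the choice of an orthonormal frame (it only uses $T$, which is part of the data), at the cost of a slightly longer argument for the lower bound.

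One small slip to correct: with $\MX = aT + W$ and the requirement $W \perp_g T$, the coefficient must be $a = g(\MX,T)/g(T,T)$, not $a = -g(\MX,T)/g(T,T)$; with your sign, $g(W,T) = 2g(\MX,T) \neq 0$. Since only $a^2$ enters the definition of $\lambda$ and the estimates, this does not affect the rest of the argument once the sign is fixed.
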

\begin{proof}
To show that this implies the distributional null energy condition we show that for every compact $K\sse M$ and any $\eps_0(\delta, K) >0$ there exists $\eps(\eps_0, K)>0$ such that for any local smooth causal vector field $\MX\in \mathfrak{X}(U)$ ($U\sse K$) with $||\MX||_h=1$ and $|g(\MX,\MX)|<\eps $ on $U$ there exists a distributional null vector field $\mathcal{N}$ with $||\mathcal{N}-\MX||_h <\eps_0$.

Let $E_0,\dots,E_{n-1}$ be $C^1$ $g$-orthonormal vector fields on $K$ with $E_0$ timelike and $E_1,$ $\dots, E_{n-1}$ spacelike. Then $\MX=\MX^i E_i$, $\MX^i\in C^1(U)$, and $-a\eps=g(\MX,\MX)=-(\MX^0)^2+(\MX^1)^2+\dots+ (\MX^n)^2$, where $a:U\to [0,1)$ is $C^1$. From $||\MX||_h=1$ we get that $||\MX||_{e} \geq C>0$ and hence this implies $|\MX^0|^2\geq c>0$ on $U$ for constants $C,c$ depending on $K$ (and $h$ and the $E_i$) but not on $\MX$. Define 
$\mathcal{N}:=(\sqrt{1-\frac{a\eps}{(\MX^0)^2}} \MX^0,\MX^1,\dots,\MX^{n-1})$ (in the basis of the $E_i$). This is null, $C^1$ and 
\begin{align*}
||\MX-\mathcal{N}||_h&\leq C' ||\MX-\mathcal{N}||_e=C'\,(1-\sqrt{1-\frac{a\eps}{(\MX^0)^2}})||(\MX^0,0,\dots,0)||_{e} \\ &\leq C'\,(1-\sqrt{1-\frac{a}{c}\eps}) ||\MX||_{e} 
\leq C'(1-\sqrt{1-\frac{a}{c}\eps})\to 0 \;\; \text{as}\,\, \eps \to 0, \end{align*}
where $C'$ is a running constant depending only on $K$.

To show the other direction, we have to show that for every compact $K\sse M$ and any $\eps_0(\delta, K) >0$ there exists $\eps(\eps_0, K)>0$ such that any local smooth causal vector field $\MX$ with $||\MX||_h=1$ and $||\MX-\mathcal{N}||_h<\eps$ for some  $C^1$ $g$-null vector field $\mathcal{N}$ satisfies $|g(\MX,\MX)|<\eps_0 $ on $U$. This follows immediately from noting that $|g(\MX,\MX)|= |g(\mathcal{N},\mathcal{N})-g(\MX,\MX)|\leq C ||\mathcal{N}-\MX||_h$ on $K$.
\end{proof}

Note that Definition \ref{def:genNEC} and its equivalent condition in Lemma~\ref{lem:genNECequiv} can also be restated in the following way by rescaling $\MX $:
\begin{definition}[Distributional null energy condition, version 2]\label{def:genNECno2} A Lorentzian metric $g\in C^1$ satisfies the distributional null energy condition if and only if for any compact $K\sse M$, any constants $c_1, c_2>0$ and any $\delta >0$ there exists $\eps(\delta, K, c_i)$ such that $\Ric(\MX,\MX)>-\delta$ (as distributions, cf.~Def.~\ref{def:posdistribs}) for any local smooth vector field $\MX\in \mathfrak{X}(U)$ ($U\sse K$) with $0<c_1\leq ||\MX||_h\leq c_2$ and satisfying $|g(\MX,\MX)|<\eps(\delta, K, c_i)$ on $U$.
\end{definition}

\begin{remark}	[The distributional null energy condition is equivalent to the usual null energy condition if $g$ is of sufficient regularity]
	From the characterization given in Lemma \ref{lem:genNECequiv} and (local) boundedness of $\Ric $ it follows that the distributional null energy condition defined above is indeed equivalent to the $C^{1,1}$ null energy condition 
	as formulated in \cite[Theorem~1.1]{penrosec11}
	%\footnote{	A $C^{1,1}$-Lorentzian metric $g$ satisfies the null energy condition if for any Lipschitz continuous local null vector field $\MX$, $\Ric(\MX,\MX)\geq 0$ almost everywhere.} 
	if the metric is $C^{1,1}$. Further, as argued in \cite{penrosec11}, if $g\in C^2$ this is equivalent to the usual null energy condition.
	
Also, for $C^{1,1}$-metrics one may omit the \emph{lower} bound on $||\MX||_h$ in Definition \ref{def:genNECno2} because $\Ric \in L^\infty_{\mathrm{loc}}$ and thus $\Ric(\MX,\MX)$ must be close to zero a.e.~for any $\MX$ close to zero.
	In the $C^1$-case, however, definition \ref{def:genNECno2} appears to be strictly weaker than a version where the lower bound $c_1$ is removed: Any such version would immediately imply that all distributions $\Ric_{ij}$ are locally bounded from below for any $i,j$, irrespective of the causality of the corresponding coordinate vector fields $\partial_i,\partial_j$.\\ %This excludes, e.g.,~$\Ric_{ij}$ being a negative Dirac-delta for some $i,j$.\\
\end{remark}

As for Hawking's theorem, we will need our own $C^1$-version of an important Lemma in the proof of the $C^{1,1}$-Penrose theorem (\cite[Lemma~2.4]{penrosec11}):

\begin{Lemma}\label{lem:thecurvatureapproxPenrose}
Let $M$ be a smooth manifold with a $C^1$-Lorentzian metric $g$. Let $K$ be a compact subset of $M$, $c_1,c_2>0$ and suppose that $(M,g)$ satisfies the distributional null energy condition (in the sense of Def.~\ref{def:genNEC}).
Then for all $ \delta >0$ there exists $\eps_0>0$ (depending on $K$, $\delta$ and $c_1,c_2$) such that
\begin{equation}\label{minePenr}
\forall \eps<\eps_0\
  \forall X \in TM|_K  \text{ with } 0<c_1 \leq ||X||_h\leq c_2 \text{ and } \check{g}_\eps (X,X)=0 :
  \ \Ric[\check{g}_\eps](X,X) > -\delta
\end{equation}
for smooth approximating metrics $\check{g}_\eps$
as in Lemma~\ref{lem:approxnice}.
\end{Lemma}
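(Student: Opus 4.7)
The plan is to adapt the proof of Lemma~\ref{lem:thecurvatureapprox}, substituting the timelike vector fields used there by vector fields that are \emph{close to being null} in the sense of Def.~\ref{def:genNECno2}. First, Lemma~\ref{lem:convolinsteadofcheck} combined with the Friedrichs-type Lemma~\ref{lem:Fthesecond} (applied chart-by-chart as in the proof of Lemma~\ref{lem:thecurvatureapprox}) yields $\Ric[\check g_\eps]-\Ric\star_M\rho_\eps\to 0$ locally uniformly; because $\|X\|_h\le c_2$ on $K$, it therefore suffices to show that for any $\delta>0$ there exists $\eps_0>0$ such that $(\Ric\star_M\rho_\eps)(X,X)>-\delta/2$ for all $\eps<\eps_0$ and all $X$ satisfying the hypotheses.

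Covering $K$ by finitely many charts with compact closures, fix $x_0\in K$ and $X\in T_{x_0}M$ with $c_1\le\|X\|_h\le c_2$ and $\check g_\eps(X,X)=0$, and extend $X$ to the vector field $\MX$ in a chart around $x_0$ whose coordinate components are the constants $X^i$. Compactness and continuity give constants $c_1',c_2'>0$ and a uniform chart-ball radius $R>0$ (independent of $(x_0,X,\eps)$) such that $c_1'\le\|\MX\|_h\le c_2'$ on the ball of radius $R$ around $x_0$. Since $\check g_\eps(X,X)=0$,
\[
|g(\MX,\MX)(x_0)|=|g(X,X)-\check g_\eps(X,X)|\le c_2^2\,\|g-\check g_\eps\|_{\infty,K},
\]
which tends to $0$ with $\eps$, and the $C^1$-regularity of $g$ yields the Lipschitz estimate $|g(\MX,\MX)(y)-g(\MX,\MX)(x_0)|\le Cc_2^2\,|y-x_0|$ with $C$ depending only on $K$. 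Given $\delta>0$, apply Def.~\ref{def:genNECno2} with a compact thickening $K'$ of $K$ and constants $\delta/2,c_1',c_2'$ to obtain a threshold $\tilde\eps>0$; choose $r_0\in(0,R)$ with $Cc_2^2 r_0<\tilde\eps/2$, and shrink $\eps_0$ further so that $c_2^2\|g-\check g_\eps\|_{\infty,K}<\tilde\eps/2$ and $\eps<r_0$ for $\eps<\eps_0$. Then $|g(\MX,\MX)|<\tilde\eps$ throughout $B_{r_0}(x_0)$, and the distributional NEC yields $\Ric(\MX,\MX)+\delta/2\ge 0$ as a distribution on $B_{r_0}(x_0)$. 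Using Proposition~\ref{prop:distribcharts} to write the convolution in the chart, $(\Ric\star_M\rho_\eps)(X,X)(x_0)$ equals, up to the partition of unity in \eqref{eq:smoothing} (which amounts to a smooth convex combination of the analogous expressions in the charts contributing at $x_0$), the pairing $\langle\Ric(\MX,\MX),\rho_\eps(x_0-\cdot)\rangle$. Since $\rho_\eps(x_0-\cdot)\ge 0$ has support in $B_\eps(x_0)\subseteq B_{r_0}(x_0)$, the non-negativity of $\Ric(\MX,\MX)+\delta/2$ then produces $(\Ric\star_M\rho_\eps)(X,X)(x_0)\ge -\delta/2$, as required.

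The main obstacle is securing the uniform radius $r_0$ on which $|g(\MX,\MX)|$ remains below the NEC-threshold $\tilde\eps$: the constant extension $\MX$ of a $\check g_\eps$-null vector $X$ is not $g$-null, and a priori the permissible radius could collapse with $\eps$ or with the choice of $X$. Only the combination of the a priori bound $\|X\|_h\le c_2$, the $C^0$-convergence $\check g_\eps\to g$ (which controls $g(\MX,\MX)$ at $x_0$), and the $C^1$-regularity of $g$ (which controls the variation of $g(\MX,\MX)$ across $B_{r_0}(x_0)$) allows $r_0$ to be chosen independently of $(x_0,X,\eps)$, so that the positivity of the convolution can actually be exploited.
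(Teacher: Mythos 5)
Your proposal is correct and follows essentially the same route as the paper: reduce to $\Ric\star_M\rho_\eps$ via Lemma~\ref{lem:convolinsteadofcheck} and the Friedrichs-type estimates, pass to a single chart, extend $X$ to a constant vector field $\MX$, use the $C^0$-closeness $\check g_\eps\to g$ together with (local) uniform continuity of $g$ to make $|g(\MX,\MX)|$ below the threshold of Def.~\ref{def:genNECno2} on a small ball, and conclude by positivity of convolution against $\rho_\eps$. The paper compresses the uniformity bookkeeping (the ball $B_{2\eps_0}(p)$ shrinks with $\eps_0$, so plain uniform continuity already suffices and no Lipschitz constant is invoked), while you make the quantitative choices ($r_0$, $\tilde\eps$, the $\delta/2$ error split) explicit — a stylistic refinement rather than a different argument.
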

\begin{proof}
	Completely analogous to the proof of Lemma~\ref{lem:thecurvatureapprox} it suffices to show \eqref{minePenr} for $\Ric  \star_M \rho_{\eps}$ instead of $\Ric[\check{g}_\eps]$. And further, since we are again in a compact set $K\sse M$, where $\Ric\star_M\rho_{\eps}$ is, essentially, just a finite sum of component wise convolutions in charts, it suffices to look at the case where $M=\R^n$.

	Pick any compact $\tilde{K}$ such that $K\sse \tilde{K}^\circ$ and constants $c_1',c_2' $ with $0<c_1'<c_1<c_2<c_2'$. Then there exists $\eps_0 $ such that for any $X\in TM|_K\subseteq \R^n$ with  $|\check{g}_\eps (X,X)|=0$ for some $\eps <\eps_0$ the extension of $X$ to the constant vector field $\MX :=x\mapsto \, X \in \R^n$ on the neighborhood $B_{2\eps_0}(p)\sse \R^n$ (where $p:=\pi(X)\in K$) satisfies:
	\begin{itemize}
		\item[(i)] $\pi(\MX)\in \tilde{K}$
		\item[(ii)] $0<c_1'\leq ||\MX||_h \leq c_2'$ and
		\item[(iii)] $|\check{g}_\eps(p) (X,X)-g(p')(\MX(p'),\MX(p')|<\eps(\delta,\tilde{K},c_i')$ for all $p'\in B_{2\eps_0}(p)$, i.e., $|g(\MX,\MX)|<\eps(\delta,\tilde{K},c_i')$ (where $\eps(\delta,\tilde{K},c_i')$ is as in Def.~\ref{def:genNECno2})
	\end{itemize}
	 Then by the distributional null energy condition we have $\Ric(\MX,\MX)>-\delta$. So $\left(\Ric(\MX,\MX)\right)*\rho_{\eps}>-\delta$ and
	\[(\Ric*\rho_{\eps})(X, X)=(\left(\Ric(\MX,\MX)\right)*\rho_{\eps})(p)>-\delta\]
	and we are done.\qedhere
\end{proof}

The following lemma for smooth metrics is the null version of Lemma~\ref{lem:hawkingprep}.
\begin{Lemma} \label{lem:penroseprep} Let $g$ be smooth. Let $S $ be an $(n-2)$-dimensional spacelike submanifold and $\gamma :[0,b]\to M$, $p:=\gamma(0)\in S$ be a future directed null~geodesic that maximizes the distance to $S $, i.e., $\tau_S(\gamma(s))=0=L(\gamma|_{[0,s]})$. Then $\dot{\gamma}(0)\perp TS$. Further, if $(n-2) g(H,\dot{\gamma}(0))\geq \beta >0$ and $\Ric(\dot{\gamma},\dot{\gamma})\geq -\delta $ with $0\leq \delta \leq \frac{3\beta}{b}(1-c)$ for some $0<c\leq 1$, then $b\leq \frac{n-2}{c \beta}$.
\end{Lemma}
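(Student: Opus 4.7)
I would mirror the strategy of Lemma~\ref{lem:hawkingprep}, adapting it to the codimension-two null setting. The orthogonality $\dot{\gamma}(0)\perp TS$ is the standard first-variation consequence of $\gamma$ being an $S$-maximizer; for any smooth $1$-parameter variation $\gamma_\eps$ of $\gamma$ whose start stays in $S$ and whose endpoint is fixed at $\gamma(b)$, the first variation of the length functional vanishes, forcing the initial velocity $\dot{\gamma}(0)$ to be orthogonal to every variational vector in $T_pS$ (cf.\ e.g.\ \cite[Prop.~10.50]{ON83}).

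For the quantitative estimate I would pick a $g$-orthonormal basis $e_1,\dots,e_{n-2}$ of the spacelike subspace $T_pS$ and parallel-transport along $\gamma$ to obtain vector fields $E_1,\dots,E_{n-2}$. Since $\dot{\gamma}$ is parallel and $g(E_i,\dot{\gamma})|_{s=0}=0$, each $E_i$ remains unit spacelike and orthogonal to $\dot{\gamma}$ along the whole geodesic. For each $i$ set $V_i(s):=(1-s/b)E_i(s)$, which vanishes at $s=b$ and equals $e_i$ at $s=0$. Realising $V_i$ as the variational vector field of a proper $S$-variation of $\gamma$ and applying the second variation formula for null geodesics from a codimension-two spacelike submanifold (see \cite[Prop.~10.48 and its codim-2 $S$-analogue]{ON83}) together with the fact that $\gamma$ maximises to $S$, one obtains the index inequality
\begin{equation*}
0 \;\leq\; \int_0^b \Big(\tfrac{1}{b^2} - (1-s/b)^2\, g(R(\dot{\gamma},E_i)E_i,\dot{\gamma})\Big)\,ds \;-\; g(\sff(e_i,e_i),\dot{\gamma}(0)),
\end{equation*}
exactly as in the timelike case. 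Summing over $i=1,\dots,n-2$ and using that for a null vector $\dot{\gamma}$ the trace $\sum_{i=1}^{n-2} g(R(\dot{\gamma},E_i)E_i,\dot{\gamma})$ coincides with $\Ric(\dot{\gamma},\dot{\gamma})$ (completing $E_1,\dots,E_{n-2}$ to a null ``screen'' basis $\dot{\gamma},L,E_1,\dots,E_{n-2}$ with $g(L,\dot{\gamma})=-1$ makes the two terms $g(R(\dot{\gamma},\dot{\gamma})\cdot,\cdot)$ and $g(R(\dot{\gamma},L)L,\dot{\gamma})$ drop out of the contraction), together with the assumptions $(n-2)g(H,\dot{\gamma}(0))\geq \beta$ and $\Ric(\dot{\gamma},\dot{\gamma})\geq -\delta$, gives
\begin{equation*}
0\;\leq\; \frac{n-2}{b} - \int_0^b\!(1-s/b)^2\,\Ric(\dot{\gamma},\dot{\gamma})\,ds - \beta \;\leq\; \frac{n-2}{b} + \delta\frac{b}{3} - \beta \;\leq\; \frac{n-2}{b} - c\beta,
\end{equation*}
which rearranges to $b\leq \tfrac{n-2}{c\beta}$, as required.

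The only genuinely new ingredient compared to Lemma~\ref{lem:hawkingprep} is the second-variation/index setup for a null geodesic from a codimension-two spacelike submanifold, in particular the identification of the correct $(n-2)$-dimensional ``screen'' space over which the index form lives and on which the Ricci trace collapses to $\Ric(\dot{\gamma},\dot{\gamma})$. I expect this to be the main technical point to get right, but since $g$ is smooth here everything reduces to classical smooth Lorentzian geometry; the rest of the argument is routine bookkeeping that parallels the timelike computation.
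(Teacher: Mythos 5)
Your proof is correct and takes essentially the same route as the paper: parallel-transport an orthonormal frame of $T_pS$ along $\gamma$, plug the variational fields $(1-s/b)E_i$ into the focal index form (nonnegative since $\gamma$ maximizes and hence has no focal points on $[0,b)$), sum over $i$, and combine the Ricci identity for the null direction with the curvature and mean-curvature bounds. One small slip in your parenthetical aside: the cross term that vanishes in the trace is $g(R(\dot{\gamma},L)\dot{\gamma},\dot{\gamma})$, not $g(R(\dot{\gamma},L)L,\dot{\gamma})$ (which need not be zero), though the conclusion that $\Ric(\dot{\gamma},\dot{\gamma})$ reduces to the sum over the $E_i$ is of course correct.
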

\begin{proof}
	That $\dot{\gamma}(0)\perp TS $ is standard. The second claim is usually only done for  $\Ric(\dot{\gamma},\dot{\gamma})\geq 0$ (cf.~e.g., \cite[Proposition~10.43]{ON83}), but, as in the timelike version Lemma~\ref{lem:hawkingprep}, similar estimates work if $\Ric(\dot{\gamma},\dot{\gamma})\geq -\delta $: We proceed analogously to \cite[Proposition~10.43]{ON83}.
	Let $\{e_i\}_{i=3}^n$ be an ONB of $T_pS $ and let $E_i$ be the vector fields along $\gamma$ obtained by parallel transport of the $e_i$. Then, since there are no focal points along $\gamma $, for any variation with variational vector field $V(s):=(1-\frac{s}{b})E_i(s)$ this yields
	\[ 0\leq \frac{1}{b}- \int_0^b (1-\frac{s}{b})^2 g(R(\dot{\gamma}, E_i)E_i, \dot{\gamma}) ds - g(\mathrm{II}(e_i,e_i), \dot{\gamma}(0)). 
	\]
	Summing over all $i=3,\dots,n$ (and noting that $\Ric(\dot{\gamma},\dot{\gamma})=\sum_{i=3}^ng(R(\dot{\gamma}, E_i)E_i, \dot{\gamma}) $ in this situation) gives
	\[ 0\leq \frac{n-2}{b} - \int_0^b (1-\frac{s}{b})^2 \Ric(\dot{\gamma},\dot{\gamma}) ds- \beta \leq \frac{n-2}{b}+\delta \frac{b}{3} -\beta \leq \frac{n-2}{b}+\beta (1-c)-\beta=\frac{n-2}{b}-c\beta. \qedhere
	\]
\end{proof}

Using this in the same way as we used \ref{lem:hawkingprep} in  the proof of Theorem \ref{thm:c1hawk2} we can now generalize the Penrose theorem, see, e.g., \cite[Theorem~14.61]{ON83}, to the $C^1$-setting. 

\begin{Theorem}[$C^1$-Penrose]\label{thm:c1penrose} Let $(M,g)$ be a time-oriented globally hyperbolic Lorentzian manifold with $g\in C^1$ and non-compact Cauchy hypersurfaces. If $(M,g)$ satisfies the distributional null energy condition (in the sense of Def.~\ref{def:genNEC}) and contains a closed smooth achronal spacelike $(n-2)$-dimensional submanifold $S$ with past pointing timelike mean curvature vector field $H$, 
then it is null geodesically incomplete.
\end{Theorem}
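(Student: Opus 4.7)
The plan is to adapt the classical Penrose argument using the approximation machinery developed in earlier sections, paralleling the strategy of Theorem~\ref{thm:c1hawk2}. I would argue by contradiction, assuming $(M,g)$ is null geodesically complete, and introduce smooth approximations $\check{g}_\eps \prec g$ from Lemma~\ref{lem:approxnice}. Since $\check{g}_\eps \prec g$ implies $S$ remains $\check{g}_\eps$-spacelike, and $C^1_\mathrm{loc}$-convergence gives convergence of the induced null normals and mean curvature vectors, the first step is to extract $\beta>0$ such that $(n-2)\check{g}_\eps(H_\eps,k)>\beta$ for every future-directed $\check{g}_\eps$-null normal $k$ to $S$ with $\|k\|_h=1$, uniformly for small $\eps$. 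This uses compactness of $S$ together with the past-pointing timelike character of $H$.

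Next, choosing $b>\frac{n-2}{\beta}$, a constant $c\in(0,1)$ with $\frac{n-2}{c\beta}<b$, and $\delta_0:=\frac{3\beta}{b}(1-c)>0$, I would let $K\sse TM$ be the (compact) set of $h$-unit future-directed $\check{g}_\eps$- and $g$-null normals to $S$ for $\eps\in[0,1]$. Proposition~\ref{prop:relcompactfunnels}, applied to the path $\eps\mapsto \check{g}_\eps$, would then confine all corresponding geodesic lifts up to parameter $b$ to a relatively compact $\tilde K \sse TM$, and Lemma~\ref{lem:thecurvatureapproxPenrose} would supply $\eps_1>0$ ensuring $\Ric[\check{g}_\eps](X,X)>-\delta_0$ for every $\eps<\eps_1$ and every $\check{g}_\eps$-null $X\in\tilde K$.

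With these uniform estimates in place, for any $q\in\partial J^+(S)$ — which coincides with $E^+(S)$ since $J^+(S)$ is closed and $I^+(S)$ is open by global hyperbolicity — Corollary~\ref{cor:Smaxgeod} produces a maximizing $g$-null $g$-geodesic $\gamma$ from $S$ to $q$ as a $C^1$-limit of maximizing $\check{g}_{\eps_n}$-null geodesics $\gamma_n$ from $S$ to some $q_n\to q$, each starting $\check{g}_{\eps_n}$-orthogonally to $S$. Reparametrizing so that $\|\dot\gamma_n(0)\|_h=1$, the initial velocities lie in $K$ and the full velocity curves lie in $\tilde K$; Lemma~\ref{lem:penroseprep} then forces the parameter length of each $\gamma_n$ to be at most $\frac{n-2}{c\beta}<b$. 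Passing to the limit, $q$ is reached from $S$ by a $g$-geodesic with initial velocity in $K$ and parameter at most $\frac{n-2}{c\beta}$, so $\partial J^+(S)$ sits inside a compact subset of $M$; combined with closedness of $\partial J^+(S)$, this makes $\partial J^+(S)$ compact.

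The conclusion then proceeds by a classical topological argument: since $g\in C^1$ the spacetime is causally plain (Remark~\ref{rem:regofcurves}), so $\partial J^+(S)$ is a closed achronal topological $(n-1)$-submanifold. Flowing it along a complete smooth timelike vector field defines a continuous injection into the non-compact connected Cauchy hypersurface $\Sigma$; invariance of domain makes the image open, compactness makes it closed, and connectedness of $\Sigma$ then forces the image to equal $\Sigma$ — a contradiction. The main obstacle is arranging the first two paragraphs so that Lemma~\ref{lem:penroseprep} applies uniformly along the whole sequence of approximating null geodesics with $\eps$-independent constants; the relevant uniformity is exactly what Proposition~\ref{prop:relcompactfunnels} and Lemma~\ref{lem:thecurvatureapproxPenrose} are designed to deliver, so once they are wired together the argument reduces to an approximation-robust version of the smooth Penrose proof.
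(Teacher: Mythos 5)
The overall skeleton — approximate by $\check g_\eps\prec g$ from Lemma~\ref{lem:approxnice}, get limit null geodesics via Corollary~\ref{cor:Smaxgeod}, control their curvature with Lemma~\ref{lem:thecurvatureapproxPenrose}, bound their length with Lemma~\ref{lem:penroseprep}, conclude $\partial J^+(S)$ is compact, and derive a topological contradiction against the non-compact Cauchy hypersurface — is exactly the paper's. But there is a genuine gap in the middle step where you try to get \emph{uniform} compactness of the approximating geodesic flows.

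You set $K$ to be the $h$-unit future-directed $\check g_\eps$- and $g$-null normals to $S$ for $\eps\in[0,1]$ and then invoke Proposition~\ref{prop:relcompactfunnels} to confine all lifts up to parameter $b$ in a relatively compact $\tilde K$. Proposition~\ref{prop:relcompactfunnels} (and Remark~2.8(ii)) requires all \emph{$g$-geodesics} starting in $K$ to exist on $[0,T)$ for $T>b$. But for $\eps>0$, a $\check g_\eps$-null vector is $g$-\emph{timelike} (since $\check g_\eps\prec g$), so your $K$ contains $g$-timelike vectors. The theorem only provides null geodesic completeness for the contradiction, so these $g$-timelike geodesics have no completeness guarantee, and Proposition~\ref{prop:relcompactfunnels} cannot be applied to your $K$. (This is precisely why the Hawking proof, Theorem~\ref{thm:c1hawk1}, can use Prop.~\ref{prop:relcompactfunnels} uniformly — there the analogous $K$ consists of $g$-timelike vectors and timelike completeness is the contradiction hypothesis — but the Penrose situation is asymmetric.)

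The paper sidesteps this by not seeking a uniform $\tilde K$ at all. It chooses $K:=\{X\in TS^\perp:\ X\ \text{future-directed null, } 0\leq g(H,X)\leq 2\}$, which contains only $g$-null vectors, so null completeness alone makes $\pi(F_{K,1})$ relatively compact. Then for each fixed $p\in J^+(S)\setminus I^+(S)\setminus\pi(F_{K,1})$, it obtains $\gamma$ and $\gamma_{\eps_k}\to\gamma$ on $[0,1]$ from Corollary~\ref{cor:Smaxgeod}, and uses the $C^1([0,1])$-convergence of this \emph{specific} sequence to a \emph{specific} compact geodesic $\gamma$ (with $\dot\gamma\neq 0$) to get a $p$-dependent compact $\tilde K\ni\dot\gamma_{\eps_k}$ and bounds $c_1\leq\|\dot\gamma_{\eps_k}\|_h\leq c_2$, so that Lemma~\ref{lem:thecurvatureapproxPenrose} and Lemma~\ref{lem:penroseprep} can be applied per point. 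The arithmetic contradiction $1\leq\frac{n-2}{c\beta}<1$ then forces $g(H,\dot\gamma(0))\leq 2$, i.e.\ $p\in\pi(F_{K,1})$. You should replace your uniform-funnel step with this per-point argument; the rest of your outline (including the final invariance-of-domain step, for which the paper passes to a smooth $g'\prec g$ to have a genuinely smooth achronal boundary to flow) then goes through.
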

\begin{proof}
Since $H$ is past-pointing timelike, we have $g(H,X)>0$ for all future pointing null vectors $X\in TS^\perp$. Set $K:=\{X\in TS^\perp: 0\leq g(H,X)\leq 2 \}$, then $K\sse TM$ is compact. We now show that $J^+(S)\setminus I^+(S)\sse F_{K,1}$ (where $F_{K,1}$ is as in \eqref{eq:F}): Assume not, and let $p\in (J^+(S)\setminus I^+(S))\setminus F_{K,1}$. By Corollary \ref{cor:Smaxgeod} we have $p=\gamma(1)$ for a null geodesic $\gamma :[0,1]\to M$ with $\dot{\gamma}(0)\perp TS$ which  is the uniform $C^1$-limit of a sequence of $\check{g}_{\eps_k}$-null $\check{g}_{\eps_k}$-geodesics $\gamma_{\eps_k}:[0,1]\to M$ maximizing the $\check{g}_{\eps_k}$-distance from $S$ to $p$. Since $p\notin F_{K,1}$ we have $g(H,\dot{\gamma}(0))>2$, hence for large enough $k$ also $(n-2)\check{g}_{\eps_k}(H_{\eps_k},\dot{\gamma}_{\eps_k}(0))>2(n-2)=:\beta$. By the $C^1$-convergence (and $\dot{\gamma}(s)\neq 0$ for all $s\in [0,1]$) we have $\gamma_{\eps_k}\sse \tilde{K}$ for some compact $\tilde{K}\sse M$ and $0<c_1\leq ||\dot{\gamma}_{\eps_k}||_h \leq c_2$ for some constants $c_1,c_2$. 
Thus Lemma \ref{lem:thecurvatureapproxPenrose} shows that for large enough $k$ we will have $\Ric[\check{g}_{\eps_k}](\dot{\gamma}_{\eps_k},\dot{\gamma}_{\eps_k})\geq -3\beta \frac{1}{4}=:-3\beta (1-c)$. But then by Lemma \ref{lem:penroseprep}, since the $\gamma_{\eps_k}$ are null geodesics maximizing the distance to $S$, $b=1\leq \frac{n-2}{c \beta}=\frac{1}{2c}=\frac{2}{3}$, a contradiction.

Thus $J^+(S)\setminus I^+(S)$ is relatively compact and closed (by global hyperbolicity, see \cite[Proposition~3.1]{Clemens}), hence compact. Further, $J^+(S)\setminus I^+(S)=\partial I^+(S)$ by closedness of $J^+(p)$ and push-up. 
Fix any smooth metric $g'\prec g$ (close enough to $g$ so that it can be time oriented by the same smooth timelike vector field, e.g., $g'=\check{g}_\eps$ from Lemma~\ref{lem:approxnice}). Then any Cauchy hypersurface for $(M,g)$ is also a Cauchy hypersurface for $(M,g')$. So $(M,g')$ has a non-compact Cauchy hypersurface, say $\Sigma$. Further, $I^+(S)$ is 
a $g'$-future set since~$I^+_{g'}(I^+(S))\sse I^+(I^+(S))= I^+(S)$.  Hence its boundary $\partial I^+(S)$ is a compact $g'$-achronal topological hypersurface. Now one can do the usual construction (see \cite[Theorem~14.61]{ON83}) to obtain a homeomorphism between the non-compact Cauchy hypersurface $\Sigma$ for $g'$ and the compact $\partial I^+(S)$, leading to a contradiction.
\end{proof}

\appendix
\section{Appendix: A $C^1$ version of Myers's Theorem}\label{sec:Riem} 
With the methods developed above it is easy to see that also a $C^1$-version of the Myers theorem holds. This is perhaps not very surprising, because it is known that there are
generalizations of Myers's theorem even for metric measure spaces (see
Corollary~2.6 in \cite{sturm_metricMeasureII}), but the $C^1$-version below does not (or at least not immediately) follow from these generalizations. The reason for this is the conceptually very different definition of the Ricci curvature bounds in our setting (i.e., bounds on the distributions $\Ric (\MX,\MX)\in \mathcal{D}'(M)$ for smooth vector fields $\MX$) and for metric measure spaces (see, e.g.,~\cite{sturm_metricMeasureII,LottVillani2009}). While it has been shown that these definitions are equivalent for smooth metrics 
I am not aware of any results relating distributional bounds on Ricci curvature to the metric measure space setting.

As in the smooth case, for any manifold with a continuous Riemannian metric $h$ one can define a continuous distance function $d_h$ via
\[d_h(p,q):=\inf \{L(\gamma):\,\gamma\;\text{is an absolutely continuous curve from p to q}\}\]
and $(M,d_h)$ becomes a locally compact length space \cite[Proposition 4.1]{A:15}. Thus, by the length space version of Hopf-Rinow, the Hopf-Rinow-Cohn-Vossen Theorem, cf.~\cite[Theorem 2.5.28]{BBI}, (metric space) completeness implies the existence of minimizing curves and is equivalent to relative compactness of bounded sets. If $h$ is $C^1$ a classical result using the trick of Du Bois-Reymond is that any length minimizing curve parametrized with respect to arc-length must be a solution of the geodesic equation (see, e.g., \cite{SS:18}). Thus metric space completeness is equivalent to geodesic completeness: Any geodesic (i.e., solution of the geodesic equation) of finite length must be contained in a bounded set. Hence, by metric space completeness, it is also contained in a compact set and thus (after reparametrizing to unit speed, if necessary) $\dot{\gamma}$ is contained in a compact subset of $TM$, so the solution of the geodesic equation is extendible. The converse is a direct consequence of \cite[Theorem 2.5.28]{BBI} and shortest paths being (reparametrizations of) geodesics.

\begin{Theorem}\label{thm:myers}
	Let $(M,h)$ be a Riemannian manifold with a $C^1$-metric $h$. Assume that $h$ is complete and that there exists a constant $\lambda>0$ such that $\Ric (\MX,\MX)-(n-1)\lambda h(\MX,\MX)$ is a non-negative distribution for all smooth vector fields $\MX$ on $M$. Then $\mathrm{diam}(M)\leq \frac{\pi}{\sqrt{\lambda}}$.
\end{Theorem}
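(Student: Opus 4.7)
The plan is to follow the classical Myers proof via a second-variation argument, but applied to minimizing geodesics of smooth approximating Riemannian metrics, in parallel with the proof of Theorem~\ref{thm:c1hawk2}. Assume for contradiction that $\mathrm{diam}(M)>\pi/\sqrt{\lambda}$, so there exist $p,q\in M$ with $d:=d_h(p,q)>\pi/\sqrt{\lambda}$. The Hopf--Rinow--Cohn-Vossen theorem, together with the preceding remark that arc-length parametrized length minimizers are $h$-geodesics, provides an $h$-minimizing unit speed $h$-geodesic $\gamma:[0,d]\to M$ from $p$ to $q$. Now take smooth Riemannian approximations $h_\eps$ of $h$ converging in $C^1_\mathrm{loc}$; the construction of Lemma~\ref{lem:approxnice} carries over and in fact simplifies drastically, since positive definiteness is stable under uniform perturbation and no ``narrow lightcone'' correction is required. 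Arguing along the lines of Lemmas~\ref{lem:aa}, \ref{lem:ab} and Proposition~\ref{prop:existmaxgeods} (the Riemannian case being easier since all geodesics, not just causal ones, are controlled by Arzela--Ascoli once they remain in a compact subset of $TM$) one extracts a sequence of $h_{\eps_k}$-minimizing unit speed $h_{\eps_k}$-geodesics $\gamma_{\eps_k}:[0,d_{\eps_k}]\to M$ with $d_{\eps_k}:=d_{h_{\eps_k}}(p,q)$, $\gamma_{\eps_k}(0)=p$, $\gamma_{\eps_k}(d_{\eps_k})=q$, converging in $C^1$ to $\gamma$. In particular $d_{\eps_k}\to d$, so $d_{\eps_k}>\pi/\sqrt{\lambda}$ for $k$ large, and $K:=\bigcup_k\mathrm{im}(\dot\gamma_{\eps_k})\cup\mathrm{im}(\dot\gamma)$ is compact in $TM$.

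The key analytic input is a Riemannian analog of Lemma~\ref{lem:thecurvatureapprox}: for any compact $K\sse TM$ and any $\delta>0$ there exists $\eps_0>0$ such that for all $\eps<\eps_0$ and all $X\in K$ with $h_\eps(X,X)=1$,
\[
\Ric[h_\eps](X,X)>(n-1)\lambda-\delta.
\]
The argument follows Lemma~\ref{lem:thecurvatureapprox} line by line: first reduce from $\Ric[h_\eps]$ to $\Ric\star_M\rho_\eps$ via the analog of Lemma~\ref{lem:convolinsteadofcheck}; then extend $X$ to a constant vector field $\MX$ on a small chart (unrestricted, since there is no causal character to respect), apply the distributional bound $\Ric(\MX,\MX)-(n-1)\lambda\,h(\MX,\MX)\geq 0$ together with positivity-preservation of convolution, and finally use the Friedrichs-type estimate of Lemma~\ref{lem:Fthesecond} to relate $(h\star_M\rho_\eps)(X,X)$ to $h(\MX,\MX)$ up to an $O(\eps)$ error and similarly for the Ricci term. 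The Friedrichs machinery in Lemmas~\ref{lem:Ftheveryfirst}--\ref{lem:Fthesecond} is purely about convolutions on $\R^n$ and is signature-agnostic, so the entire scheme transfers unchanged.

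With this in hand, apply the classical second variation formula to each $\gamma_{\eps_k}$, which is a \emph{smooth} $h_{\eps_k}$-geodesic since $h_{\eps_k}$ is smooth. Let $\{E_i^{(k)}\}_{i=1}^{n-1}$ be an $h_{\eps_k}$-parallel orthonormal frame along $\gamma_{\eps_k}$ perpendicular to $\dot\gamma_{\eps_k}$, and test the index form with $V_i(t):=\sin(\pi t/d_{\eps_k})\,E_i^{(k)}(t)$; summing over $i$ and using $\sum_i h_{\eps_k}(R(\dot\gamma_{\eps_k},E_i^{(k)})E_i^{(k)},\dot\gamma_{\eps_k})=\Ric[h_{\eps_k}](\dot\gamma_{\eps_k},\dot\gamma_{\eps_k})$ yields
\[
\sum_{i=1}^{n-1}I(V_i,V_i)=\int_0^{d_{\eps_k}}\left[(n-1)\left(\frac{\pi}{d_{\eps_k}}\right)^{2}\cos^2\!\left(\frac{\pi t}{d_{\eps_k}}\right)-\sin^2\!\left(\frac{\pi t}{d_{\eps_k}}\right)\Ric[h_{\eps_k}](\dot\gamma_{\eps_k},\dot\gamma_{\eps_k})\right]dt.
\]
Choosing $\delta>0$ small enough that $(\pi/d)^2<\lambda-\delta/(n-1)$ and applying the curvature approximation on $K$, for $k$ large each integrand is dominated and the right-hand side is strictly negative, forcing some $I(V_i,V_i)<0$. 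This contradicts the $h_{\eps_k}$-minimality of $\gamma_{\eps_k}$. The main obstacle is establishing the Riemannian curvature approximation above, but as indicated this is a direct adaptation of Lemma~\ref{lem:thecurvatureapprox}; a secondary technical point is ensuring that $h_{\eps_k}$-minimizers between $p$ and $q$ actually exist, which is handled either by arranging $h_\eps$ to be complete (e.g.~by leaving $h_\eps$ equal to $h$ outside a large compact set where the approximation is performed) or by a direct Arzela--Ascoli argument within an $h$-compact neighborhood of $\gamma$.
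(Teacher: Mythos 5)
Your proposal follows essentially the same strategy as the paper's proof: smooth Riemannian approximations $h_\eps\to h$ in $C^1_\mathrm{loc}$ built so that $h_\eps$ agrees with $h\star_M\rho_\eps$ on compacta, a Riemannian curvature-approximation lemma (the analogue of Lemma~\ref{lem:thecurvatureapprox} obtained from positivity of convolution and the Friedrichs-type estimates), $h_\eps$-minimizing geodesics between $p,q$, and a Myers-type second-variation bound for the approximating geodesics leading to a contradiction with $d_h(p,q)>\pi/\sqrt\lambda$. The core idea is identical.

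There are two differences in execution worth flagging. First, the paper avoids extracting a $C^1$-convergent subsequence $\gamma_{\eps_k}\to\gamma$ (which requires its own argument): instead it constructs the $h_\eps$ so that $\tfrac12 h\le h_\eps\le 2h$ globally, which both makes all $h_\eps$ complete (so Hopf--Rinow gives minimizers) and yields the elementary bound $L_h(\gamma_\eps)\le 2L_\eps(\gamma_\eps)=2d_\eps(p,q)\le 4d_h(p,q)$, so that all the $\gamma_\eps$ stay in a fixed metric ball $K$; the curvature lemma and the classical Myers bound (cited as a black box) are then applied on $K$ directly. Your route of proving $\gamma_{\eps_k}\to\gamma$ in $C^1$ and forming $K$ from the velocities also works but is slightly more work.

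Second, and this is the one genuine soft spot: your first suggested fix for completeness of $h_\eps$ — ``leaving $h_\eps$ equal to $h$ outside a large compact set'' — would produce a metric that is merely $C^1$ (not smooth) on the matching region, defeating the purpose. The paper handles this by the global comparability just mentioned, achieved via the $u(\eps,p)$-rescaling trick of \cite[Lemma~4.3]{HKS:12} applied to $h\star_M\rho_{u(\eps,p)}$. Your second fallback (an Arzel\`a--Ascoli argument within a compact neighborhood of $\gamma$) can be made to work, but without the uniform $h$-comparability you would still have to argue that nearly-$h_\eps$-minimizing curves between $p$ and $q$ cannot leave a fixed compact set, which is exactly what the paper's two-sided bound gives for free. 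Everything else in your argument, including the explicit $\sin$-cutoff second-variation computation in place of citing the classical Myers estimate, is sound.
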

\begin{proof}
	Assume $\mathrm{diam}(M)> \frac{\pi}{\sqrt{\lambda}}$ and choose $\lambda_0<\lambda_1<\lambda$ such that $\mathrm{diam}(M)> \frac{\pi}{\sqrt{\lambda_0}}$. Note that $\Ric (\MX,\MX)-(n-1)\lambda h(\MX,\MX)$ being a non-negative distribution for all smooth vector fields $\MX$ implies that $\Ric(\MX,\MX)>(n-1)\lambda_1$ for all smooth vector fields $\MX$ with $h(\MX,\MX)>\frac{\lambda_1}{\lambda}$. 
	
	Note that the following version of Lemma~\ref{lem:thecurvatureapprox} holds: 
	\begin{Lemma}\label{lem:curvapproxRiem}
	Let $K\sse M$ be compact, $0<c<1$, $\lambda_1>0$ and assume that $\Ric(\MX,\MX)>(n-1)\lambda_1$ for all smooth vector fields $\MX$ with $h(\MX,\MX)> c$. Let $h_\eps$ be smooth approximating Riemannian metrics converging to $h$ in $C^1_\mathrm{loc}$ such that $h_\eps=h\star_M \rho_{\eps}$ on $K$ for $\eps$ small. Then for all $\delta >0$ there exists $\eps_0>0$ such that
	\begin{equation}\label{mineriem}\forall \eps<\eps_0\
	\forall X\in TM|_K \text{ with }\ h_\eps (X,X)=1 :
	\ \Ric[h_\eps](X,X) > (n-1)(\lambda_1-\delta).
	\end{equation}
	\end{Lemma}
\begin{proof}
	Because $h_\eps\to h$ locally uniformly it suffices to show $\Ric[h_\eps](X,X) > (n-1)(\lambda_1-\delta)$ for all $X\in TM|_K$ with $h(X,X)>c$. As in the proof of Lemma~\ref{lem:thecurvatureapprox} and Lemma~\ref{lem:thecurvatureapproxPenrose} we have $\Ric[h_\eps]-\Ric \star_M\rho_{\eps} \to 0$ locally uniformly. So, again, it suffices to consider the case $M=\R^n$ and the proof reduces to showing that $(\Ric *\rho_{\eps})(X,X) >(n-1)\lambda_1$. Extending $X\in T_pM=\R^n$ to the constant vector field $\MX : x\mapsto X$ on a small neighborhood of $p$ preserves $h(\MX,\MX)>c$ and the left hand side becomes $(\Ric(\MX,\MX) *\rho_{\eps})(p)$ and the desired inequality follows immediately from properties of smoothing by convolution and the assumption of $\Ric(\MX,\MX) >(n-1)\lambda_1$.
\end{proof}
Next we need to show that there exist smooth approximations $h_\eps$ such that, first, for any compact $K\sse M$ one has $h_\eps=h\star_M\rho_\eps$ on $K$ for all $\eps$ small enough, and second, $h_\eps$ is complete. We start by showing the existence of approximations satisfying the first point and  $\sup \{|1-h_\eps(X,X)| \,:\, X\in TM,\, |X|_h=1\}\leq \frac{1}{2}$. First observe that by locally uniform convergence for any compact $K\sse M$ there exists $\eps_K$ such that $\sup \{|1-h_\eps(X,X)| \,:\, X\in TM|_K,\, |X|_h=1\}\leq \frac{1}{2}$ for all $\eps\leq \eps_K$. Thus (cf.~e.g.~\cite[ Lemma~4.3]{HKS:12}) we can construct a smooth map $u:(\eps,p)\mapsto \R_{>0}$ such
that for each $\eps$, the metrics $h_\eps : M\to T^0_2M$ defined by $h_\eps (p):=(h \star_M\rho_{u(\eps,p)})(p)$  have the desired properties.

Now since $\sup \{|1-h_\eps(X,X)| \,:\, X\in TM,\, |X|_h=1\}\leq \frac{1}{2} $ all metrics $h_\eps$ will be Riemannian and by Hopf-Rinow it suffices to show relative compactness of $B_\eps(p,r)$ for all $r>0$ and $p\in M$ to show completeness of $h_\eps$. This follows immediately from $B_\eps(p,r)\sse B(p,2r)$ (note that $L_h(\gamma)\leq 2 L_\eps(\gamma)$ for any curve $\gamma$, since $|X|_{h}=1\implies \frac{1}{2} \leq |X|_{h_\eps}\leq 2$).

 Thus we have constructed approximating smooth metrics $h_\eps$ that are complete and for which Lemma~\ref{lem:curvapproxRiem} holds. Now let $p,q\in M$ such that $d_h(p,q)\geq \frac{\pi}{\sqrt{\lambda_0}}$ and let $\gamma$ be a length minimizing $h$-unit speed curve from $p$ to $q$. Then by completeness of $h_\eps$ there exist $h_\eps$-minimizing $h_\eps$-unit speed $h_\eps$-geodesics $\gamma_{{\eps}}$ from $p$ to $q$. 
 We have $d_\eps(p,q)=L_\eps(\gamma_{\eps})\leq L_\eps(\gamma)\leq 2L_h(\gamma)=2 d_h(p,q)$. Since $L_h(\gamma_{{\eps}})\leq 2L_\eps(\gamma_\eps)=2 d_\eps(p,q)$ we obtain $\gamma_{{\eps}}\sse B(p,4d_h(p,q))\subseteq K\sse M$. 
 Now let $\eps$ be small enough such that for this $K$ the approximations satisfy $\Ric[h_\eps] > (n-1)(\lambda_1-\delta)g>(n-1)\lambda_0g$. Then by the proof of Myers's Theorem (cf. \cite[Theorem~9.3.1]{DoCarmo}) we have $L_\eps(\gamma_{{\eps}})\leq \frac{\pi}{\sqrt{\lambda_1-\delta}}<\frac{\pi}{\sqrt{\lambda_0}}$. But this would imply that for small enough $\eps$ also $L_h(\gamma_{{\eps}})<\frac{\pi}{\sqrt{\lambda_0}}$, a contradiction to $d_h(p,q)\geq \frac{\pi}{\sqrt{\lambda_0}}$.
\end{proof}

\medskip
\noindent{\em Acknowledgements.} This work was partially supported by project P28770 of the Austrian Science Fund FWF. Major parts of this work were carried out while the author was at the University of Tübingen. I would like to thank Michael Kunzinger, Clemens S\"amann and Roland Steinbauer for helpful discussions. %Special thanks to Clemens Sämann for pointing out a mistake in the proof of ... in the first version of this manuscript.

\end{document}